\newtheorem{theorem}{Theorem}
\newtheorem{lemma}{Lemma}
\newtheorem{proposition}{Proposition}
\theoremstyle{definition}
\newtheorem{remark}{Remark}
\newtheorem{definition}{Definition}
\newcommand{\arrowIn}{
\tikz \draw[-stealth] (-1pt,0) -- (1pt,0);
}
\newcommand{\arrowOut}{
\tikz \draw[-stealth] (1pt,0) -- (-1pt,0);
}
\tikzset{global scale/.style={
    scale=#1,
    every node/.append style={scale=#1}
  }
}
\newcommand{\maQ}{\mathcal{Q}}
\newcommand{\maM}{\mathcal{M}}
\newcommand{\maE}{\mathcal{E}}
\newcommand{\maP}{\mathcal{P}}
\newcommand{\maH}{\mathcal{H}}
\newcommand{\maS}{\mathcal{S}}
\newcommand{\maN}{\mathcal{N}}
\newcommand{\maT}{\mathcal{T}}
\newcommand{\maG}{\mathcal{G}}
\newcommand{\ma}{\mathcal}
\newcommand{\Expect}{\mathbb{E}}
\newcommand{\prob}[1]{ \mathbb{P}\left[ #1 \right] }
\newcommand{\expect}[1]{ \mathbb{E}\left[ #1 \right] }
\newcommand{\hd}{}
\newcommand{\reals}{\mathbb{R}}
\newcommand{\en}{\mathsf{{e}}}
\newcommand{\vn}{\mathsf{{v}}}
\newcommand{\pth}[1]{\left( #1 \right)}
\newcommand{\qth}[1]{\left[ #1 \right]}
\newcommand{\sth}[1]{\left\{ #1 \right\}}
\newcommand{\calC}{{\mathcal{C}}}
\newcommand{\calE}{{\mathcal{E}}}
\newcommand{\calH}{{\mathcal{H}}}
\newcommand{\calM}{{\mathcal{M}}}
\newcommand{\calP}{{\mathcal{P}}}
\newcommand{\calS}{{\mathcal{S}}}
\newcommand{\calT}{{\mathcal{T}}}
\newcommand{\ti}{\tilde}
\newcommand{\Bin}{\mathrm{Bin}}
\newcommand{\overlap}{\mathrm{overlap}}
\newcommand{\multibern}{\mathrm{Bern}}
\newcommand{\gaussianrho}{\mathcal{N}\pth{\begin{pmatrix} 0 \\ 0 \end{pmatrix}, \begin{pmatrix} 1 & \rho \\ \rho & 1 \end{pmatrix}}}
\newcommand{\sfC}{{\mathsf{C}}}
\newcommand{\sfE}{{\mathsf{E}}}
\newcommand{\sfP}{{\mathsf{P}}}
\newcommand{\tr}{\mathrm{Tr}}
\newcommand{\Norm}[1]{\|{#1} \|}
\newcommand{\indc}[1]{{\mathbf{1}_{\left\{{#1}\right\}}}}
\newcommand{\ER}{Erd\H{o}s-R\'enyi }
\title{
Information-Theoretic Thresholds for the Alignments of Partially Correlated Graphs
}
\author{Dong Huang, Xianwen Song, and Pengkun Yang
\thanks{
D.\ Huang and P.\ Yang are with the Department of Statistics and Data Science, Tsinghua
University.
X.\ Song is with the Department of Statistics, University of Chicago.
P. Yang is supported in part by National Key R\&D Program of China 2024YFA1015800, National Natural Science Foundation of China (NSFC) Grant 12101353, and Tsinghua University Dushi Program 2025Z11DSZ001. This paper was presented in part at the 37th Annual Conference on Learning Theory (COLT 2024) \cite{huang2024information}.
}}
\begin{document}
\date{}
\maketitle
\begin{abstract}
  This paper studies the problem of recovering hidden vertex correspondences between two correlated random graphs. We introduce the partially correlated \ER model and the partially correlated Gaussian Wigner model, where a pair of induced subgraphs is correlated. We investigate the information-theoretic thresholds for recovering these latent correlated subgraphs and their hidden vertex correspondences. For the partially correlated \ER model, we establish the optimal rate for partial recovery: above this threshold, a positive fraction of vertices can be correctly matched, while below it, matching any positive fraction is impossible. We also determine the optimal rate for exact recovery. In the partially correlated Gaussian Wigner model, the optimal rates for partial and exact recovery coincide.
  To prove the achievability results, we introduce correlated functional digraphs to partition the edges and bound error probabilities using lower-order cumulant generating functions. Our impossibility results rely on a generalized Fano's inequality and the recovery thresholds for correlated \ER graphs. 
\end{abstract}

\begin{keywords}%
  {Graph alignments, information-theoretic thresholds, \ER model, Gaussian Wigner model, partial recovery, exact recovery}
\end{keywords}
\section{Introduction}

Recently, there has been a surge of interest in the problems of detecting graph correlations and recovering the alignments of two correlated graphs. These questions have emerged across various domains. For instance, in social networks,  determining the similarity between friendship networks across different platforms  has garnered attention \cite{narayanan2008robust, narayanan2009anonymizing}. 
In the realm of computer vision, the identification of whether two graphs represent the same object holds significant importance in pattern recognition and image processing \cite{berg2005shape, cour_balanced_2006}. In computational biology, the representation of biological networks as graphs aids in understanding and quantifying their correlation \cite{singh2008global, vogelstein2015fast}. Furthermore, in natural language processing, the task of determining whether a given sentence can be inferred from the text directly relates to graph matching problems \cite{haghighi2005robust}. 

Numerous graph models exist, with the \ER random graph model being a prominent example, as proposed by \cite{paul1959random} and \cite{gilbert1959random}: 
\begin{definition}[\ER graph]
    The \ER random graph is the graph on $n$ vertices where each edge connects with probability $0<p<1$ independently.
    Let $\ma{G}(n,p)$ denote the  distribution of \ER random graphs with $n$ vertices and edge connecting probability $p$. 
\end{definition}

While there are inherent disparities between the \ER graph and networks derived from real-world scenarios, comprehensively understanding the \ER graphs remains profoundly significant. This understanding serves as a pivotal step in transitioning from solving detection and matching problems on \ER graphs to addressing challenges inherent in practical applications. The graph alignment problem entails identifying latent vertex correspondences between two graphs based on their structures.  
Following \cite{pedarsani2011privacy}, for two random graphs $G_1,G_2$, 
a common graph model is the correlated \ER model. For a weighted graph $G$ with vertex set $V(G)$ and edge set $E(G)$, the weight associated with each edge $uv$ is denoted as $\beta_{uv}(G)$ for any $u, v\in V(G)$. For an unweighted graph $G$, we define $\beta_{uv}(G)=\indc{uv\in E(G)}$.

\begin{definition}[Correlated \ER graphs]
    \label{def:corr-ER-full}
    Let $\pi$ denote a latent bijective mapping from $V(G_1)$ to $V(G_2)$.     
    We say a pair of graphs $(G_1,G_2)$ is correlated \ER graphs if both marginal distributions are $\maG(n,p)$ and each pair of edges $(\beta_{uv}(G_1),\beta_{\pi(u)\pi(v)}(G_2))$ for $u,v\in V(G_1)$ follows the correlated bivariate Bernoulli distribution with correlation coefficient $\rho$.
\end{definition}
We note that the edges in the \ER model are binary 0-1 random variables, where $\beta_{uv}(G_1),\beta_{\pi(u)\pi(v)}(G_2)\in \sth{0,1}$ for any $u,v\in V(G_1)$. The Bernoulli-based \ER model, while useful for modeling binary relationships, can be limited when we aim to represent weighted edges or capture more complex dependencies between nodes. 
Another important model is the correlated Gaussian Wigner model proposed in \cite{ding2021efficient} as a prototypical model for random graphs, where the edges follow Gaussian distributions. 
\begin{definition}[Correlated Gaussian Wigner model]\label{def:gaussian-full}
    Let $\pi$ denote a latent bijective mapping from $V(G_1)$ to $V(G_2)$. We say a pair of graphs $(G_1,G_2)$ follows correlated Gaussian Wigner model if each pair of weighted edges $(\beta_{uv}(G_1),\beta_{\pi(u)\pi(v)}(G_2))$ follows bivariate normal distribution $\mathcal{N}\left(\begin{pmatrix} 0 \\ 0 \end{pmatrix}, \begin{pmatrix} 1 & \rho \\ \rho & 1 \end{pmatrix} \right)$ for any vertices $u,v\in V(G_1)$.
\end{definition}

Given observations on $G_1$ and $G_2$ under the correlated \ER graphs model or correlated Gaussian Wigner model, the goal is to recover the latent vertex mapping $\pi$. 
To quantify the performance of an estimator $\hat \pi$, we consider the following two recovery criteria: 
\begin{itemize}
    \item \emph{Partial recovery}: given a constant $\delta\in (0,1)$, we say $\hat\pi$ succeeds for partial recovery if
    \begin{equation}
        \label{eq:partial-criterion-previous}
        |\{v\in \hd{V(G_1)}:\pi(v) = \hat{\pi}(v)\}|\ge \delta |\hd{V(G_1)}|.
    \end{equation}
    \item \emph{Exact recovery}: we say $\hat\pi$ succeeds for exact recovery if 
    \begin{equation}
        \label{eq:exact-criterion-previous}    
        \pi(v) = \hat{\pi}(v),\quad \forall~v\in \hd{V(G_1)}.
    \end{equation}
\end{itemize}

The information-theoretic thresholds for partial and exact recoveries of $\pi$ under correlated \ER model and correlated Gaussian Wigner model have been extensively studied in the recent literature. 
\begin{itemize}
\item 
\emph{\ER model, partial recovery}. In the sparse regime where $np$ and $\rho$ are constant, partial recovery is impossible \hd{when $n(p^2+\rho p(1-p) )\le 1$~\cite{ganassali2021impossibility, wu2022settling}.} 
It is shown in \cite{hall2023partial} that $np(p\vee \rho)\gtrsim \log\pth{1+\frac{\rho}{p}} \vee 1$ suffices for partial recovery, while $n\gtrsim d_{\mathrm{KL}}(p+\rho-p\rho\| p) \log n$ is necessary, where $d_{\mathrm{KL}}(p \Vert q)$ denotes the Kullback–Leibler (KL) divergence between Bernoulli distributions with mean $p$ and $q$, respectively.
The recent work \cite{wu2022settling} settled the sharp threshold for dense graphs with $\frac{p}{p\vee \rho} = n^{-o(1)}$ and the thresholds within a constant factor for sparse ones with $\frac{p}{p\vee \rho} = n^{-\Omega(1)}$. For the sparse case, a sharp threshold has been proven when $\frac{p}{p\vee \rho} = n^{-\alpha+o(1)}$ for $\alpha\in (0,1]$  in \cite{ding2023matching}.
\item 
\emph{\ER model, exact recovery}.
Based on the properties of the intersection graph under a permutation $\pi$, it is shown in \cite{cullina2016improved,cullina2017exact} that the Maximal Likelihood Estimator (MLE) achieves exact recovery and establishes an information-theoretical lower bound with a gap of $\omega(1)$. The results are sharpened by \cite{wu2022settling} where the sharp threshold for exact recovery are derived.
\item \emph{Gaussian Wigner model.}  It is shown in \cite{ganassali2022sharp} that if $n\rho^2\ge (4+\epsilon)\log n$ for any constant $\epsilon>0$, then the MLE  achieves exact recovery; if instead $n\rho^2\le (4-\epsilon) \log n$, then exact recovery is impossible. The results are strengthened by \cite{wu2022settling} by showing that even partial recovery is impossible under the same condition. 
\end{itemize} 

While numerous studies have extensively investigated recovery procedures in correlated \ER and correlated Gaussian Wigner models, it is however imperative to recognize that, in real-world applications, many nodes in one graph may not have corresponding counterparts in the other graph, leading to incomplete or misaligned structural information. To offer a resolution to this concern, we propose the following models where only a subset of the nodes between the two graphs are correlated.
\begin{definition}[Partially correlated \ER graphs]\label{def:planted correlated ER graph}
    Let $S^*\subseteq V(G_1)$ be a latent subset of vertices and $\pi^*:S^*\mapsto V(G_2)$ be a latent injective mapping.
    We say a pair of graphs $(G_1,G_2)$ is \emph{partially} correlated \ER graphs if both marginal distributions are $\maG(n,p)$ and each pair of weighted edges $(\beta_{uv}(G_1),\beta_{\pi^*(u)\pi^*(v)}(G_2))$ for $u,v\in S^*$ follows the correlated bivariate Bernoulli distribution with correlation coefficient $\rho$.    
\end{definition}

\begin{definition}[Partially correlated Gaussian Wigner model]\label{def:gaussian-partial}
    Let $S^*\subseteq V(G_1)$ be a latent subset of vertices and $\pi^*:S^*\mapsto V(G_2)$ be a latent injective mapping. We say a pair of graphs $(G_1,G_2)$ follows \emph{partially} correlated Gaussian Wigner model if the marginal distribution of each edge in both graphs is standard normal, and for $u,v\in S^*$, the pair $(\beta_{uv}(G_1),\beta_{\pi^*(u)\pi^*(v)}(G_2))$ follows bivariate normal distribution with correlation coefficient $\rho$.
\end{definition}

Let $G[S]$ denote the induced subgraph of $G$ with vertex set $S\subseteq V(G_1)$.
For the partially models in Definitions~\ref{def:planted correlated ER graph} and \ref{def:gaussian-partial}, given $S^*\subseteq V(G_1)$ and the range of $\pi^*$ denoted by $T^* = \pi^*(S^*)\subseteq V(G_2)$, the induced subgraphs $G_1[S^*]$ and $G_2[T^*]$  
follow correlated \ER model and correlated Gaussian Wigner model  on $m$ vertices, respectively. Specifically, the case $S^* = V(G_1)$ reduces to correlated \ER model and correlated Gaussian Wigner model in Definitions~\ref{def:corr-ER-full} and \ref{def:gaussian-full}.


In this paper, we investigate the information-theoretic thresholds for recovering the set of correlated nodes $S^*$ and the mapping $\pi^*$.
For notational simplicity, we also refer to the problem as recovering $\pi^*$ while keeping $S^*$ implicit as the domain of $\pi^*$.
The success criteria in the fully correlated graph models are given by \eqref{eq:partial-criterion-previous} and \eqref{eq:exact-criterion-previous}.
In the partially correlated graph models, owing to the potential inconsistency between the domain of $\pi^*$ and that of the estimator $\hat \pi: \hat{S} \mapsto V(G_2)$, we define their overlap by
\begin{align}\label{def:overlap}
    \overlap(\pi^*,\hat{\pi})\triangleq \frac{|v\in S^*\cap \hat{S}:\pi^*(v) = \hat{\pi}(v)|}{|S^*|}.
\end{align}
With the notion of overlap, the success criteria are equivalent to
\begin{itemize}
    \item \emph{Partial recovery}: $\hat \pi$ succeeds if $\overlap(\pi^*,\hat{\pi})\ge \delta$ for a given constant $\delta\in(0,1)$;
    \item \emph{Exact recovery}: $\hat \pi$ succeeds if $\overlap(\pi^*,\hat{\pi})=1$.
\end{itemize}

\hd{
By analogy with classification problems, we refer to $(u,v)$ as a true pair if $u\in S^*$ and $v=\pi^*(u)$. 
Under this notion,
the numbers of true positives, false positives, false negatives, and true negatives are $m\cdot \overlap(\pi^*,\hat{\pi})$, $m\cdot (1-\overlap(\pi^*,\hat{\pi}))$, $m(1-\overlap(\pi^*,\hat{\pi}))$, and $n^2-m(2-\overlap(\pi^*,\hat{\pi}))$, respectively. Since both $n$ and $m$ are fixed, the analysis on the overlap is sufficient for characterizing all quantities.}
\hd{In this work, we assume that the cardinality of $S^*$ is known. 
When $|S^*|$ is unknown, one potential solution is to employ a penalized estimator to select the model size adaptively. 
We leave this extension for future research.} \hd{See more discussions in Remark~\ref{rmk:unknown-case}.}


\subsection{Main Results}\label{subsec: main results}
In this subsection, we present the main results of the paper. 
We first introduce some notations for the presentation of the main theorems. 
Throughout the paper, we assume that $0<\rho< 1$, $0<p\le \frac{1}{2}$, and the cardinality $|S^*|=m$ is known.  
We denote the bivariate distribution of a pair of Bernoulli random variables with means $p_1,p_2$, and correlation coefficient $\rho$ as $\multibern(p_1,p_2,\rho)$. Specifically, for $\multibern(p,p,\rho)$, we denote the following probability mass function:
\begin{align}\label{eq:def_of_pab}
    p_{11} \triangleq p^2+\rho p(1-p),
\quad 
p_{10} = p_{01} \triangleq (1-\rho)p(1-p),
\quad 
p_{00} \triangleq (1-p)^2+\rho p(1-p).
\end{align}
\begin{itemize}
    \item In the \ER model, a pair of correlated edges 
    $$\pth{\beta_{uv}(G_1),\beta_{\pi^*(u)\pi^*(v)}(G_2)}\sim \multibern(p,p,\rho).$$
Specifically, two correlated edges are both present with probability $p_{11}$, whereas two independent edges are both present with probability $p^2$.
The relative signal strength is quantified by $\gamma \triangleq \frac{p_{11}}{p^2}-1 = \frac{\rho(1-p)}{p}$.
This reparametrization of the correlation coefficient is crucial in determining the fundamental limits of the graph alignment problem.
\item  In the Gaussian Wigner model, a pair of correlated edges consists of two standard Gaussian random variables with correlation coefficient $\rho\in (0,1) $:
$$(\beta_{uv}(G_1),\beta_{\pi^*(u)\pi^*(v)}(G_2))\sim \gaussianrho.$$ 
Here, the relative  signal strength is directly characterized by the correlation coefficient $\rho$.
\end{itemize} 

In the \ER model, we assume $p\ge \frac{1}{n}$, as partial recovery is otherwise impossible \cite{ganassali2021impossibility,wu2022settling}.
Define 
\begin{equation}
    \phi(\gamma) \triangleq (1+\gamma)\log(1+\gamma)-\gamma,
    \quad
    \gamma \triangleq \frac{\rho(1-p)}{p},
\end{equation}
and let $\maS_{n,m}$ denote the set of injective mappings $\pi: S\subseteq V(G_1)\mapsto V(G_2)$ with $|S|=m$. Our goal is to determine the minimum number of correlated nodes $m$ required for successful recovery of $\pi^*$. Next, we introduce our main theorems.



\begin{theorem}[\ER model, partial recovery]\label{thm:partial recovery}
    There exists an estimator $\hat{\pi}$ such that, for any constant $\delta\in(0,1)$ and $\pi^*\in \maS_{n,m}$,  
    \[
\prob{\overlap(\pi^*,\hat{\pi})\ge \delta} = 1-o(1),
    \]
    when $m\ge \frac{c_1(\delta)\log n}{p^2 \phi(\gamma)}$, where $c_1(\delta)$ is a constant depending on $\delta$.
    
    Conversely, for any constant $c,\delta \in(0,1)$, there exists a constant $c_2(c,\delta)$ such that, when $m\le \frac{c_2(c,\delta)\log n}{p^2 \phi(\gamma)}$, for any estimator $\hat \pi$,
    \[
    \prob{\overlap(\pi^*,\hat{\pi})<\delta} \ge  1-c,
    \]
    where $\pi^*$ is uniformly distributed over $\calS_{n,m}$.
\end{theorem}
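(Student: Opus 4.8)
The plan is to study the maximum-likelihood estimator $\hat\pi=\arg\max_{\pi\in\maS_{n,m}}\ell(\pi)$, where $\ell(\pi)$ is the log-likelihood that the correlated induced subgraphs sit on $\supp(\pi)$ and $\pi(\supp(\pi))$, matched by $\pi$. Writing $x_{uv}=\beta_{uv}(G_1)$ and $y_{ab}=\beta_{ab}(G_2)$, one has $\ell(\pi)=\mathrm{const}+\sum_{\{u,v\}\subseteq\supp(\pi)}w(x_{uv},y_{\pi(u)\pi(v)})$, where $w(x,y)=\log\frac{P(y\mid x)}{P(y)}$ is determined by the $p_{ab}$ in \eqref{eq:def_of_pab} and has dominant entry $w(1,1)=\log(1+\gamma)$. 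Since $\ell(\hat\pi)\ge\ell(\pi^*)$ always, it suffices to prove that with probability $1-o(1)$ every ``bad'' $\pi$ (one with $\overlap(\pi^*,\pi)<\delta$) satisfies $\ell(\pi)<\ell(\pi^*)$. I would group the bad $\pi$ by the number $k=|\{v\in\supp(\pi)\cap S^*:\pi(v)=\pi^*(v)\}|<\delta m$ of correctly matched vertices; there are at most $2^m n^{2(m-k)}$ such $\pi$.

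\textbf{The per-$\pi$ deviation bound (the crux).} Fix a bad $\pi$ with parameter $k$. Pairs $\{u,v\}$ with $\{\pi(u),\pi(v)\}=\{\pi^*(u),\pi^*(v)\}$ contribute the same term to $\ell(\pi)$ and to $\ell(\pi^*)$, so they cancel, leaving at least $\binom m2-\binom k2-m\ge c(\delta)\,m^2$ discrepant pairs. On a discrepant pair, $(x_{uv},y_{\pi(u)\pi(v)})$ is a pair of \emph{independent} $\Bernoulli(p)$ variables, whence $\Expect[w(x_{uv},y_{\pi(u)\pi(v)})]<0$, while the matching $\pi^*$-term uses a correlated $\multibern(p,p,\rho)$ pair with positive mean weight; the per-pair gap is of order $p^2\phi(\gamma)$, so $\Expect[\ell(\pi)-\ell(\pi^*)]\lesssim_\delta -m^2 p^2\phi(\gamma)$. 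The obstacle is that these summands are dependent, because one edge of $G_1$ or $G_2$ may occur in several terms of $\ell(\pi)-\ell(\pi^*)$. I would resolve this via the correlated functional digraph: $\pi$ together with $\pi^*$ forms a degree-$\le 2$ structure on $V(G_1)\cup V(G_2)$ whose components are paths and cycles, and tracing them partitions the relevant edges of $G_1$ and $G_2$ into blocks that are independent across blocks, so $\Expect[e^{\lambda(\ell(\pi)-\ell(\pi^*))}]$ factorizes over blocks. Bounding each block's cumulant generating function by a tractable lower-order surrogate and optimizing in $\lambda$ gives
\[
\prob{\ell(\pi)\ge\ell(\pi^*)}\le\exp\!\big(-c'(\delta)\,m^2 p^2\phi(\gamma)\big).
\]
Then $\prob{\exists\text{ bad }\pi:\ell(\pi)\ge\ell(\pi^*)}\le\sum_{k<\delta m}2^m n^{2(m-k)}\exp(-c'(\delta)\,m^2 p^2\phi(\gamma))=o(1)$ as soon as $m\ge\frac{c_1(\delta)\log n}{p^2\phi(\gamma)}$, because then $m^2 p^2\phi(\gamma)$ dominates $(m-k)\log n$ termwise for every $k<\delta m$.

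\textbf{Converse.} Here I would use a list-decoding (generalized Fano) inequality with $\pi^*\sim\Uniform(\maS_{n,m})$. For any estimator $\hat\pi$ we have $\{\overlap(\pi^*,\hat\pi)\ge\delta\}\subseteq\{\pi^*\in B_\delta(\hat\pi)\}$, where $B_\delta(\sigma)=\{\pi\in\maS_{n,m}:\overlap(\pi,\sigma)\ge\delta\}$, and a counting estimate yields $\max_\sigma|B_\delta(\sigma)|\le N_\delta$ with $\log(|\maS_{n,m}|/N_\delta)\ge c(\delta)\,m\log n$. List-Fano then gives
\[
\prob{\overlap(\pi^*,\hat\pi)\ge\delta}\le\frac{I(\pi^*;G_1,G_2)+\log 2}{\log(|\maS_{n,m}|/N_\delta)}\le\frac{I(\pi^*;G_1,G_2)+\log 2}{c(\delta)\,m\log n}.
\]
To control the numerator, for any reference law $Q$ one has $I(\pi^*;G_1,G_2)\le\Expect_{\pi^*}D(P_{G_1,G_2\mid\pi^*}\,\|\,Q)$; taking $Q$ to be the law of two independent \ER graphs, this divergence tensorizes over edge pairs and equals $\binom m2\,I(X;Y)$ for $(X,Y)\sim\multibern(p,p,\rho)$. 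A direct computation gives $I(X;Y)\le C\,p^2\phi(\gamma)$ for a universal constant $C$—this is exactly where the reparametrization $\gamma=\rho(1-p)/p$ and the function $\phi$ enter—hence $I(\pi^*;G_1,G_2)\le\frac C2\,m^2 p^2\phi(\gamma)$. Substituting $m\le\frac{c_2(c,\delta)\log n}{p^2\phi(\gamma)}$ bounds the displayed quantity by $\frac{C c_2}{2\,c(\delta)}+o(1)$, which is at most $c$ once $c_2(c,\delta)$ is taken small; the known recovery thresholds for matching correlated \ER graphs can additionally be invoked to handle boundary regimes of $(p,\rho)$. The sole serious obstacle is the dependence in the achievability step—extracting the exponent of the right order $m^2 p^2\phi(\gamma)$, which is what makes the union bound over the $n^{\Theta(m)}$ candidate mappings close precisely at $m\asymp\frac{\log n}{p^2\phi(\gamma)}$; the converse is then a fairly standard Fano-plus-tensorization argument.
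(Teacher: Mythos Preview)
Your converse is essentially the paper's argument (generalized Fano plus the tensorized KL bound $D(P\Vert Q)\lesssim p^2\phi(\gamma)$), so that part is fine. The achievability, however, diverges from the paper in two substantive ways, and the second of these is a real gap.

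First, the paper does not use the MLE weight $w(x,y)=\log\frac{P(y\mid x)}{P(y)}$; it works with the simpler similarity $f(x,y)=xy$. This matters because the cumulant bounds of Lemma~\ref{lem: upper bound for kappa} are proved specifically for $f(x,y)=xy$ (the block MGFs have closed forms via a $2\times 2$ transfer matrix), and redoing them for the four-valued weight $w$ is not automatic.

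Second, and more importantly, the paper never bounds $\prob{\ell(\pi)\ge\ell(\pi^*)}$ as a single deviation. Your plan to compute the MGF of $\ell(\pi)-\ell(\pi^*)$ and claim that it ``factorizes over blocks'' of the correlated functional digraph is underspecified: the difference involves both the $\pi$-terms $w(x_e,y_{\pi(e)})$ over $e\in\binom{S_\pi}{2}\setminus\binom{F_\pi}{2}$ and the $\pi^*$-terms $w(x_e,y_{\pi^*(e)})$ over $e\in\binom{S^*}{2}\setminus\binom{F_\pi}{2}$, and these two families share the same $G_1$-variables $x_e$ on the overlap while coupling to different $G_2$-variables. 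The functional digraph as defined in the paper tracks only the $\pi$-side; incorporating the $\pi^*$-side terms into the same block decomposition would require a new structure and new cumulant estimates, which you do not supply. Relatedly, your assertion that on a discrepant pair the variables $(x_{uv},y_{\pi(u)\pi(v)})$ are independent is false whenever $\{\pi(u),\pi(v)\}=\{\pi^*(u),\pi^*(v)\}$ as an unordered pair (the self-loops); the paper controls these explicitly by the bound $L\le k/2$.

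The paper sidesteps this entirely by a threshold split (equation~\eqref{eq:two_event}): for each $k$ it introduces $\tau_k$ and bounds separately the \emph{signal} event $\{\en(\calH_{\pi^*}^f)-\en(\calH_{\pi^*}^f[F_\pi])<\tau_k\}$ and the \emph{noise} event $\{\en(\calH_{\pi}^f)-\en(\calH_{\pi}^f[F_\pi])\ge\tau_k\}$. The point is that the signal event depends on $\pi$ only through $F_\pi$, so the union over $\mathcal T_k$ collapses to a union over at most $\binom{m}{k}$ subsets, and the summand is a simple $\Bin(N_k,p_{11})$ tail. Only the noise event pays the full $|\mathcal T_k|\le n^{3k}/k!^2$ price, and that event involves $\pi$ alone, so the digraph decomposition of Lemma~\ref{lem: upper bound for kappa} applies cleanly. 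This decoupling is the idea your proposal is missing; without it the joint MGF you need is not the object the paper's digraph lemma controls.
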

The possibility result is established in the minimax sense, while the impossibility result is under a Bayesian framework. 
Hence, the threshold applies to both minimax and Bayesian risks. 
Theorem \ref{thm:partial recovery} implies that for partial recovery, the threshold for the number of correlated nodes $m$ is of the order $\frac{\log n}{p^2 \phi(\gamma)}$, beyond which partial recovery is possible and below which partial recovery is impossible. 
The dependency on the ambient graph order $n$ is logarithmic, while the scaling with respect to $p$ and $\rho$ is characterized by $\frac{1}{p^2 \phi(\gamma)}$.


\begin{theorem}[\ER model, exact recovery]\label{thm:exact recovery}
     There exists an estimator $\hat{\pi}$ such that, for any $\pi^*\in \maS_{n,m}$,
     \begin{align*}
        \prob{\overlap(\pi^*,\hat{\pi}) = 1} = 1-o(1),
    \end{align*}
    when $m\ge C_1 \pth{ \frac{\log n}{p^2 \phi(\gamma)}\vee \frac{\log (1/(p^2\gamma)) }{p^2 \gamma} } $, where $C_1$ is a universal constant.
    
    Conversely, for any $c\in(0,1)$, there exists a constant $c_3$ only depending on $c$ such that, when $m\le c_3 \pth{ \frac{\log n}{p^2 \phi(\gamma)}\vee \frac{\log (1/(p^2\gamma))}{p^2 \gamma} }$, for any estimator $\hat{\pi}$
    \begin{align*}
        \prob{\overlap(\pi^*,\hat{\pi}) < 1}\ge 1-c, 
    \end{align*}
    where $\pi^*$ is uniformly distributed over $\calS_{n,m}$.
\end{theorem}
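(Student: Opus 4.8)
The plan is to prove the two directions separately; in each, the two terms in the threshold reflect two genuinely different bottlenecks. The term $\frac{\log n}{p^2\phi(\gamma)}$ is, up to constants, the partial-recovery threshold of Theorem~\ref{thm:partial recovery}: it measures the cost of beating out all permutations that disagree with $\pi^*$ on a constant fraction of $S^*$. The term $\frac{\log(1/(p^2\gamma))}{p^2\gamma}$ is $n$-independent and purely \emph{local}. Writing $p^2\gamma = \rho p(1-p) = p_{11}-p^2$, note that $p^2\gamma\asymp p_{11}$ precisely in the regime $\gamma\gtrsim 1$ in which this term can dominate; there the ``intersection graph'' $H\triangleq G_1[S^*]\wedge (G_2[T^*])^{\pi^*}$ (keeping only pairs that are edges of both correlated subgraphs) is distributed as $\maG(m,p_{11})$, and $\frac{\log(1/(p^2\gamma))}{p^2\gamma}$ is, up to constants, its no-isolated-vertex threshold. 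An isolated vertex of $H$ is a vertex carrying no ``positive'' correlated signal, and is the source of the local obstruction.

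For achievability I would analyze the MLE $\hat\pi$, which maximizes over injective $\pi\colon S\mapsto V(G_2)$ with $|S|=m$ the score $\mathcal L(\pi)=\sum_{\{u,v\}\subseteq S}\ell(\beta_{uv}(G_1),\beta_{\pi(u)\pi(v)}(G_2))$, where $\ell(a,b)=\log\frac{p_{ab}}{p^{a+b}(1-p)^{2-a-b}}$, so that doubly-present, singly-present, and doubly-absent edge pairs receive weights $\log(1+\gamma)$, $\log(1-\rho)$, $\log(1+\tfrac{\rho p}{1-p})$. To show $\Prob[\hat\pi=\pi^*]=1-o(1)$, union bound over competitors $\pi$ with $\overlap(\pi^*,\pi)<1$, grouped by $k$, the number of vertices of $S^*$ on which $\pi$ differs from $\pi^*$ (including vertices left unmatched by $\pi$). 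For a competitor of a fixed combinatorial type: (i) organize the set of edge pairs on which $\mathcal L(\pi)$ and $\mathcal L(\pi^*)$ disagree via the functional digraph induced by $\pi$ and $\pi^*$, whose cycles and paths govern the correlation structure; this set has size of order $km$, and $\mathcal L(\pi^*)-\mathcal L(\pi)$ splits into blocks each controlled by a Chernoff/Bernstein bound on its low-order cumulants, with exponent of order $p^2\phi(\gamma)$ per generic edge and of order $p_{11}$ per edge incident to a (near-)isolated vertex of $H$; (ii) the number of competitors of that type is $(nm)^{O(k)}$. Summing, the generic competitors contribute an exponent $\gtrsim km\,p^2\phi(\gamma)$, beating the $O(k\log n)$ entropy once $m\gtrsim\frac{\log n}{p^2\phi(\gamma)}$ — this step is essentially a re-run of the proof of Theorem~\ref{thm:partial recovery}. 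The delicate competitors have $k=O(1)$: the signal then lives on the $\approx m$ edge pairs at a single vertex, and correctness requires that vertex to have enough common edges, i.e.\ $\Bin(m,p_{11})$ to be unlikely to be small, which is what forces $m\,p_{11}\gtrsim\log(1/p_{11})$, the second term. Choosing $C_1$ a large universal constant closes both cases.

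For impossibility, the first term follows because exact recovery entails $\overlap(\pi^*,\hat\pi)\ge\delta$ for every $\delta\in(0,1)$, so the converse of Theorem~\ref{thm:partial recovery} already forces $m\gtrsim\frac{\log n}{p^2\phi(\gamma)}$ up to a constant depending on $c$. For the second term, suppose $m\le c_3\frac{\log(1/(p^2\gamma))}{p^2\gamma}$; if $m$ is also below the first threshold we are done by partial-recovery impossibility, so we may assume $m$ is at least of order $\frac{\log n}{p^2\phi(\gamma)}$, which is more than enough for what follows. A second-moment argument shows that with probability bounded away from $0$ there is a vertex $v\in S^*$ isolated in $H\sim\maG(m,p_{11})$: each vertex is isolated with probability $(1-p_{11})^{m-1}=e^{-\Theta(mp_{11})}$, and $mp_{11}\lesssim\log(1/p_{11})$ makes the expected count $\Omega(1)$. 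Such a $v$ is information-theoretically unrecoverable: conditioned on $S^*$, on $\pi^*|_{S^*\setminus\{v\}}$, and on both graphs, the true image $w^*=\pi^*(v)$ is burdened with the negatively-correlated singly-present edges it inherits from $v$, so the conditional posterior of $\pi^*(v)$ does not concentrate — a typical low-degree $w_0\in V(G_2)\setminus T^*$ is at least as plausible as $w^*$ — and feeding this into the generalized Fano inequality with the appropriate conditioning shows that every estimator errs with probability $\ge 1-c$ when $c_3$ is small. Taking the maximum of the two lower bounds completes the converse.

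The main obstacle is the $k=O(1)$ part of the achievability union bound and its interplay with the two regimes. When $p$ is very small one has $mp^2=o(1)$, so the naive ``$\log n$'' entropy term is itself dominated, and it is the combinatorial smallness of $\Bin(m,p^2)$ together with the precise divergence appearing in the cumulant bound — not a crude tail bound — that makes the second term alone sufficient; arranging the constants so that a single universal $C_1$ works uniformly over all $(p,\rho)$ is the real labor. On the impossibility side, the subtle step is converting the purely local obstruction (an isolated vertex of $H$) into an estimator-independent lower bound through the right choice of conditioning $\sigma$-algebra in the generalized Fano argument, with a case split according to whether $\rho$ is bounded away from $1$.
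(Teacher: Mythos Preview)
Your achievability outline is close to the paper's. The main structural difference is that the paper analyzes the simpler score $f(x,y)=xy$ rather than the full log-likelihood; this makes the cumulant computations in Lemma~\ref{lem: upper bound for kappa} and Lemma~\ref{lem: strong noise event} cleaner (two-valued rather than three-valued edge contributions). Otherwise the skeleton is the same --- union bound over $k$, split into a ``weak signal'' event and a ``strong noise'' event with threshold $\tau_k$, correlated functional digraph to bound the noise MGF --- and your intuition that the term $\frac{\log(1/(p^2\gamma))}{p^2\gamma}$ enters through small $k$ is correct: in the paper both events produce bounds of the form $[\cdot]^k$, and it is the $k=1$ base for the signal event, namely $m\exp(-\Theta(mp^2\gamma))\le 1/m$ in the regime $\gamma\ge 1$, that forces the second condition on $m$.

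For impossibility your first term matches the paper (apply Proposition~\ref{prop:lb-partial} with $\delta=\tfrac12$). For the second term the paper takes a much shorter route than you propose: it simply hands the estimator the extra information $(S^*,T^*)$, reducing to exact recovery for fully correlated \ER graphs on $m$ vertices, and invokes the known lower bound $m(\sqrt{p_{00}p_{11}}-\sqrt{p_{01}p_{10}})^2\le(1-\epsilon)\log m$ from \cite{wu2022settling}. One line of algebra gives $(\sqrt{p_{00}p_{11}}-\sqrt{p_{01}p_{10}})^2\asymp p^2(\gamma\wedge\gamma^2)$, and a case split on $\gamma\lessgtr 1$ finishes.

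Your direct isolated-vertex argument is the right heuristic but has a gap as written. Isolation in $H=G_1[S^*]\wedge (G_2[T^*])^{\pi^*}$ only kills the $(1,1)$-contribution to the likelihood at $v$; the $(1,0)$ and $(0,0)$ pairs still carry signal (of strength governed by $\rho$, not by $p_{11}$), and whether a ``typical low-degree $w_0\in V(G_2)\setminus T^*$'' is indeed at least as plausible as $w^*$ requires comparing the full likelihood profiles, not just common-edge counts. Making that comparison rigorous --- including the ``case split according to whether $\rho$ is bounded away from $1$'' you flag at the end --- is precisely the content of the \cite{wu2022settling} lower bound you would be re-deriving from scratch. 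The reduction is a one-line shortcut that sidesteps all of this.
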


For exact recovery, Theorem~\ref{thm:exact recovery} establishes the threshold for $m$ of the order $\frac{\log n}{p^2 \phi(\gamma)}\vee \frac{\log (1/(p^2\gamma))}{p^2 \gamma}$.
Under the weak signal regime where $\gamma = O(1)$, this rate coincides with that for partial recovery in Theorem~\ref{thm:partial recovery}. 
While the logarithmic scaling in $n$ is common to many other problems on random graphs, 
under the strong signal regime where $\gamma = \omega(1)$, Theorem~\ref{thm:exact recovery} reveals a transition from $\frac{\log n}{p^2 \phi(\gamma)}$ to $\frac{\log (1/(p^2\gamma))}{p^2 \gamma}$ when $\log^2 \frac{1}{p}-\log^2 \frac{1}{\rho} \gtrsim \log n$.
In this regime, the challenge is essentially the oracle recovery of mapping given the sets of correlated nodes  $(S^*,T^*)$. See more discussions in Section~\ref{sec:impossibility}.

\begin{theorem}[Gaussian Wigner model]\label{thm:gauss-main} 
    For any $\rho \in (0,1)$, there exists an estimator $\hat{\pi}$ such that, for any $\pi^*\in \maS_{n,m}$,
     \begin{align*}
\prob{\overlap(\pi^*,\hat{\pi}) =1} =1-o(1),
    \end{align*}
    when $m\ge C_2\pth{\frac{\log n}{\log\pth{1/(1-\rho^2)}}\vee 1}$, where $C_2$ is a universal constant.
    
    Conversely, for any constant $c,\delta\in (0,1)$, there exists a constant $c_4(c,\delta)$ such that, when $m\le c_4(c,\delta)\pth{\frac{\log n}{\log\pth{1/(1-\rho^2)}}\vee 1}$, for any estimator $\hat{\pi}$,\begin{align*}
        \prob{\overlap(\pi^*,\hat{\pi})<\delta}\ge 1-c,
    \end{align*}
    where $\pi^*$ is uniformly distributed over $\maS_{n,m}$.
\end{theorem}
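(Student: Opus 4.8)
The plan is to prove the upper bound with the maximum-likelihood estimator over $\maS_{n,m}$ and the lower bound with a generalized (rate--distortion) Fano inequality fed by a mutual-information estimate. Both directions hinge on the same accounting: under the partially correlated Gaussian Wigner model each of the $\binom m2$ coupled edge pairs carries Kullback--Leibler divergence $\tfrac12\log\tfrac1{1-\rho^2}$ (the divergence from $\gaussianrho$ to the law of two independent standard normals), whereas pinning down an $m$-subset of $V(G_1)$ together with its injective image in $V(G_2)$ costs $\Theta(m\log n)$ nats, and balancing $m^2\log\tfrac1{1-\rho^2}$ against $m\log n$ produces the stated threshold.

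\textbf{Achievability.} Since every edge is marginally standard Gaussian and only the $\binom m2$ pairs inside $(S^*,T^*)$ are coupled, maximizing the likelihood over $(S,\pi)\in\maS_{n,m}$ is the same as maximizing $L(S,\pi)=\tfrac{\rho}{1-\rho^2}\sum_{u<v\in S}\beta_{uv}(G_1)\beta_{\pi(u)\pi(v)}(G_2)-\tfrac{\rho^2}{2(1-\rho^2)}\bigl(\sum_{u<v\in S}\beta_{uv}(G_1)^2+\sum_{u<v\in \pi(S)}\beta_{uv}(G_2)^2\bigr)$; I would take $\hat\pi$ to be a maximizer. Exact recovery fails only if $L(S,\pi)\ge L(S^*,\pi^*)$ for some $(S,\pi)$ with $\overlap(\pi^*,\pi)<1$, so the next step is a union bound over configuration classes indexed by $k$, the number of vertices on which $(S,\pi)$ deviates from $(S^*,\pi^*)$. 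The number of candidates in class $k$ is at most $\binom mk\binom nk n^k=\exp(O(k\log n))$. For a fixed such candidate, $L(S^*,\pi^*)-L(S,\pi)$ involves only the edges on which the two candidates disagree; organizing these via the ``correlated functional digraph'' of $\pi$ relative to $\pi^*$, one sees (exactly as in the analysis of the fully correlated model) that the $k$ deviating vertices force $\gtrsim km$ edge pairs that are coupled under $\pi^*$ but decoupled under $\pi$, with essentially none in the reverse direction. Since $L(S^*,\pi^*)-L(S,\pi)$ is then an explicit sum of independent bivariate-Gaussian log-likelihood-ratio increments, a Chernoff bound using the exact cumulant generating function of one increment gives $\prob{L(S,\pi)\ge L(S^*,\pi^*)}\le\exp\bigl(-c\,km\log\tfrac1{1-\rho^2}\bigr)$, and summing $\exp\bigl(-k(cm\log\tfrac1{1-\rho^2}-C\log n)\bigr)$ over $1\le k\le m$ is $o(1)$ once $m\ge C_2\tfrac{\log n}{\log(1/(1-\rho^2))}$; in the regime $\log\tfrac1{1-\rho^2}\ge\log n$ the same sum only needs $m\ge C_2$, which is the ``$\vee1$'' part.

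\textbf{Impossibility.} I would take $\pi^*$ uniform on $\maS_{n,m}$. Using the golden-formula bound with reference measure the product of the marginal (hence $\pi^*$-independent) laws of $G_1$ and $G_2$, $I(\pi^*;G_1,G_2)\le\Expect_{\pi^*}D\bigl(P_{G_1,G_2\mid\pi^*}\,\|\,P_{G_1}\otimes P_{G_2}\bigr)=\binom m2\cdot\tfrac12\log\tfrac1{1-\rho^2}$, because the conditional law couples exactly the $\binom m2$ pairs inside $(S^*,T^*)$ and each coupled pair adds divergence $\tfrac12\log\tfrac1{1-\rho^2}$. On the other hand, for any candidate $\hat\pi$ the ``overlap ball'' $\{\pi\in\maS_{n,m}:\overlap(\hat\pi,\pi)\ge\delta\}$ has cardinality at most $\binom m{\lceil\delta m\rceil}\binom n{\lfloor(1-\delta)m\rfloor}n^{\lfloor(1-\delta)m\rfloor}$, whereas $|\maS_{n,m}|=\binom nm(n)_m$, so $\log\frac{|\maS_{n,m}|}{\max_{\hat\pi}|\{\overlap\ge\delta\}|}\ge c(\delta)\,m\log n$. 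The rate--distortion form of Fano's inequality then gives $\prob{\overlap(\pi^*,\hat\pi)<\delta}\ge1-\frac{I(\pi^*;G_1,G_2)+\log2}{c(\delta)\,m\log n}\ge1-c$ as soon as $m\log\tfrac1{1-\rho^2}\le c_4(c,\delta)\log n$; the corner where this bound forces only $m\le c_4$ is covered by the same inequality, and the complementary corner $m=n-o(n)$ (which forces $\rho\to0$) can instead be handled by revealing $(S^*,T^*)$ and invoking the impossibility of partial recovery in the fully correlated Gaussian Wigner model \cite{wu2022settling}.

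\textbf{Main obstacle.} I expect the crux to be getting both sides sharp in terms of $\log\tfrac1{1-\rho^2}$ rather than its weaker surrogate $\rho^2$, which is exactly what keeps the threshold correct as $\rho\to1$. On the achievability side this is why one must keep the full likelihood $L$ (the plain inner-product statistic is lossy here) and evaluate the bivariate-Gaussian log-likelihood-ratio cumulant generating function carefully enough that the per-configuration exponent $\asymp km\log\tfrac1{1-\rho^2}$ dominates the $\asymp k\log n$ log-cardinality of class $k$ for every $k\le m$ at once. On the impossibility side it is the combinatorics of the overlap balls over all $m$-subsets of $V(G_1)$: one must verify that $\log\frac{|\maS_{n,m}|}{\max_{\hat\pi}|\{\overlap\ge\delta\}|}\asymp_\delta m\log n$ stays bounded below for \emph{every} $\delta\in(0,1)$, in particular $\delta\le\tfrac12$, where the overlap functional is not a metric.
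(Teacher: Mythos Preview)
Your impossibility argument is essentially the paper's: bound $I(\pi^*;G_1,G_2)\le\binom{m}{2}\cdot\tfrac12\log\tfrac1{1-\rho^2}$ via the product reference measure, combine with a packing lower bound of order $(\tfrac{n}{e^3})^{\delta m}$, and apply generalized Fano. The corner-case discussion invoking the fully correlated lower bound is unnecessary here (the paper uses that reduction only for the ER exact-recovery rate); for Gaussian the Fano bound alone already gives the full threshold, with the $\vee 1$ coming trivially from $m\ge 1$.

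The achievability sketch, however, has a genuine gap. You assert that after the functional-digraph bookkeeping ``$L(S^*,\pi^*)-L(S,\pi)$ is an explicit sum of independent bivariate-Gaussian log-likelihood-ratio increments.'' It is not: the positive terms $f(\beta_e,\beta_{\pi^*(e)})$ for $e\in\binom{S^*}{2}\setminus\binom{F_\pi}{2}$ and the negative terms $f(\beta_{e'},\beta_{\pi(e')})$ for $e'\in\binom{S}{2}\setminus\binom{F_\pi}{2}$ share edge variables whenever $e=e'$ or $\pi^*(e)=\pi(e')$, and these collisions thread the terms into paths and cycles of arbitrary length. What is independent is the contribution of each connected component, but the cumulant generating function of a length-$\ell$ component is not simply $\ell$ times that of a single increment---controlling it uniformly in $\ell$ is precisely the content of the paper's Lemma~\ref{lem: upper bound for kappa}, and the argument does not go through without it.

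The paper sidesteps the joint analysis of $L(S^*,\pi^*)-L(S,\pi)$ altogether by introducing a threshold $\tau_k$ and splitting into two events: $\{\en(\calH_{\pi^*}^f)-\en(\calH_{\pi^*}^f[F_\pi])<\tau_k\}$ (signal, genuinely i.i.d.\ correlated pairs) and $\{\en(\calH_{\pi}^f)-\en(\calH_{\pi}^f[F_\pi])\ge\tau_k\}$ (noise, where the path/cycle cumulant lemma applies to one side only). It also does \emph{not} analyze the exact MLE $f(x,y)=\rho xy-\tfrac{\rho^2}{2}(x^2+y^2)$; instead it uses $f(x,y)=xy$ for $\rho$ bounded away from $1$ (Proposition~\ref{prop: Upper bound for exact recovery-Gaussian}) and $f(x,y)=-\tfrac12(x-y)^2$ for $\rho\to 1$ (Proposition~\ref{prop: upbd_gaussian_general}), because the cumulant-over-components computation is tractable for these surrogates and together they recover the $\log\tfrac1{1-\rho^2}$ scaling. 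Your instinct that ``keep the full likelihood'' is what delivers the sharp $\rho\to 1$ behavior is correct in spirit, but the paper achieves it via the second surrogate rather than the MLE itself.
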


Theorem~\ref{thm:gauss-main} implies that in the Gaussian Wigner model, the thresholds for the number of correlated nodes $m$ are of the order $\frac{\log n}{\log\pth{1/(1-\rho^2)}}\vee 1$, beyond which exact recovery is possible and below which partial recovery is impossible. There is no gap between partial and exact recovery.

\begin{remark}
     The optimal rates for partial and exact recoveries under both correlated \ER model and correlated Gaussian Wigner model are derived in \cite{wu2022settling}. Our results applied with $S^* = V(G_1)$ match the thresholds established in their work up to a constant factor. Furthermore, the lower bound $\frac{\log (1/(p^2\gamma)) }{p^2 \gamma}$ for exact recovery is derived from addressing the alignment problem for the subgraphs with the additional information on the domain and range of $\pi^*$, which applies the result in their work.
\end{remark}

\subsection{\hd{Interpretation of Recovery Thresholds}}\label{subsec: interpretation}
{
Estimating $\pi^*$ under the partially correlated graph model conceptually consists of two subproblems: 1) recovery of the support sets $S^*$ and $T^*$; and 2) recovery of the matching between the two sets.
The existing literature mostly focuses on the latter, yielding necessary conditions under \ER and Gaussian Wigner models, respectively.
In our results, new thresholds $\frac{\log n}{p^2 \phi(\gamma)}$ and $\frac{\log n}{\log(1/(1-\rho^2))}$ reflect the complexity of the support recovery problem.
We provide more discussions on the support recovery problem in Appendix~\ref{apd:support-recovery}.

Those new thresholds have a natural information-theoretical interpretation. 
The entropy of $\pi^*$ (or the support sets $S^*$ and $T^*$) is on the order of $m\log n$.
On the other hand, the mutual information between $\pi^*$ and the observed pair of graphs $(G_1,G_2)$ can be upper bounded as follows:
\begin{itemize}
    \item \ER model: $I(\pi^*;G_1,G_2)\le 25\binom{m}{2}p^2 \phi(\gamma)$,
    \item Gaussian Wigner model: $I(\pi^*;G_1,G_2)\le\frac{1}{2}\binom{m}{2} \log(\frac{1}{1-\rho^2})$,
\end{itemize}
where $p^2 \phi(\gamma)$ and $\log(\frac{1}{1-\rho^2})$ arise from the KL divergence between a pair of correlated edges and independent edges (see Lemma~\ref{lem:mutual-pack}). 
Combining those yields the necessary conditions on the exact recovery by 
\[
I(\pi^*;G_1,G_2)\gtrsim m \log n.
\]
For partial recovery, similar arguments can be applied to the metric entropy of $\pi^*$ using standard volume argument. 
Then the necessary conditions on recovery thresholds follow from Fano's method.

The key quantities $p^2\phi(\gamma)$ and $\log(\frac{1}{1-\rho^2})$ also have an intuitive large deviation interpretation from the upper bounds. 
In our estimator, an edge pair with $e_1\in E(G_1)$ and $e_2\in E(G_2)$ contributes $f(\beta_{e_1}(G_1),\beta_{e_2}(G_2))$ to the total similarity score, where $f$ is a prescribed similarity function.
An incorrect matching involves two independent edges, while a correct matching consists of correlated edges with a higher expected score.
For the \ER model, we use $f(x,y)=xy$ and the rate function analysis of Bernoulli distributions yields $d_{\mathrm{KL}}(p_{11}\| p^2)\ge p^2 \phi(\gamma)$\footnote{This is also known as Bennett inequality.};
for the Gaussian Wigner model, the quantity $\log(\frac{1}{1-\rho^2})$ can be obtained by combining the rate function analyses of Gaussian distributions with $f(x,y)=xy$ and $f(x,y)=-\frac{1}{2}(x-y)^2$.
However, classical large deviation theory is not directly applicable due to the correlations among the incorrectly matched edges.
We address this issue by carefully analyzing the correlation structures in Section~\ref{sec:digraph}.

To illustrate the recovery thresholds, consider $p=n^{-a_1}$ and $\rho = n^{-a_2}$ for constants $a_1,a_2\in (0,1)$ in the \ER model. 
The exact recovery and partial recovery thresholds are given by 
\begin{align*}
    \text{Exact Recovery: }& \begin{cases}
        n^{a_1+a_2}\log n,&a_1>a_2\\
        n^{a_1+a_2}\log n,&a_1=a_2\\
        n^{2a_2}\log n,&a_1<a_2
    \end{cases},\quad 
    \text{Partial Recovery: }\begin{cases}
        n^{a_1+a_2},&a_1>a_2\\
        n^{a_1+a_2}\log n,&a_1=a_2\\
        n^{2a_2}\log n,&a_1<a_2
    \end{cases}.
\end{align*}
There is a transition in the recovery thresholds from $n^{a_1+a_2}\log n$ to $n^{2a_2}\log n$ as the relative signal strength $\gamma=\frac{\rho(1-p)}{p}$ changes from the weak signal regime to the strong signal regime.
Let $m = n^{a_3}$. The critical threshold of the exponent $a_3$ as a function of parameters $a_1$ and $a_2$ is illustrated in the contour plot in Figure~\ref{fig:phase-diagram}. In this phase diagram, we focus on the polynomial dependence on $n$ and ignore logarithmic factors. 
The red line with $m=n$ was established in prior work~\cite{wu2022settling}.
Our contribution is to determine the critical recovery thresholds on $m$ in the green region for all configurations of $p$ and $\rho$.
}
\begin{figure}[htb]
    \centering
\begin{tikzpicture}[scale=1.5, >=stealth]
    \draw[->, thick] (0,0) -- (4.2,0) node[right]{$a_1$};
    \draw[->, thick] (0,0) -- (0,3.2) node[above]{$a_2$};
    \draw (0,0) node[below left]{0};
    
    \draw (1,0) -- (1,-0.05) node[below]{$1/4$};
    \draw (2,0) -- (2,-0.05) node[below]{$1/2$};
    \draw (3,0) -- (3,-0.05) node[below]{$3/4$};
    \draw (4,0) -- (4,-0.05) node[below]{$1$};

    \draw (0,1) -- (-0.05,1) node[left]{$1/4$};
    \draw (0,2) -- (-0.05,2) node[left]{$1/2$};
    \draw (0,3) -- (-0.05,3) node[left]{$3/4$};

    \fill[green!20] (0,0) -- (0,2) -- (2,2) -- (4,0);
    \fill[gray!20] (0,2) -- (2,2) -- (4,0) -- (4,3) -- (0,3);
    \draw[color = gray!50] (0,0) -- (3,3);
    \fill[pattern=vertical lines, pattern color=gray!50]
       (0,0) -- (3,3) -- (0,3) -- cycle;
    \fill[pattern=horizontal lines, pattern color=gray!50]
       (0,0) -- (4,0) -- (4,3) -- (3,3) -- cycle;
    
    \fill[pattern={Lines[angle=90, line width=0.3pt, distance=1.5pt]},pattern color=gray] (4.8,0.5) rectangle (5,0.7);
  \draw (4.8,0.5) rectangle (5,0.7);
  \node[right=0.2cm] at (5,0.57) {Weak signal ($\rho\ll p$)};

  \fill[
    pattern={Lines[angle=0, line width=0.3pt, distance=1.5pt]},
    pattern color=gray
  ] (4.8,0.1) rectangle (5,0.3);
  \draw (4.8,0.1) rectangle (5,0.3);
  \node[right=0.2cm] at (5,0.17) {Strong signal ($\rho\gg p$)};

    \draw[red, thick] (0,2) -- (2,2) -- (4,0);
    \draw[blue] (0,1.5) -- (1.5,1.5) -- (3,0);
    \draw[blue] (0,1) -- (1,1) -- (2,0);
    \draw[blue] (0,0.5) -- (0.5,0.5) -- (1,0);
    
    \node at (2.1,2.15) {$a_3=1$};
    \node at (1.6,1.65) {$a_3=3/4$};
    \node at (1.1,1.15) {$a_3 = 1/2$};
    \node at (0.6,0.65) {$a_3 = 1/4$};
    
    \draw[fill=green!20] (4.8,2.7) rectangle (5,2.9); \node[right=0.2cm] at (5,2.77) {Possible regime};
    \draw[fill=gray!20] (4.8,2.3) rectangle (5,2.5);
    \node[right=0.2cm] at (5,2.37) {Impossible regime};

\end{tikzpicture}
    \caption{Phase diagram for recovery thresholds with $p=n^{-a_1}$, $\rho = n^{-a_2}$, and $m = n^{a_3}$.}
    \label{fig:phase-diagram}
\end{figure}


\subsection{Related Work}
\label{sec:related}

\paragraph{Graph sampling.}

When analyzing the properties of graphs, several challenges arise, such as limited data, high acquisition costs~\cite{stumpf2005subnets}, and incomplete knowledge of hidden structures~\cite{lancichinetti2009community, yang2013community, fortunato2016community}. Due to these challenges, graph sampling is  a powerful approach for exploring graph structure.
When data is sampled from two large networks, it is often unrealistic to assume full correlation among nodes in the sampled subgraphs. This naturally leads to partially correlated graphs. While the exact number of correlated nodes may be unknown, we usually have some estimate of the size. As a simplification, our model considers the case where the size is fixed. For the case where the size of correlated nodes is unknown, we leave it as our future work. See more details in Remark~\ref{rmk:unknown-case}.



\paragraph{\hd{Subgraphs isomorphism problem.}}
{
In our partially correlated \ER model, there are two latent subgraphs of order $m$ with correlated edges. 
Specifically, when $\rho = 1$, the two subgraphs are isomorphic. 
Intuitively, a necessary condition for a successful alignment is that $m$ exceeds the order the largest induced isomorphic subgraph between the independent parts.
It is recently shown in~\cite{CD2023,surya2025isomorphisms} that the size of maximal isomorphic subgraphs between two independent \ER graphs $\maG(n,p)$ is $\Theta(\log n)$ when $p$ is a constant. 
In this regime, our threshold $m\ge C\log n$ for some constant $C$ aligns with such results. 
When $p = p_n\to 0$, the order of the largest isomorphic subgraphs between two independent \ER graphs remains open~\cite{surya2025isomorphisms}.
}

\paragraph{Efficient algorithms and computational hardness.}
Numerous algorithms have been developed for the recovery problem. 
For example,  percolation graph matching algorithm \cite{yartseva2013performance},  subgraph matching algorithm \cite{barak2019nearly}, and local neighborhoods based algorithm \cite{mossel2020seeded}. 
However, these algorithms may be computationally inefficient. 
There are several polynomial-time algorithms for recovery, catering to different regimes correlation coefficients $\rho$. These include works by \cite{babai1980random,bollobas1982distinguishing,dai2019analysis,ganassali2020tree,ding2021efficient,mao2023exact,mao2023random,ding2023polynomial,muratori2024faster}. For instance, a polynomial-time algorithm for recovery is proposed in \cite{mao2023random} by counting chandeliers when the correlation coefficient  $\rho>\sqrt{\alpha}$, where $\alpha\approx 0.338$ is the Otter's constant introduced in \cite{otter1948number}. Additionally,
 there exists an efficient iterative polynomial-time algorithm for sparse \ER graphs when the correlation coefficient is a constant \cite{ding2023polynomial}.

It is postulated in \cite{hopkins2017efficient, hopkins2018statistical,kunisky2019notes} that the framework of low-degree polynomial algorithms effectively demonstrates computation hardness of detecting and recovering latent structures, and it bears similarities to sum-of-square methods \cite{hopkins2017power, hopkins2018statistical}. Based on the conjecture on the hardness of low-degree polynomial algorithms, it is shown in \cite{mao2024testing} that there is no polynomial-time test or matching algorithm when the correlation coefficient satisfies $\rho^2\le \frac{1}{\text{polylog}(n)}$. Furthermore, the recent work \cite{ding2023low} showed computation hardness for detection and exact recovery when $p = n^{-1+o(1)}$ and the correlation coefficient $\rho<\sqrt{\alpha}$, where $\alpha\approx 0.338$ is the Otter's constant, suggesting that several polynomial algorithms may be essentially optimal. In this paper, we do not address polynomial algorithms or the computational hardness on partial models and leave these topics for future work.


\paragraph{Correlation detection} Besides the recent literature on the graph alignment problem, the correlation detection is another related topic. 
Given a pair of graphs, their correlation detection is formulated as a hypothesis testing problem, wherein the null hypothesis assumes independent random graphs, while the alternative assumes edge correlation under a latent permutation. 
A hypothesis testing model for correlated \ER graphs is proposed in \cite{barak2019nearly}.  The sharp threshold for dense \ER graphs and the threshold within a constant factor for sparse \ER graphs on this hypothesis testing model are established in \cite{wu2023testing}. It is shown in \cite{ding2023detection} that the sharp threshold for sparse \ER graphs can be derived by analyzing the densest subgraph. Additionally, a polynomial time algorithm for detection is also possible by counting trees when the correlation coefficient exceeds a constant value \cite{mao2024testing}.
It is natural to ask whether the correlation can be detected when only a subsample from the graphs is collected. 
The probabilistic model is similar to the one present in the current paper, and we leave the exploration as our future work.

\paragraph{Other graph models.}
Many properties of the correlated \ER  model have been extensively investigated. However, the strong symmetry and tree-like structure inherent in this model distinguish it significantly from graph models encountered in practical applications. Therefore, it is crucial to explore more general graph models.  One such model is inhomogeneous random graph model, where the edge connecting probability varies among edges in the graph \cite{racz2023matching, song2023independence, ding2023efficiently}. Besides, geometric random graph model \cite{wang2022random, sentenac2023online,  bangachev2024detection, gong2024umeyama}, planted cycle model \cite{mao2023detection, mao2024informationtheoretic}, planted subhypergraph model \cite{dhawan2023detection} and corrupt model \cite{ameen2024robust} have also been subjects of recent studies.

\subsection{Notations}

For any $n\in \mathbb{N}$, let $[n]\triangleq \{1,2,\cdots,n\}$. For any $a,b\in \mathbb{R}$, let $a\wedge b = \min\{a,b\}$ and $a\vee b = \max(a,b)$.
We use standard asymptotic notation: for two positive sequences $\{a_n\}$ and $\{b_n\}$, we write $a_n = O(b_n)$ or $a_n\lesssim b_n$, if $a_n\le C b_n$ for some absolute constant $C$ and all $n$; $a_n = \Omega(b_n)$ or $a_n\gtrsim b_n$, if $b_n = O(a_n)$; $a_n = \Theta(b_n)$ or $a_n\asymp b_n$, if $a_n = O(b_n)$ and $a_n = \Omega(b_n)$; $a_n = o(b_n)$ or $b_n =\omega(a_n)$, if $a_n/b_n\to 0$ as $n\to \infty$.

For a given weighted graph $G$, let $V(G)$ denote its vertex set and $E(G)$ its edge set. For a set $V$, let $\binom{V}{2}\triangleq \{ \{x,y\}: x,y\in V, x\neq y\}$ denote the collection of all subsets of $V$ with cardinality two. 
We write $uv$ to represent an edge $\{u,v\}$, and $\beta_{e}(G)$ for the weight of the edge $e$. 
 For unweighted graphs $G$, we define $\beta_{uv}(G)=\indc{uv\in E(G)}$.
Let $\vn(G)=|V(G)|$ denote the number of vertices in $G$, and $\en(G)=\sum_{e\in E(G)} \beta_{e}(G)$ the total weight of its edges. 
The induced subgraph of $G$ over a vertex set $V$ is denoted by $G[V]$.
Given an injective mapping of vertices $\pi:S \subseteq V(G_1) \mapsto  V(G_2)$, the induced injective mapping of edges is defined as $\pi^{\sfE}:\binom{S}{2} \mapsto \binom{V(G_2)}{2}$, where $\pi^{\sfE}(uv)=\pi(u)\pi(v)$ for $u,v\in S$. 
For simplicity, we write $\pi(e)$ to denote $\pi^{\sfE}(e)$ when it is clear from the context.

\section{Correlated functional digraph} 
\label{sec:digraph}

A mapping from a set to itself can be graphically represented as a \emph{functional digraph} (see, e.g., \cite[Definition 1.3.3]{west2021combinatorial}). 
Here, we extend this notion to mappings between distinct domain and range sets, where elements from the two sets are correlated. 
Although our focus in this section is on the mappings between the edges of $G_1$ and $G_2$, the graphical representation can be easily generalized to mappings between any two finite sets, such as sets of vertices.

Given a domain subset $S\subseteq V(G_1)$, an injective function $\pi: S\mapsto V(G_2)$, and a bivariate function $f:\mathbb{R}\times \mathbb{R}\mapsto \mathbb{R}$, we define 
the \emph{$f$-intersected graph} $\calH_\pi^f$ as 
\[
V(\calH_\pi^f)=V(G_1),\quad \beta_{e}(\maH_\pi^f) =
\begin{cases}
    f\pth{\beta_{e}(G_1),\beta_{\pi^\sfE(e)}(G_2)}, &\text{ if }e\in E(G_1)\cap \binom{S}{2},\\ 0,&\text{ otherwise.}
\end{cases} 
\]
The weight in $\calH_\pi^f$ represents the similarity score under $\pi$.
To establish the possibility results, our estimator maximizes the total similarity score:  
\begin{align}
    \label{eq:hat-pi}    
    \hat{\pi}(f)&\in 
    \mathop{\text{argmax}}\limits_{ \pi\in  \maS_{n,m}} \en(\maH_{\pi}^f)
    =\mathop{\text{argmax}}\limits_{ \pi\in  \maS_{n,m}} \sum_{e\in E(G_1)} \beta_{e}(\maH_\pi^f).
\end{align}
Specifically, we use $\hat{\pi}(f)$ with $f(x,y) = xy$ to maximize overlap, and $f(x,y) = -\frac{1}{2}(x-y)^2$ to minimize the mean-squared distance. For notational simplicity, we write $\hat{\pi}=\hat{\pi}(f)$ when the choice of $f$ is clear from the context.

More generally, in our analysis in Section~\ref{sec:possibility}, we need to calculate the total weight within a subset $\calE\subseteq \binom{S}{2}$ given by 
\begin{equation}
\label{eq:calE-counter}    
    \beta_\maE(\maH_\pi^f)
\triangleq \sum_{e \in\calE} \beta_{e}(\maH_\pi^f).
\end{equation}
Due to the correlation between the edges in $G_1$ and $G_2$, the summands $\beta_{e}(\maH_\pi^f)$ are correlated random variables. 
The main idea is to decompose $\calE$ into independent parts. 
Specially, the correlation is governed by the underlying mapping $\pi^*$, as illustrated in Figure~\ref{fig:pi-pistart} where the correlated edges are represented by red dashed lines. 
To formally describe all correlation relationships, we introduce the \emph{correlated functional digraph} of a mapping $\pi$ between a pair of graphs. 
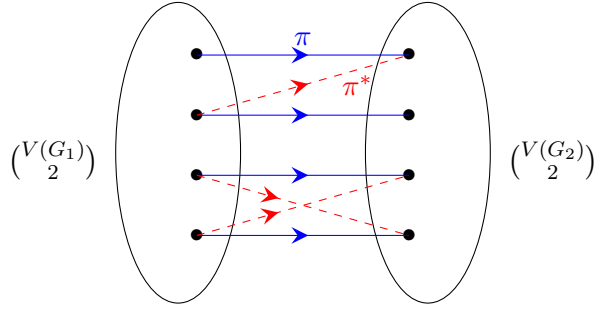
\begin{figure}[t]
    \begin{center}
        \begin{tikzpicture}[yscale = 1, xscale=0.83] 
            \draw (1,1.2) ellipse (1 and 2);
            \draw (5,1.2) ellipse (1 and 2);
            \draw (1.3,2.5) coordinate(A1) node[left,scale=0.7]{} node{$\bullet$};
            \draw (1.3,1.7) coordinate(A2) node[left,scale=0.7]{} node{$\bullet$};
            \draw (1.3,0.9) coordinate(A3) node[left,scale=0.7]{} node{$\bullet$};
            \draw (1.3,0.1) coordinate(A4) node[left,scale = 0.7]{} node{$\bullet$};
            \draw (4.7,2.5) coordinate(B1) node[right,scale=0.7]{} node{$\bullet$};
            \draw (4.7,1.7) coordinate(B2) node[right,scale=0.7]{} node{$\bullet$};
            \draw (4.7,0.9) coordinate(B3) node[right,scale=0.7]{} node{$\bullet$};
            \draw (4.7,0.1) coordinate(B4) node[right,scale=0.7]{} node{$\bullet$};
            \draw[blue] (A1) -- (B1) node[midway,above,sloped,scale=1](TextNode){$\pi$} node[sloped, pos=0.5,scale=2]{\arrowIn};
            \draw[blue] (A2) -- (B2) node[sloped, pos=0.5,scale=2]{\arrowIn};
            \draw[blue] (A3) -- (B3) node[sloped, pos=0.5,scale=2]{\arrowIn};
            \draw[blue] (A4) -- (B4) node[sloped, pos=0.5,scale=2]{\arrowIn};
            \draw[red,dashed] (A2) -- (B1) node[midway,right=1em,scale=1](TextNode){$\pi^*$} node[sloped, pos=0.5,scale=2]{\arrowIn};
            \draw[red,dashed] (A3) -- (B4) node[sloped, pos=0.36,scale=2]{\arrowIn};
            \draw[red,dashed] (A4) -- (B3) node[sloped, pos=0.36,scale=2]{\arrowIn};
            \draw (-1,1.5) node[below]{$\binom{V(G_1)}{2}$};
            \draw (7,1.5) node[below]{$\binom{V(G_2)}{2}$};
        \end{tikzpicture}
    \end{center}
    \caption{Examples of the mapping $\pi$ and the underlying correlation $\pi^*$, where the domain and range of $\pi$ and $\pi^*$ could be different.}
    \label{fig:pi-pistart}
\end{figure}

\begin{definition}[Correlated functional digraph]\label{def:func-diagraph}
    Let $\pi^*: S^* \mapsto T^*$ be the underlying mapping between correlated elements. The correlated functional digraph of the function $\pi: S \mapsto T$ is constructed as follows. 
    Let the vertex set be $\binom{S}{2}\cup \binom{S^*}{2} \cup \binom{T}{2} \cup \binom{T^*}{2}$.
    We first add every edge $e\mapsto \pi(e)$ for $e\in \binom{S}{2}$, and then merge each pair of nodes $(e,\pi^*(e))$ for $e\in \binom{S^*}{2}$ into one node. 
\end{definition}

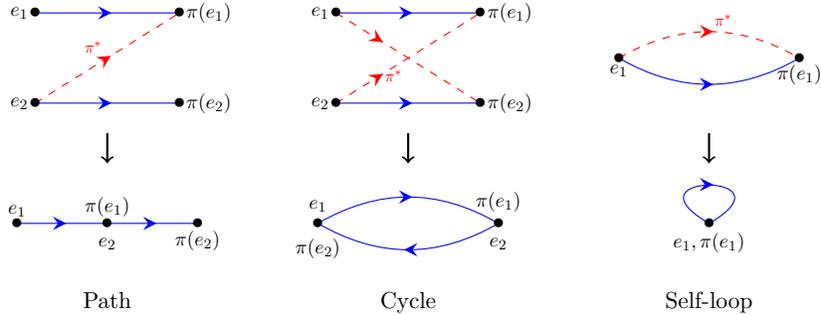
\begin{figure}[htb]
    \begin{center}
        \begin{tikzpicture}[every loop/.style={min distance=17mm,in=150,out=30,looseness=20},global scale = 0.8] 
            \draw (1.3,3.5) coordinate (A1);
            \draw (1.3,2) coordinate (A2);
            \draw (3.7,3.5) coordinate (B1);
            \draw (3.7,2) coordinate (B2);
            \draw[blue] (A1) -- node[above right,scale = 0.7]{} (B1) node[sloped, pos=0.5,scale=2]{\arrowIn};
            \draw[blue] (A2) -- node[above right,scale = 0.7]{} (B2) node[sloped, pos=0.5,scale=2]{\arrowIn};
            \draw[red,dashed] (A2) -- node[above left,scale = 0.7] {$\pi^*$} (B1) node[sloped, pos=0.5,scale=2]{\arrowIn};
            \draw[->,thick] (2.5,1.5) -- (2.5,1);
            \draw (A1) node[left,scale = 0.85] {$e_1$} node{$\bullet$};
            \draw (A2) node[left,scale = 0.85] {$e_2$} node{$\bullet$};
            \draw (B1) node[right,scale = 0.85] {$\pi(e_1)$} node{$\bullet$};
            \draw (B2) node[right,scale = 0.85] {$\pi(e_2)$} node{$\bullet$};
            \draw (1,0) coordinate (C1);
            \draw (2.5,0) coordinate (C2);
            \draw (4,0) coordinate (C3);
            \draw[blue] (C1) -- node[above,scale = 0.7]{} (C2) node[sloped, pos=0.5,scale=2]{\arrowIn};
            \draw[blue] (C2) -- node[above,scale = 0.7]{} (C3) node[sloped, pos=0.5,scale=2]{\arrowIn};
            \draw (C1) node[above,scale = 0.85]{$e_1$} node{$\bullet$};
            \draw (C2) node[above,scale = 0.85]{$\pi(e_1)$} node[below=3pt,scale = 0.85]{$e_2$} node{$\bullet$};
            \draw (C3) node[below,scale = 0.85] {$\pi(e_2)$} node{$\bullet$};
            \draw (2.5,-1) node[below]{Path};

            \draw (6.3,3.5) coordinate (D1);
            \draw (6.3,2) coordinate (D2);
            \draw (8.7,3.5) coordinate (E1);
            \draw (8.7,2) coordinate (E2);
            \draw[blue] (D1) -- node[above right,scale = 0.7]{} (E1) node[sloped, pos=0.5,scale=2]{\arrowIn};
            \draw[blue] (D2) -- node[above right,scale = 0.7]{} (E2) node[sloped, pos=0.5,scale=2]{\arrowIn};
            \draw[red,dashed] (D2) -- node[right,pos = 0.3,scale = 0.7] {$\pi^*$} (E1) node[sloped, pos=0.3,scale=2]{\arrowIn};
            \draw[red,dashed] (D1) -- (E2) node[sloped, pos=0.3,scale=2]{\arrowIn};
            \draw[->,thick] (7.5,1.5) -- (7.5,1);
            \draw (D1) node[left,scale = 0.85] {$e_1$} node{$\bullet$};
            \draw (D2) node[left,scale = 0.85] {$e_2$} node{$\bullet$};
            \draw (E1) node[right,scale = 0.85] {$\pi(e_1)$} node{$\bullet$};
            \draw (E2) node[right,scale = 0.85] {$\pi(e_2)$} node{$\bullet$};
            \draw (6,0) coordinate (F1);
            \draw (9,0) coordinate (F2);
            \draw[blue] (F1) edge[bend left=30] node[above,scale = 0.7]{} node[sloped, pos=0.5,scale=2]{\arrowIn} (F2);
            \draw[blue] (F2) edge[bend left=30] node[below,scale = 0.7]{}  node[sloped, pos=0.5,scale=2]{\arrowOut} (F1);
            \draw (F1) node[above=2pt,scale = 0.85]{$e_1$} node[below=3pt,scale = 0.85]{$\pi(e_2)$} node{$\bullet$};
            \draw (F2) node[above=2pt,scale = 0.85]{$\pi(e_1)$} node[below=3pt,scale = 0.85]{$e_2$} node{$\bullet$};
            \draw (7.5,-1) node[below]{Cycle};

            \draw (11,2.75) coordinate (G1);
            \draw (14,2.75) coordinate (H1);
            \draw[blue] (G1) edge[bend right = 30] node[below,scale = 0.7] {} node[sloped, pos=0.5,scale=2]{\arrowIn} (H1);
            \draw[red,dashed] (G1) edge[bend left = 30] node[above right,pos = 0.5,scale = 0.7] {$\pi^*$}  node[sloped, pos=0.5,scale=2]{\arrowIn} (H1);
            \draw[->,thick] (12.5,1.5) -- (12.5,1);
            \draw (G1) node[below,scale = 0.85]{$e_1$} node{$\bullet$};
            \draw (H1) node[below,scale = 0.85]{$\pi(e_1)$} node{$\bullet$};
            \draw (12.5,0) coordinate (I1);
            \path[blue] (I1) edge [loop above] node[above right,scale = 0.7] {}  (I1) ;
            \draw[blue] (12.5,0.93) node[below,scale=2]{\arrowIn};
            \draw (I1) node[below,scale = 0.85] {$\begin{array}{c} e_1, \pi(e_1) \end{array}$} node{$\bullet$};
            \draw (12.5,-1) node[below]{Self-loop};

        \end{tikzpicture}
    \end{center}
    \caption{The connected components in the correlated functional digraph.}
    \label{fig:digraph}
\end{figure}


It should be noted that both $\pi$ and $\pi^*$ are injective mappings under our model. 
After merging all pairs of nodes according to $\pi^*$, the degree of each vertex in the correlated functional digraph is at most two. 
Therefore, the connected components consist of paths and cycles, where a self-loop is understood as a cycle of length one. 
The connected components are illustrated in Figure~\ref{fig:digraph}.
{
Following prior works on graph matching such as~\cite{racz2021correlated} and~\cite{yang2023graph}, we lift the vertex mapping $\pi$ to an edge mapping $\pi^{\sfE}$. 
When the support sets are known, the lifted permutations can be decomposed into cycles as independent components. 
{
This viewpoint is consistent with the standard cycle-path decomposition that has been fruitfully used in graph alignment analysis. For instance, \cite{cullina2020partial} explicitly decomposed the lifted matching into cycles and paths in their partial-recovery analysis. \cite{gaudio2022exact} employed a lifted-object perspective that accommodates path and cycle components when analyzing correlated stochastic block models. \cite{onaran2016optimal} studied de-anonymization under community structure, where analogous decompositions are handled in a more general fashion. Our contribution is to tailor this decomposition to the partially correlated setting with unknown support, and to exploit the resulting  factorization in our threshold analysis.}


}

Let $\calP$ and $\calC$ denote the collections of subsets of $\calE$ belonging to different connected paths and cycles, respectively. 
Note that the sets from $\calP$ and $\calC$ are disjoint. 
Consequently, 
\[
\beta_{\maE}(\maH_\pi^f)
= \sum_{P\in\calP} \beta_P(\maH_\pi^f)
+ \sum_{C\in\calC} \beta_C(\maH_\pi^f),
\]
where the summands are mutually independent.

In our models, edge correlations are assumed to be homogeneous, implying that the distributions of $\beta_P(\maH_\pi^f)$ and $\beta_C(\maH_\pi^f)$ depend only on the size of the component. 
Let $\kappa^{\sfP,f}_\ell(t)$ and $\kappa^{\sfC,f}_\ell(t)$ denote the cumulant generating functions of $\beta_P(\maH_\pi^f)$ and $\beta_C(\maH_\pi^f)$ for components of order $\ell$, respectively. Then, we have 
\begin{align*}
    \log\expect{e^{t \beta_P(\maH_\pi^f)}} = \kappa^{\sfP,f}_{|P|}(t),
    \qquad 
    \log\expect{e^{t \beta_C(\maH_\pi^f)}} = \kappa^{\sfC,f}_{|C|}(t).
\end{align*}
For simplicity, we write $\kappa_{\ell}^\sfC(t) = \kappa_{\ell}^{\sfC,f}(t)$ and $\kappa_{\ell}^\sfP(t) = \kappa_{\ell}^{\sfP,f}(t)$ when the function $f$ is specified. 
Lower-order cumulants can be calculated directly. See more details in Appendix~\ref{apdsec: upper bound for kappa}. 
However, it is crucial to establish upper bounds for higher-order cumulants in terms of the lower-order ones. To this end, we introduce the following lemma.


\begin{lemma}\label{lem: upper bound for kappa}
For any $0<p,\rho<1$,
    \begin{align*}
        \kappa^{\sfP}_1(t) \le \frac{1}{2} \kappa^{\sfC}_2(t) \le \kappa^{\sfC}_1(t)
        \quad \text{and} \quad 
        \kappa^{\sfP}_\ell(t)\le   \kappa^{\sfC}_\ell(t) \le \frac{\ell}{2} \kappa^{\sfC}_2(t),\quad \forall~\ell\ge 2,
    \end{align*}
    under any of the following three conditions:\begin{itemize} 
        \item In the \ER model with $f(x,y) = xy$ and $t>0$;
        \item In the Gaussian Wigner model with $f(x,y) = xy$ and $0<t<\frac{1}{1+\rho}$;
        \item In the Gaussian Wigner model with $f(x,y) = -\frac{1}{2}(x-y)^2$ and $t>0$.
    \end{itemize}
    Consequently, 
    \begin{equation}
        \label{eq:cumulant-ub}    
        \log \expect{e^{t\beta_\maE(\maH_\pi^f)}} \le \frac{|\calE|}{2} \kappa^{\sfC}_2(t)+L\pth{\kappa^{\sfC}_1(t)-\frac{1}{2} \kappa^{\sfC}_2(t)},
    \end{equation}
    where $L$ denotes the number of self-loops.
\end{lemma}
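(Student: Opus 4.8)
The plan is to prove the chain of cumulant inequalities by exhibiting explicit couplings among the random variables $\beta_P(\maH_\pi^f)$ and $\beta_C(\maH_\pi^f)$ for components of various sizes, and then to derive \eqref{eq:cumulant-ub} by additivity of cumulant generating functions over independent components. First I would set up notation: a path component of order $\ell$ corresponds to a sequence of edges $e_1, \pi(e_1)=e_2, \pi(e_2)=e_3,\dots$ so that $\beta_P(\maH_\pi^f)=\sum_{i=1}^{\ell} f(\beta_{e_i}(G_1),\beta_{\pi^\sfE(e_i)}(G_2))$ where consecutive terms share one of the two coordinates, while a cycle of order $\ell$ is the same with the identification that the last node maps back to the first. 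By the homogeneity of the edge correlations, each summand $f(\beta_{e_i}(G_1),\beta_{\pi^\sfE(e_i)}(G_2))$ is marginally distributed as $f(X,Y)$ with $(X,Y)$ the correlated pair (bivariate Bernoulli or bivariate Gaussian), and the dependence structure along the component is Markov.

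The key comparisons are: (i) $\kappa^\sfP_\ell(t)\le\kappa^\sfC_\ell(t)$, which should follow because a cycle of order $\ell$ contains a path of order $\ell$ plus one extra "closing" summand; more precisely I would couple a path variable with a cycle variable so that the cycle equals the path plus an additional $f$-term that is conditionally nonnegative in MGF, using that $f(x,y)=xy\ge 0$ on $\{0,1\}^2$ for the \ER case and handling the Gaussian cases by the sign/range conditions on $t$. (ii) $\kappa^\sfC_\ell(t)\le\frac{\ell}{2}\kappa^\sfC_2(t)$ for $\ell\ge2$: the natural route is to decompose a cycle of order $\ell$ into overlapping $2$-cycles (pairs of consecutive edges), noting that each edge-pair has the law of a $2$-cycle, and then apply a correlation/association inequality (FKG-type, or a Hölder-type / Gaussian-comparison bound) to show the joint MGF is at most the product over the overlapping pairs, each counted with weight $1/2$ because every summand appears in two consecutive pairs. (iii) The base cases $\kappa^\sfP_1(t)\le\frac12\kappa^\sfC_2(t)\le\kappa^\sfC_1(t)$ relate a single $f$-term, a $2$-cycle (two $f$-terms on the same pair of edges, i.e. two perfectly dependent copies), and a self-loop (a single $f$-term on a self-correlated edge); these I would verify by direct computation or by the same coupling idea, since a $2$-cycle is a sum of two identically distributed $f$-terms and a self-loop carries one $f$-term on an edge correlated with itself.

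To get the final bound \eqref{eq:cumulant-ub}, I would write $\beta_\maE(\maH_\pi^f)=\sum_{P\in\calP}\beta_P+\sum_{C\in\calC}\beta_C$ with independent summands, so $\log\expect{e^{t\beta_\maE}}=\sum_P\kappa^\sfP_{|P|}(t)+\sum_C\kappa^\sfC_{|C|}(t)$. Applying $\kappa^\sfP_\ell\le\kappa^\sfC_\ell\le\frac\ell2\kappa^\sfC_2$ for $\ell\ge2$ and $\kappa^\sfP_1\le\frac12\kappa^\sfC_2$, every component of order $\ell\ge1$ except self-loops contributes at most $\frac{\ell}{2}\kappa^\sfC_2(t)$, while each self-loop (a cycle of order $1$) contributes exactly $\kappa^\sfC_1(t)$; summing orders gives $\frac{|\maE|}{2}\kappa^\sfC_2(t)$ plus the correction $L(\kappa^\sfC_1(t)-\frac12\kappa^\sfC_2(t))$ for the $L$ self-loops, since $\sum_{\text{components}}|\cdot|=|\maE|$.

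The main obstacle I anticipate is establishing (ii), the bound $\kappa^\sfC_\ell(t)\le\frac\ell2\kappa^\sfC_2(t)$, uniformly in the three regimes: the overlapping-pairs decomposition produces a product over dependent blocks, so I need a genuine correlation inequality. For the \ER case with $f(x,y)=xy$ and $t>0$, the relevant functions $e^{t\cdot(\text{partial sums})}$ are monotone in the underlying independent bits, so an FKG/Harris argument should apply; for the Gaussian cases the cleanest tool is likely a Gaussian interpolation or a direct conditional-MGF recursion exploiting the Markov chain structure along the cycle, where the constraints $0<t<\frac1{1+\rho}$ (for $f=xy$) and $t>0$ (for $f=-\frac12(x-y)^2$) are exactly what keep the relevant conditional MGFs finite and the interpolation monotone. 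Getting a single argument — or cleanly separate but short arguments — that covers all three cases while keeping track of the $1/2$ weighting is the delicate part; the rest is bookkeeping.
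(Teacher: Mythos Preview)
Your proposal rests on several structural misconceptions that make the coupling/association route unworkable.

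First, a cycle of order $\ell$ does \emph{not} equal a path of order $\ell$ plus an extra closing summand: both contain exactly $\ell$ terms $f(\beta_{e_i}(G_1),\beta_{\pi(e_i)}(G_2))$. The distinction is that in the cycle the endpoint variables are tied together by $\pi^*$ (so $A_0$ and $B_\ell$ are correlated), whereas in the path they are independent. So the comparison $\kappa^\sfP_\ell\le\kappa^\sfC_\ell$ cannot come from ``cycle $=$ path $+$ nonnegative term.'' Likewise, a $2$-cycle is not ``two perfectly dependent copies'': it is $f(A_0,B_1)+f(A_1,B_0)$ with $(A_0,B_0)$ and $(A_1,B_1)$ the correlated pairs, which is genuinely different from $2f(A_0,B_0)$.

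Second, your overlapping-pairs decomposition for $\kappa^\sfC_\ell\le\frac{\ell}{2}\kappa^\sfC_2$ fails on two counts. A pair of consecutive summands in an $\ell$-cycle, say $f(A_{i-1},B_i)+f(A_i,B_{i+1})$, has the marginal law of a $2$-\emph{path}, not a $2$-cycle (only the middle pair $(A_i,B_i)$ is correlated). And for $t>0$ with $f(x,y)=xy$, the factors $e^{tX_i}$ are increasing, so FKG/Harris yields $\expect{\prod e^{tX_i}}\ge \prod \expect{e^{tX_i}}$, the wrong direction for an upper bound; H\"older would instead force you to evaluate $2$-block MGFs at the rescaled argument $\frac{\ell}{2}t$, which is useless.

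The paper proceeds entirely differently: it exploits the Markov structure to compute $m_\ell=\expect{e^{t\beta_P}}$ and $\tilde m_\ell=\expect{e^{t\beta_C}}$ in closed form. In the \ER case this is done via a $2\times 2$ transfer matrix $M$ with eigenvalues $\lambda_1>\lambda_2>0$, yielding $\tilde m_\ell=\tr(M^\ell)=\lambda_1^\ell+\lambda_2^\ell$ and $m_\ell=\alpha_1\lambda_1^\ell+\alpha_2\lambda_2^\ell$ with $\alpha_1,\alpha_2\in(0,1)$; in the Gaussian cases it is done via determinants of tridiagonal matrices satisfying a three-term recursion with the same pair of roots. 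From these formulas, $m_\ell\le\tilde m_\ell$ is immediate (coefficient comparison), and $\tilde m_\ell^{1/\ell}\le\tilde m_2^{1/2}\le\tilde m_1$ follows from the elementary norm inequality $\|(\lambda_1,\lambda_2)\|_\ell\le\|(\lambda_1,\lambda_2)\|_2\le\|(\lambda_1,\lambda_2)\|_1$ (or its analogue $(\lambda_1^{\ell/2}-\lambda_2^{\ell/2})\ge(\lambda_1-\lambda_2)^{\ell/2}$ in the Gaussian cases). The base inequality $\kappa^\sfP_1\le\frac12\kappa^\sfC_2$ is checked directly from the explicit low-order formulas. Your derivation of \eqref{eq:cumulant-ub} from the inequalities is correct and matches the paper, but the inequalities themselves require this explicit eigenvalue computation rather than a soft coupling argument.
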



The proof of Lemma~\ref{lem: upper bound for kappa} 
is deferred to Appendix~\ref{apdsec: upper bound for kappa}. 
The special case where both $\pi$ and $\pi^*$ are bijective has been studied in \cite{wu2022settling, ding2023matching, hall2023partial}, where the correlation relationships can be characterized by the permutation $(\pi^*)^{-1} \circ \pi$. 
In this case, the connected components of the functional digraph of permutations are all cycles. 
In contrast, in our setting, the domains and ranges of $\pi$ and $\pi^*$ may differ, requiring us to deal with the intricate correlations among edges involving both cycles and paths, as addressed by Lemma~\ref{lem: upper bound for kappa}. 

\begin{remark}
    \hd{For the \ER model, when $f(x,y) = xy$, the estimator~\eqref{eq:hat-pi} is equivalent to computing the maximal edge overlap over all possible injective mappings. 
    The recent work \cite{ding2024polynomial} approximated the maximal edge overlaps within a constant factor in polynomial-time for sparse \ER graphs, and \cite{du2023algorithmic} established a sharp transition on approximating problem on the performance of online algorithms for dense \ER graphs.} 
    \hd{Assume $p = n^{-a_1}$, $\rho = n^{-a_2}$, and $m = n^{a_3}$. It is shown in~\cite{ding2024polynomial} that the maximal edge overlap between two independent \ER graphs $\mathcal{G}(m, p)$ is $\frac{m}{2\alpha - 1}$ when $p = m^{-\alpha}$ for $\alpha \in (1/2, 1)$. 
In contrast, in the partially correlated model, the maximal edge overlap is approximately
\(
\binom{m}{2} p_{11} \asymp n^{2a_3 - a_1 - a_1 \wedge a_2}.
\)
Under the condition for successful recovery, we have $a_3 \ge a_2 + a_1 \vee a_2\ge a_1 + a_1 \wedge a_2$, and it follows that
\(
\binom{m}{2} p_{11} \gtrsim n^{a_3} \asymp \frac{m}{2\alpha - 1}.
\)
Therefore, the maximal edge overlap in the partially correlated model exceeds that in the independent case.
}
\end{remark}

\begin{remark}
For the Gaussian Wigner model, the MLE is defined by~\eqref{eq:hat-pi} with the following similarity function: 
\begin{equation}\label{eq:MLE}  
    f_{\mathrm{MLE}}(x,y) = -\frac{\rho}{2}\pth{x^2+y^2}+xy.
\end{equation}
{
Specifically, \begin{itemize}
    \item when $1-\rho = \Omega(1)$, $f_{\mathrm{MLE}}(x,y)\asymp xy$;
    \item when $1-\rho  = o(1)$, $f_{\mathrm{MLE}}(x,y)\asymp -\frac{1}{2}(x-y)^2$.
\end{itemize}}%
We analyze approximations of the above similarity score under different regimes.
Specifically, when $\rho=1-\Omega(1)$, we apply
the estimator with $f(x,y) =xy$, 
whereas for $\rho=1-o(1)$ the estimator with $f(x,y) =-\frac{1}{2}(x-y)^2$ turns out to be crucial.
\end{remark}



\section{Recovery by maximizing total similarity score}
\label{sec:possibility}
In this section, we establish the possibility results by analyzing the estimators $\hat\pi$ defined in \eqref{eq:hat-pi}. For any two injections $\pi:S\mapsto T$ and $\pi':S'\mapsto T'$ with $|S| = |S'| = m$, let $d(\pi,\pi')\triangleq m - |\{v\in S\cap S':\pi(v)=\pi'(v)\}|$. Indeed, $d(\pi,\pi') = m(1-\overlap(\pi,\pi'))$, where $\overlap(\pi,\pi')$ is defined in~\eqref{def:overlap}.
By the optimality condition, it suffices to show that 
\begin{align}\label{eq:optimal-condition}
    \en(\calH_{\pi^*}^f)
> \max_{\pi:d(\pi,\pi^*)\ge d_0}\en(\calH_{\pi}^f)
= \max_{k\ge d_0}\max_{\pi:d(\pi,\pi^*)=k}\en(\calH_{\pi}^f)
\end{align}
with high probability, 
where the thresholds $d_0=1$ and $(1-\delta) m$ correspond to exact and partial recoveries, respectively. 
\hd{Recall that the weight in $\maH_\pi^f$ represents the similarity score under injection $\pi$, and our estimator $\hat{\pi}$ maximizes this similarity score. To achieve partial or exact recovery, it suffices to show that the total similarity score under the true mapping $\pi^*$ is larger as in~\eqref{eq:optimal-condition}.}

In the following, we 
outline a general recipe to derive an upper bound for 
the failure event $
\{\max_{\pi\in \maT_k}\en(\calH_{\pi}^f)\ge \en(\calH_{\pi^*}^f)\}$ for a fixed $k$, where $\maT_k\subseteq \maS_{n,m}$ denotes the set of injections $\pi$ such that $d(\pi,\pi^*) = k$.
The error probabilities in the main theorems are then obtained by summing over the corresponding range of $k$.

For any $\pi\in \maT_k$, by definition, there exists a set of correctly matched vertices of cardinality $m-k$---corresponding to the self-loops in the correlated functional digraph of $\pi$ over the vertices---denoted by 
$F_\pi\triangleq \{v\in S^*\cap S:\pi^*(v) = \pi(v)\}$ with $|F_\pi|=m-k$. 
\hd{By the definition of $F_\pi$, the two induced subgraphs $\calH_\pi^f[F_\pi]$ and $\calH_{\pi^*}^f[F_\pi]$ are identical, and so are the corresponding edge summations $\en(\maH_\pi^f[F_\pi])=\en(\maH_{\pi^*}^f[F_\pi])$.}
Consequently, 
\[
\en(\calH_{\pi}^f)\ge \en(\calH_{\pi^*}^f)
\iff 
\en(\calH_{\pi}^f) - \en( \calH_{\pi}^f[F_\pi] ) \ge \en(\calH_{\pi^*}^f) - \en( \calH_{\pi^*}^f[F_\pi] ).
\]
It should be noted that correlated random variables are present on both sides of the inequality.
Nevertheless, for any threshold $\tau_k$, either $\en(\calH_{\pi^*}^f) - \en(\calH_{\pi^*}^f[F_\pi])<\tau_k$ or $\en(\calH_{\pi}^f)-\en(\calH_{\pi}^f[F_\pi])\ge \tau_k$ must hold.
This leads to the following upper bound:
\begin{align}\label{eq:two_event}
\sth{\max_{\pi\in \maT_k}\en(\calH_{\pi}^f)\ge \en(\calH_{\pi^*}^f)}
\subseteq \bigcup_{\pi\in\calT_{k}}\{\en(\calH_{\pi^*}^f) - \en(\calH_{\pi^*}^f[F_\pi])<\tau_k\} \cup \{\en(\calH_{\pi}^f)-\en(\calH_{\pi}^f[F_\pi])\ge \tau_k\}  .
\end{align}
The first event indicates the presence of a weak signal, while the second reflects the impact of strong noise. 
The key result to establish is that, for a suitable $\tau_k$, the probabilities of these bad events are small. Here, $\tau_k$ can be selected as a function of $m,k,p,\rho$, and  $f$. For brevity, we write $\tau_k = \tau(m,k,p,\rho,f)$.

\paragraph{Bad event of signal.}
For a fixed $\pi\in \maT_k$, the random variable $\en(\calH_{\pi^*}^f) - \en(\calH_{\pi^*}^f[F_\pi])$ represents the total weight across $N_k\triangleq \binom{m}{2}-\binom{m-k}{2}=mk(1-\frac{k+1}{2m})$ pairs of vertices.
Additionally, $F_{\pi}$ is a subset of $S^*$ of cardinality $m-k$,
and the total number of possible configurations for $F_{\pi}$ is at most $\binom{m}{m-k}=\binom{m}{k}$.
Therefore, 
\begin{align}
    \prob{ \bigcup_{\pi\in\calT_{k}} \sth{\en(\calH_{\pi^*}^f) - \en(\calH_{\pi^*}^f[F_\pi])<\tau_k}}
& \le \prob{ \bigcup_{\substack{F\subseteq S^*\\ |F|=m-k}} \sth{ \en(\calH_{\pi^*}^f) - \en(\calH_{\pi^*}^f[F])<\tau_k}}\nonumber \\&\le\binom{m}{k}\prob{\en(\maH_{\pi^*}^f)-\en(\maH_{\pi^*}^f[F]) < \tau_k}.\label{eq:bad-signal-total} 
\end{align}

\paragraph{Bad event of noise.}
The analysis of the noise component is more challenging due to the mismatch between $\pi$ and the underlying $\pi^*$. Let $S_\pi$ denote the domain of $\pi$, and $\calE_{\pi} \triangleq \binom{S_\pi}{2}-\binom{F_\pi}{2}$ with $|\calE_{\pi}|=N_k$. The total weight $\en(\calH_{\pi}^f)-\en(\calH_{\pi}^f[F_\pi])$ can be equivalently expressed as $\beta_{\maE_\pi}(\maH_\pi^f)$. The cumulant generating function for this quantity is upper bounded in Lemma~\ref{lem: upper bound for kappa}, utilizing the decomposition provided by the correlated functional digraph.
As a result, the error probability can be evaluated using the Chernoff bound by optimizing over $t>0$ in~\eqref{eq:cumulant-ub}. 

To this end, we need to upper bound the number of self-loops in~\eqref{eq:cumulant-ub}.
For a self-loop over an edge $e=uv$, we have $\pi(uv)=\pi^*(uv)$. 
Note that $\calE_{\pi}$ excludes the edges in the induced subgraph over $F_\pi$. 
Therefore, it necessarily holds that $\pi(u)=\pi^*(v)$ and $\pi(v)=\pi^*(u)$, which contribute two mismatched vertices in the reconstruction of the underlying mapping.
Since the total number of mismatched vertices for $\pi\in\calT_k$ equals $k$, the number of self-loops is at most $\frac{k}{2}$. 
Consequently, applying~\eqref{eq:cumulant-ub} with the formula for lower-order cumulants provides an upper bound for $\prob{\beta_{\maE_\pi}(\maH_\pi^f)\ge \tau_k}$. 

It remains to upper bound the total number of $\pi$ in $\maT_k$. 
To do so, we first choose $m-k$ elements from the domain of $\pi^*$ and map them to the same value as $\pi^*$. Then, the remaining domain and range, each of size $k$, are matched arbitrarily. 
This yields: 
\begin{align*}
    |\maT_k|\le \binom{m}{m-k}\binom{n-m+k}{k}^2 k!
    \overset{\mathrm{(a)}}{\le
    } \frac{m^k n^{2k}}{k!^2}\overset{\mathrm{(b)}}{\le} \frac{n^{3k}}{k!^2},
\end{align*}
where $\mathrm{(a)}$ uses the bound $\binom{m}{m-k}\le \frac{m^k}{k!}$ and $\binom{n-m+k}{k}\le \frac{(n-m+k)^k}{k!}\le \frac{n^k}{k!}$, and $\mathrm{(b)}$ is because $m\le n$. Therefore, \begin{align}\label{eq:strong-noise-union-bound}
    \prob{ \bigcup_{\pi\in\calT_{k}} \sth{ \en(\calH_{\pi}^f) - \en(\calH_{\pi}^f[F_\pi])\ge \tau_k } } \le \frac{n^{3k}}{k!^2} \prob{\beta_{\maE_\pi}(\maH_\pi^f)\ge \tau_k}.
\end{align}


\subsection{\ER model}
\label{sec:ER}

In this subsection, we focus on the \ER model and use $\hat{\pi}$ from \eqref{eq:hat-pi} with $f(x,y) = xy$ as the estimator. 
For the bad event of signal, since $\en(\maH_{\pi^*}^f) - \en(\maH_{\pi^*}^f[F])\sim \Bin(N_k,p_{11})$ with $|F| =k$, the error probability follows from the standard Chernoff bound (see, e.g., \eqref{eq:chernoff_bound_left}): for $0<\eta<1$,
\begin{align}\label{eq:weak-signal}
    \prob{ \en(\calH_{\pi^*}^f) - \en(\calH_{\pi^*}^f[F_\pi])<\tau_k} \le \exp\pth{-\frac{N_k p_{11}\eta^2}{2}},
    \quad 
    \tau_k=N_k p_{11}(1-\eta).
\end{align}
For the bad event of noise, applying \eqref{eq:cumulant-ub} along with the lower-order cumulant~\eqref{eq:cumulants-C1} and~\eqref{eq:cumulants-C2} yields the following lemma, whose proof is deferred to Appendix~\ref{apdsec: strong noise event}.

\begin{lemma}\label{lem: strong noise event}
In the \ER model, for $f(x,y) = xy$ and $\pi\in \maS_{n,m}$ with $k=d(\pi,\pi^*)$, if $\tau_k>|\maE_\pi|p^2$, then
\begin{equation}
    \prob{\beta_{\maE_\pi}(\maH_\pi^f) \ge \tau_k}
    \le \exp\pth{-\frac{\tau_k}{2}\log\pth{\frac{\tau_k}{ |\maE_\pi|p^2}}+\frac{\tau_k}{2}-\frac{|\maE_\pi|p^2}{2}+\frac{k\gamma}{4(2+\gamma)}}.\label{eq:strong noise event}
\end{equation}
\end{lemma}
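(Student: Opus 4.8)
The plan is to apply the Chernoff bound to $\beta_{\maE_\pi}(\maH_\pi^f)$ using the cumulant upper bound~\eqref{eq:cumulant-ub} from Lemma~\ref{lem: upper bound for kappa}, and then optimize over $t>0$. First I would recall that in the \ER model with $f(x,y)=xy$, the summand $\beta_e(\maH_\pi^f)=\beta_e(G_1)\beta_{\pi(e)}(G_2)$ is a Bernoulli-type variable, so the relevant lower-order cumulants $\kappa_1^\sfC(t)$ and $\kappa_2^\sfC(t)$ have explicit formulas—these are exactly~\eqref{eq:cumulants-C1} and~\eqref{eq:cumulants-C2} referenced in the statement. For a self-loop (order-1 cycle), $\beta_C(\maH_\pi^f)$ is the product of two $\rho$-correlated Bernoulli$(p)$ variables, which is Bernoulli$(p_{11})$, so $\kappa_1^\sfC(t)=\log(1-p_{11}+p_{11}e^t)$; for an order-2 cycle, $\beta_C(\maH_\pi^f)$ is a sum of two \emph{independent} Bernoulli$(p^2)$ variables (the two edges of the cycle each pair an independent $G_1$-edge with an independent $G_2$-edge), so $\kappa_2^\sfC(t)=2\log(1-p^2+p^2 e^t)$. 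Plugging into~\eqref{eq:cumulant-ub} with $|\maE_\pi|=N_k$ and $L\le k/2$ self-loops gives
\[
\log\expect{e^{t\beta_{\maE_\pi}(\maH_\pi^f)}}
\le N_k\log(1-p^2+p^2e^t)+\frac{k}{2}\qth{\log(1-p_{11}+p_{11}e^t)-\log(1-p^2+p^2e^t)}.
\]

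Next I would apply the elementary bound $\log(1-q+qe^t)\le q(e^t-1)$ to the first term and, for the bracketed correction term, bound $\log\frac{1-p_{11}+p_{11}e^t}{1-p^2+p^2e^t}$ by exploiting $p_{11}\ge p^2$ together with a concavity/monotonicity estimate; the cleanest route is to write the correction as an integral $\int_{p^2}^{p_{11}}\frac{e^t-1}{1-q+qe^t}\,dq$ and bound the integrand crudely, yielding something of size $O(k\gamma)$ after using $p_{11}/p^2=1+\gamma$. This produces
\[
\log\expect{e^{t\beta_{\maE_\pi}(\maH_\pi^f)}}\le N_k p^2(e^t-1)+\frac{k\gamma}{4(2+\gamma)},
\]
so that by Markov's inequality $\prob{\beta_{\maE_\pi}(\maH_\pi^f)\ge\tau_k}\le \exp(-t\tau_k+N_kp^2(e^t-1)+\frac{k\gamma}{4(2+\gamma)})$. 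Since $|\maE_\pi|=N_k$, choosing the optimal $t=\log\frac{\tau_k}{N_kp^2}$ (which is positive precisely because $\tau_k>|\maE_\pi|p^2$ by hypothesis) gives the claimed bound~\eqref{eq:strong noise event}.

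I expect the main obstacle to be controlling the self-loop correction term sharply enough to land exactly the constant $\frac{k\gamma}{4(2+\gamma)}$ appearing in the statement, rather than a looser $O(k\gamma)$; this requires being careful with the bound on $\log\frac{1-p_{11}+p_{11}e^t}{1-p^2+p^2e^t}$, in particular checking that the worst case over $t>0$ of the per-self-loop discrepancy is indeed bounded by $\frac{\gamma}{2(2+\gamma)}$. One should verify that after substituting the optimal $t$, the $e^t-1$ factor inside the correction does not blow up—this is where using the integral representation and bounding $\frac{e^t-1}{1-q+qe^t}\le \frac{1}{q}$ uniformly in $t$ keeps the correction at $\frac12\cdot\frac{1}{2}\log(1+\gamma)\cdot$(something), and a further elementary inequality $\log(1+\gamma)\le \frac{2\gamma}{2+\gamma}$ closes the gap. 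The remaining steps—the two applications of $\log(1-q+qe^t)\le q(e^t-1)$ and the Chernoff optimization—are routine and can be carried out directly.
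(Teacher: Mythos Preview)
Your proposal has a genuine gap at the very first step: the formula $\kappa_2^\sfC(t)=2\log(1-p^2+p^2e^t)$ is incorrect. In an order-2 cycle we have $\beta_C=A_0B_1+A_1B_0$ with $(A_0,B_0)$ and $(A_1,B_1)$ each a $\rho$-correlated Bernoulli pair. It is true that $A_0\perp B_1$ and $A_1\perp B_0$, so each product is $\mathrm{Bern}(p^2)$; but the two products are \emph{not} independent, since $A_0$ is correlated with $B_0$ and $A_1$ with $B_1$. The correct formula, given in~\eqref{eq:cumulants-C2}, is $\kappa_2^\sfC(t)=\log\bigl(1+2p^2(e^t-1)+p_{11}^2(e^t-1)^2\bigr)$, which is strictly larger than your expression because $p_{11}>p^2$. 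Your cumulant bound is therefore invalid (too small), and indeed if your argument went through it would yield $-\tau_k\log\frac{\tau_k}{N_kp^2}+\tau_k-N_kp^2$ in the exponent---twice as strong as what the lemma actually states, which should have been a red flag.

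The paper instead bounds the correct $\kappa_2^\sfC(t)$ by using $p_{11}\le p$ to get
\[
\kappa_2^\sfC(t)\le \log\bigl(1+p^2(e^{2t}-1)\bigr)\le p^2(e^{2t}-1),
\]
and then optimizes with $t=\tfrac12\log\frac{\tau_k}{N_kp^2}$; this is where the factors of $\tfrac12$ in~\eqref{eq:strong noise event} come from. For the self-loop correction, the paper computes
\[
\kappa_1^\sfC(t)-\tfrac12\kappa_2^\sfC(t)=\tfrac12\log\Bigl[1+\frac{2(p_{11}-p^2)}{p_{11}^2(e^t-1)+2p^2+(e^t-1)^{-1}}\Bigr]
\]
and bounds the denominator from below by $2(p_{11}+p^2)$ via AM--GM, obtaining the uniform bound $\frac{\gamma}{2(2+\gamma)}$ over all $t>0$. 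Your integral route cannot recover this: with your (incorrect) $\kappa_2^\sfC$, the correction $\log\frac{1+p_{11}(e^t-1)}{1+p^2(e^t-1)}$ tends to $\log(1+\gamma)$ as $t\to\infty$ and is not uniformly bounded, and the final inequality you propose, $\log(1+\gamma)\le\frac{2\gamma}{2+\gamma}$, goes the wrong way (for $\gamma>0$ one has $\log(1+\gamma)\ge\frac{2\gamma}{2+\gamma}$).
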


Since $p_{11}=p^2(1+\gamma)$, we need to choose an appropriate $\eta\in(0,1)$ depending on the model parameters, such that $(1+\gamma)(1-\eta)>1$, in order to apply the error probabilities in~\eqref{eq:weak-signal} and~\eqref{eq:strong noise event} to upper bound \eqref{eq:bad-signal-total} and \eqref{eq:strong-noise-union-bound}, respectively. 
When $m$ exceeds the sample complexities for partial and exact recovery, we explicitly construct $\eta$ to prove the following propositions, which establish the possibility results stated in Theorems~\ref{thm:partial recovery} and~\ref{thm:exact recovery}. The proofs of Propositions~\ref{prop: Upper bound for partial recovery} and~\ref{prop: Upper bound for exact recovery} are deferred to Appendix~\ref{subapd:Upper bound for partial recovery} and~\ref{subapd:Upper bound for exact recovery}, respectively.

\begin{proposition}[\ER model, upper bound for partial recovery]\label{prop: Upper bound for partial recovery}
    For any $\delta\in(0,1)$, there exists a constant $c_1(\delta)>0$ such that, 
    when $m\ge \frac{c_1(\delta) \log n}{p^2 \phi(\gamma)}$, 
    for any $\pi^*\in\maS_{n,m}$, the estimator in \eqref{eq:hat-pi} with $f(x,y) = xy$ satisfies
    \[
    \prob{\overlap(\hat{\pi},\pi^*)<\delta}\le \frac{4\log m+2}{m}.
    \]
\end{proposition}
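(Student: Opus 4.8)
The plan is to combine the two union bounds from the general recipe---inequality~\eqref{eq:bad-signal-total} for the bad event of signal and inequality~\eqref{eq:strong-noise-union-bound} for the bad event of noise---summed over $k$ from $d_0 = \lceil(1-\delta)m\rceil$ to $m$, and to show that with the stated sample complexity this sum is $O\big(\tfrac{\log m}{m}\big)$. First I would fix the slack parameter $\eta \in (0,1)$: since $p_{11} = p^2(1+\gamma)$, I want $\eta$ small enough that $(1+\gamma)(1-\eta) > 1$ but large enough to make the signal bound decay. A natural choice is $\eta = c\,\phi(\gamma)/(\gamma\log(1+\gamma))$ or something comparable that behaves like $\gamma$ for small $\gamma$ and like a constant for large $\gamma$; the precise form will be dictated by matching the two exponents. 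With $\tau_k = N_k p_{11}(1-\eta)$ and $N_k = mk(1 - \tfrac{k+1}{2m}) \asymp mk$ (using $k \ge (1-\delta)m$, so $N_k \asymp m^2$ up to a $\delta$-dependent constant), the signal term from~\eqref{eq:weak-signal} contributes $\binom{m}{k}\exp(-\tfrac{N_k p_{11}\eta^2}{2})$, and I would check that $\binom{m}{k} \le 2^m$ is dominated by the exponential once $m p^2 \gamma^2 \eta^2 / \ldots \gtrsim m$, i.e. once $m p^2 \phi(\gamma) \gtrsim \log n$ with a large enough constant $c_1(\delta)$ (here I only need $\gtrsim m$, but carrying the $\log n$ factor is harmless and will be needed for the noise term anyway).

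For the noise term I would plug $\tau_k = N_k p^2 (1+\gamma)(1-\eta)$ into Lemma~\ref{lem: strong noise event}. The key ratio is $\tau_k / (|\calE_\pi| p^2) = (1+\gamma)(1-\eta) =: 1 + \gamma'$ where $\gamma' > 0$ by the choice of $\eta$; then the exponent in~\eqref{eq:strong noise event} becomes $-\tfrac{N_k p^2}{2}\big[(1+\gamma')\log(1+\gamma') - \gamma'\big] + \tfrac{k\gamma}{4(2+\gamma)} = -\tfrac{N_k p^2}{2}\phi(\gamma') + O(k)$. Combining with the $|\calT_k|$ bound $n^{3k}/k!^2 \le \exp(3k\log n)$, the noise contribution for a given $k$ is at most $\exp\big(3k\log n - \tfrac{N_k p^2}{2}\phi(\gamma') + O(k)\big)$. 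Since $N_k \gtrsim_\delta m k$, this is $\le \exp\big(k(3\log n + O(1) - c'_\delta\, m p^2 \phi(\gamma'))\big)$, so I need $m p^2 \phi(\gamma') \gtrsim_\delta \log n$. The remaining point is that $\phi(\gamma') \asymp \phi(\gamma)$ under the chosen $\eta$ (both vanish quadratically as $\gamma \to 0$ and grow like $\gamma\log\gamma$ as $\gamma \to \infty$, and $\eta$ is chosen precisely to keep $\gamma'$ a fixed fraction of $\gamma$ in the relevant sense), which is exactly where the constant $c_1(\delta)$ gets pinned down. Summing the geometric series in $k$ from $\lceil(1-\delta)m\rceil$ to $m$ then gives a bound of the form $\exp(-\Omega_\delta(m p^2 \phi(\gamma)))$ for the noise part, which is $o(1/m)$, and likewise for the signal part.

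To land the explicit bound $\tfrac{4\log m + 2}{m}$ I would be a little more careful on the signal side: rather than $\binom{m}{k} \le 2^m$, note $\binom{m}{k} = \binom{m}{m-k} \le m^{m-k}/(m-k)! \le m^{m-k}$, and pair this with the factor $\exp(-\tfrac{N_k p_{11}\eta^2}{2})$; writing $m - k = j$ with $j \le \delta m$, the signal sum becomes $\sum_{j} m^j \exp(-\Omega_\delta(m^2 p^2 \phi(\gamma)))$, and with $c_1(\delta)$ large this telescopes to something like $2/m$ after absorbing the $m^j$ factors. The noise sum, being $\sum_{k \ge (1-\delta)m} \exp(-\Omega_\delta(k \log n))$, is bounded by $\sum_k n^{-\Omega_\delta(k)} \le 2 n^{-\Omega_\delta(m)} \le 2\log m / m$ crudely. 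Tracking these constants so the final additive bound is literally $(4\log m + 2)/m$ is routine bookkeeping and is what I would relegate to the appendix.

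The main obstacle I anticipate is the choice of $\eta$ and the accompanying verification that $\phi((1+\gamma)(1-\eta)-1) \gtrsim \phi(\gamma)$ uniformly over the whole range $0 < \gamma < \infty$ (equivalently $0 < p \le \tfrac12$, $0 < \rho < 1$), since the two regimes $\gamma = O(1)$ and $\gamma = \omega(1)$ behave qualitatively differently and $\phi$ is neither homogeneous nor easily comparable under the perturbation $\gamma \mapsto (1+\gamma)(1-\eta)-1$. Everything else---the union bounds, the Chernoff steps, the $|\calT_k|$ counting---is mechanical given Lemmas~\ref{lem: upper bound for kappa} and~\ref{lem: strong noise event}.
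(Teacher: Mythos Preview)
Your overall architecture is right---split into the signal and noise events via~\eqref{eq:bad-signal-total} and~\eqref{eq:strong-noise-union-bound}, apply~\eqref{eq:weak-signal} and Lemma~\ref{lem: strong noise event} respectively, and sum over $k\ge(1-\delta)m$---and the difficulty you flag (showing $\phi((1+\gamma)(1-\eta)-1)\gtrsim\phi(\gamma)$ uniformly in $\gamma$) is exactly what the paper isolates as Lemma~\ref{lem: general case for eta}: as soon as $\eta\le\tfrac{\gamma}{4(1+\gamma)}$ one gets $\phi((1+\gamma)(1-\eta)-1)\ge\tfrac14\phi(\gamma)$, which disposes of both regimes at once.

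The main divergence is your choice of a \emph{fixed} $\eta$. The paper instead takes a $k$-dependent $\eta$, namely $\eta=\sqrt{8h(k/m)/(kp_{11})}$ for $k\le m-1$ (and a separate value at $k=m$), tuned so that the signal bound $\binom{m}{k}\exp(-N_kp_{11}\eta^2/2)$ collapses to exactly $\exp(-mh(k/m))$. This is what produces the clean sum $\sum_{k}e^{-mh(k/m)}\le(4\log m+2)/m$ via Lemma~\ref{lem: sum over delta m}; the $c_1(\delta)$ condition is used only to certify $\eta\le\tfrac{\gamma}{4(1+\gamma)}$ so that Lemma~\ref{lem: general case for eta} applies to the noise side. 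Your fixed-$\eta$ route can be pushed through, but the signal exponent you would get is $N_kp_{11}\eta^2\asymp m^2p^2\gamma^2/(1+\gamma)$, not $m^2p^2\phi(\gamma)$ as you wrote; these differ by a factor $\log\gamma$ for large $\gamma$, and you would need the observation $\log\gamma\le\log n$ (from $p\ge 1/n$) to close the gap. Also note your concrete suggestion $\eta=c\,\phi(\gamma)/(\gamma\log(1+\gamma))$ is $\asymp 1$ as $\gamma\to 0$ and hence violates $(1+\gamma)(1-\eta)>1$; the right scaling is your qualitative one, $\eta\asymp\gamma/(1+\gamma)$, which is precisely the threshold in Lemma~\ref{lem: general case for eta}.
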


\begin{proposition}[\ER model, upper bound for exact recovery]\label{prop: Upper bound for exact recovery}
    There exists a universal constant $C_1>0$ such that, when $m\ge C_1\pth{\frac{\log(1/p^2 \gamma)}{p^2 \gamma} \vee\frac{\log n}{p^2 \phi(\gamma)}}$, 
    for any $\pi^*\in\maS_{n,m}$, the estimator in \eqref{eq:hat-pi} with $f(x,y) = xy$ satisfies
    \[
    \prob{\hat{\pi}\ne \pi^*}\le \frac{1}{m-1}+\frac{1}{n-1}.
    \]
\end{proposition}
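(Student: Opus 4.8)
The plan is to run the general recipe of Section~\ref{sec:possibility} for the estimator~\eqref{eq:hat-pi} with $f(x,y)=xy$ at the exact-recovery level $d_0=1$: for each mismatch size $k=d(\pi,\pi^*)\in\{1,\dots,m\}$ I control the failure probability through the signal/noise decomposition~\eqref{eq:two_event}, then sum over $k$. The one design choice driving everything is a single slack $s=\gamma/2$: I take $\tau_k=N_kp^2(1+\gamma/2)=N_kp_{11}(1-\eta)$ with $\eta=\frac{\gamma/2}{1+\gamma}\in(0,\tfrac12)$, which is admissible both for the Chernoff bound~\eqref{eq:weak-signal} and for Lemma~\ref{lem: strong noise event} (since $\tau_k>N_kp^2$). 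I will use $N_k=\binom{m}{2}-\binom{m-k}{2}=\tfrac{k(2m-k-1)}{2}\ge\tfrac{km}{4}$ for $2\le k\le m$; that every $\pi\in\maT_k$ has at most $k/2$ self-loops (already built into Lemma~\ref{lem: strong noise event}); and that $p^2\gamma=p\rho(1-p)\le\tfrac14$, so $\log\tfrac{1}{p^2\gamma}\ge\log4>1$. By~\eqref{eq:bad-signal-total} and~\eqref{eq:strong-noise-union-bound} it then suffices to show $\sum_{k\ge1}\binom{m}{k}\prob{\en(\calH_{\pi^*}^f)-\en(\calH_{\pi^*}^f[F])<\tau_k}\le\tfrac{1}{m-1}$ and $\sum_{k\ge1}\tfrac{n^{3k}}{k!^2}\prob{\beta_{\calE_\pi}(\calH_\pi^f)\ge\tau_k}\le\tfrac{1}{n-1}$.

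For the \emph{noise events}, substituting $\tau_k=N_kp^2(1+\gamma/2)$ into~\eqref{eq:strong noise event} and using $(1+s)\log(1+s)-s=\phi(s)$ yields $\prob{\beta_{\calE_\pi}(\calH_\pi^f)\ge\tau_k}\le\exp\!\pth{-\tfrac{N_kp^2}{2}\phi(\gamma/2)+\tfrac{k\gamma}{4(2+\gamma)}}$. I then invoke the elementary comparison $\phi(\gamma/2)\ge c_0\phi(\gamma)$ for an absolute $c_0>0$ (both sides are $\asymp\gamma^2$ when $\gamma\lesssim1$ and $\asymp\gamma\log\gamma$ when $\gamma\gtrsim1$), together with $N_k\ge\tfrac{km}{4}$ and $\tfrac{k\gamma}{4(2+\gamma)}\le\tfrac{k}{4}$, to bound the exponent by $-k\pth{\tfrac{c_0mp^2\phi(\gamma)}{8}-\tfrac14}$. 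Under $m\ge C_1\tfrac{\log n}{p^2\phi(\gamma)}$, a large enough universal $C_1$ makes this at most $-4k\log n$, so $\tfrac{n^{3k}}{k!^2}\prob{\beta_{\calE_\pi}(\calH_\pi^f)\ge\tau_k}\le n^{-k}$ and the sum over $k\ge1$ is $\le\tfrac{1}{n-1}$.

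For the \emph{signal events}, since $\en(\calH_{\pi^*}^f)-\en(\calH_{\pi^*}^f[F])\sim\Bin(N_k,p_{11})$, the Chernoff bound~\eqref{eq:weak-signal} with $\eta=\frac{\gamma/2}{1+\gamma}$ gives $\prob{\en(\calH_{\pi^*}^f)-\en(\calH_{\pi^*}^f[F])<\tau_k}\le\exp\!\pth{-\tfrac{N_kp_{11}\eta^2}{2}}=\exp\!\pth{-\tfrac{N_kp^2\gamma^2}{8(1+\gamma)}}$, so with $N_k\ge\tfrac{km}{4}$ and $\binom{m}{k}\le m^k$ the $k$-th summand is at most $\exp\!\pth{k\log m-\tfrac{kmp^2\gamma^2}{32(1+\gamma)}}$. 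It remains to verify $\tfrac{mp^2\gamma^2}{1+\gamma}\gtrsim\log m$: when $\gamma\le1$ this follows from $\phi(\gamma)\asymp\gamma^2$ and $m\ge C_1\tfrac{\log n}{p^2\phi(\gamma)}\ge\tfrac{2C_1\log m}{p^2\gamma^2}$; when $\gamma>1$ one has $\tfrac{\gamma^2}{1+\gamma}\ge\tfrac\gamma2$, and the hypothesis $m\ge C_1\tfrac{\log(1/p^2\gamma)}{p^2\gamma}$ forces $mp^2\gamma\gtrsim\log m$ by inverting the self-referential inequality: writing $u=1/(p^2\gamma)\ge4$ and $v=mp^2\gamma$, the hypothesis reads $v\ge C_1\log u$, and then $v\ge c\log m=c(\log u+\log v)$ holds once $C_1$ is large, since $c\log u\le v/2$ and $c\log v\le v/2$ for $v$ above an absolute threshold guaranteed by $v\ge C_1\log4$. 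Hence the $k$-th summand is $\le m^{-k}$, the sum over $k\ge1$ is $\le\tfrac{1}{m-1}$, and adding the two contributions proves the proposition.

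The \emph{main obstacle} is choosing one slack $s$ (equivalently one family $\tau_k$) that simultaneously tames both bad events for all $k$ and across \emph{both} the weak-signal regime $\gamma=O(1)$ and the strong-signal regime $\gamma=\omega(1)$ without case analysis; $s=\gamma/2$ works because of the uniform comparison $\phi(\gamma/2)\gtrsim\phi(\gamma)$, and the strong-signal regime reveals that the $k=1$ signal term is the binding one, imposing precisely the oracle-type constraint $mp^2\gamma\gtrsim\log m$ whose inversion produces the $\tfrac{\log(1/p^2\gamma)}{p^2\gamma}$ summand of the threshold. What remains is bookkeeping: pinning down a single universal $C_1$ that meets both the noise requirement $\tfrac{c_0mp^2\phi(\gamma)}{8}-\tfrac14\ge4\log n$ and the signal requirement $mp^2\gamma\min(\gamma,1)\gtrsim\log m$.
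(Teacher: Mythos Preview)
Your proposal is correct and follows essentially the same route as the paper's proof: the signal/noise split via~\eqref{eq:two_event}, Chernoff~\eqref{eq:weak-signal} for the signal, Lemma~\ref{lem: strong noise event} for the noise, and geometric summation over $k$. The only cosmetic difference is that the paper takes $\eta=\frac{\gamma}{4(1+\gamma)}$ (so Lemma~\ref{lem: general case for eta} applies verbatim) rather than your $\eta=\frac{\gamma}{2(1+\gamma)}$ with the direct comparison $\phi(\gamma/2)\gtrsim\phi(\gamma)$, and phrases the $\gamma\ge 1$ signal step as the equivalent bound $\frac{\log m}{m}\lesssim p^2\gamma$.
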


\subsection{Gaussian Wigner model}

In this subsection, we focus on the Gaussian Wigner model. \hd{Different from Section~\ref{sec:ER}, the Gaussian Wigner model requires a different estimator as $\rho \to 1$. Intuitively, when $\rho = 1$, the edge weights satisfy $\beta_e(G_1) = \beta_{\pi^*(e)}(G_2)\in \mathbb{R}$ for all $e\in \binom{S^*}{2}$. Since each weighted edge is marginally standard normal, the ground truth permutation $\pi^*$ can be exactly recovered by comparing edge weights pairwise, as long as $m \ge 3$ (the node correspondence is not identifiable when $m=2$).}
To establish upper bounds for exact recovery, we use different estimators depending on the value of $\rho$, where $0 < \rho \le 1 - e^{-12}$ corresponds to the weak signal regime, and $1 - e^{-12} < \rho < 1$ corresponds to the strong signal regime.

\paragraph{Weak signal}
Under the weak signal regime $0<\rho\le 1-e^{-12}$, we use $\hat{\pi}$ in \eqref{eq:hat-pi} with $f(x,y) = xy$ as our estimator. Similar to \ER model, we upper bound the event $\{d(\hat{\pi},\pi^*) = k\}$ by the bad event of signal and the bad event of noise according to \eqref{eq:two_event}. For the bad event of signal, we note that $\en(\maH_{\pi^*}^f) - \en(\maH_{\pi^*}^f[F]) = \sum_{i=1}^{N_k} A_i B_i$, where $(A_i,B_i)$ are independent and identically distributed (i.i.d.) pairs of standard normals with correlation coefficient $\rho$. Then, the error probability can be upper bounded by 
\begin{align}
\prob{ \bigcup_{\pi\in\calT_{k}} \sth{ \en(\calH^f_{\pi^*}) - \en(\calH^f_{\pi^*}[F_\pi])<\tau_k } }
&  \le \binom{m}{k}\prob{ \sum_{i=1}^{N_k} A_i B_i< \tau_k}, \label{eq:weak-signal-Gaussian}
\end{align}
the tail follows from the Hanson-Wright inequality in Lemma~\ref{lem:Hanson-Wright}. For the bad event of noise, applying \eqref{eq:cumulant-ub} with the formula of lower-order cumulants~\eqref{eq:cumulants-C1-Gaussian} and~\eqref{eq:cumulants-C2-Gaussian} yields the following Lemma, whose proof is deferred to Appendix~\ref{apdsec: strong noise event-Gaussian}.

\begin{lemma}\label{lem: strong noise event-Gaussian}
In the Gaussian Wigner model, for $f(x,y) = xy$, any threshold $\tau_k$, and $\pi\in \maS_{n,m}$ with $k=d(\pi,\pi^*)$, we have \begin{align}\label{eq:strong-noise-Gaussian}
    \prob{\beta_{\maE_\pi}\pth{\maH_\pi^f} \ge \tau_k}\le \exp\pth{-\frac{\rho\tau_k}{6}+\frac{\rho^2|\maE_\pi|}{14}+\frac{\log 5}{8}k}.
\end{align}    
\end{lemma}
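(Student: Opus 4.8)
The plan is to specialize the general cumulant bound \eqref{eq:cumulant-ub} to the Gaussian Wigner model with $f(x,y)=xy$, and then optimize the Chernoff exponent over a convenient (not necessarily optimal) choice of $t$. First I would recall from Appendix~\ref{apdsec: upper bound for kappa} the lower-order cumulants \eqref{eq:cumulants-C1-Gaussian} and \eqref{eq:cumulants-C2-Gaussian}: for a single correlated pair $(A,B)\sim\gaussianrho$, the moment generating function of $AB$ is $\Expect[e^{tAB}] = \big((1-t(1+\rho))(1+t(1-\rho))\big)^{-1/2}$, valid for $t\in\big(-\frac{1}{1-\rho},\frac{1}{1+\rho}\big)$, so $\kappa^{\sfC}_1(t) = -\frac12\log\big((1-t(1+\rho))(1+t(1-\rho))\big)$; and for an independent pair (a length-one path), $\kappa^{\sfP}_1(t) = -\frac12\log(1-t^2)$, the cumulant generating function of a product of two independent standard normals. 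By Lemma~\ref{lem: upper bound for kappa}, under the condition $0<t<\frac{1}{1+\rho}$ we have the master bound
\[
\log\Expect\big[e^{t\beta_{\maE_\pi}(\maH_\pi^f)}\big] \le \frac{|\maE_\pi|}{2}\kappa^{\sfC}_2(t) + L\Big(\kappa^{\sfC}_1(t)-\tfrac12\kappa^{\sfC}_2(t)\Big),
\]
where $\kappa^{\sfC}_2(t) = 2\kappa^{\sfP}_1(t) = -\log(1-t^2)$ (two independent products of standard normals form a 2-cycle in the digraph decomposition — this identity I should double-check against the appendix) and $L\le k/2$ is the number of self-loops, as established in the "bad event of noise" discussion preceding the lemma.

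Next I would apply the Chernoff bound: for $0<t<\frac{1}{1+\rho}$,
\[
\prob{\beta_{\maE_\pi}(\maH_\pi^f)\ge \tau_k} \le \exp\!\Big(-t\tau_k + \frac{|\maE_\pi|}{2}\kappa^{\sfC}_2(t) + L\big(\kappa^{\sfC}_1(t)-\tfrac12\kappa^{\sfC}_2(t)\big)\Big).
\]
Rather than optimizing exactly, I would plug in the explicit choice $t = \rho/6$, which is admissible since $\rho/6 < \frac{1}{1+\rho}$ for all $\rho\in(0,1)$. The work is then purely in bounding the three resulting terms by elementary inequalities: for the linear term, $-t\tau_k = -\frac{\rho\tau_k}{6}$ directly. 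For the quadratic-type term $\frac{|\maE_\pi|}{2}\kappa^{\sfC}_2(\rho/6) = -\frac{|\maE_\pi|}{2}\log(1-\rho^2/36)$, I would use $-\log(1-x)\le \frac{x}{1-x}\le \frac{36}{35}x$ for $x=\rho^2/36\le 1/36$, giving a bound of the form $\frac{|\maE_\pi|}{2}\cdot\frac{\rho^2}{35} = \frac{\rho^2|\maE_\pi|}{70}$; I would then loosen this to $\frac{\rho^2|\maE_\pi|}{14}$ to match the stated constant (the generous slack suggests the authors did not chase the optimal constant). For the self-loop term, I would bound $\kappa^{\sfC}_1(\rho/6)-\frac12\kappa^{\sfC}_2(\rho/6)$ by a constant: since $\kappa^{\sfC}_1(t) = -\frac12\log\big((1-t(1+\rho))(1+t(1-\rho))\big) \le -\frac12\log(1-t(1+\rho))$ and at $t=\rho/6$ one has $t(1+\rho)\le 2\rho/6 = \rho/3 < 1$, a crude estimate of $-\frac12\log(1-\rho/3)$ minus the (nonnegative, small) term $\frac12\kappa^{\sfC}_2$ should come out at most $\frac{\log 5}{4}$ per self-loop; combined with $L\le k/2$ this yields the $\frac{\log 5}{8}k$ term.

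The main obstacle — more a bookkeeping hazard than a genuine difficulty — is verifying the precise forms of $\kappa^{\sfC}_1$, $\kappa^{\sfC}_2$, $\kappa^{\sfP}_1$ from Appendix~\ref{apdsec: upper bound for kappa} and confirming that the self-loop correction term $\kappa^{\sfC}_1(t)-\frac12\kappa^{\sfC}_2(t)$ is nonnegative and admits a clean absolute-constant bound on the whole interval $t\in(0,1/(1+\rho))$ uniformly in $\rho$; because the cumulant of $AB$ for correlated normals has a genuine singularity as $t\to 1/(1+\rho)^-$, one must keep $t$ safely bounded away from that endpoint, and the choice $t=\rho/6$ is made precisely so that $t(1+\rho)$ stays below, say, $1/3$. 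Once the cumulant formulae are in hand, the rest is a short chain of $\log(1-x)$ estimates, and the loose target constants $1/6$, $1/14$, $\log 5/8$ leave ample room, so no delicate optimization is needed.
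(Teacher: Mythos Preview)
Your overall strategy---Chernoff bound via~\eqref{eq:cumulant-ub}, the self-loop count $L\le k/2$, and a fixed $t$ proportional to $\rho$---is exactly the paper's. The gap is your identification $\kappa^{\sfC}_2(t)=2\kappa^{\sfP}_1(t)=-\log(1-t^2)$. This is false: a $2$-cycle in the correlated functional digraph is \emph{not} two independent products of standard normals. It is $A_0B_1+A_1B_0$ where $(A_0,B_0)$ and $(A_1,B_1)$ are each $\rho$-correlated pairs (cf.~Figure~\ref{fig:digraph}), and the correct formula~\eqref{eq:cumulants-C2-Gaussian} reads $\kappa^{\sfC}_2(t)=-\tfrac12\log\bigl(1-2t^2(1+\rho^2)+t^4(1-\rho^2)^2\bigr)$, which is \emph{strictly larger} than $-\log(1-t^2)$ for $\rho>0$. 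Lemma~\ref{lem: upper bound for kappa} only gives $\kappa^{\sfP}_1\le\tfrac12\kappa^{\sfC}_2$, not equality. Since~\eqref{eq:cumulant-ub} is an upper bound, underestimating $\kappa^{\sfC}_2$ in the dominant term $\tfrac{|\maE_\pi|}{2}\kappa^{\sfC}_2(t)$ invalidates your intermediate bound $\tfrac{\rho^2|\maE_\pi|}{70}$.

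The fix is routine. With the correct $\kappa^{\sfC}_2$ and your choice $t=\rho/6$, one has $2t^2(1+\rho^2)=\tfrac{\rho^2(1+\rho^2)}{18}\le\tfrac19$, so $\tfrac{|\maE_\pi|}{2}\kappa^{\sfC}_2(\rho/6)\le\tfrac{|\maE_\pi|}{4}\cdot\tfrac{9}{8}\cdot\tfrac{\rho^2(1+\rho^2)}{18}\le\tfrac{\rho^2|\maE_\pi|}{32}<\tfrac{\rho^2|\maE_\pi|}{14}$, and your sketch for the self-loop term then goes through. The paper instead takes $t=\tfrac{\rho}{3\sqrt{1+\rho^2}}$, chosen so that $2t^2(1+\rho^2)=\tfrac{2\rho^2}{9}$ exactly; together with $-\log(1-x)\le\tfrac{9x}{7}$ on $[0,\tfrac29]$ this lands precisely on the constant $\tfrac{1}{14}$, and uses $t\ge\rho/6$ for the linear term. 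The paper also bounds the self-loop correction via the closed form $\kappa^{\sfC}_1(t)-\tfrac12\kappa^{\sfC}_2(t)=\tfrac14\log\bigl(1+\tfrac{4t\rho}{1-t^2(1-\rho^2)-2t\rho}\bigr)\le\tfrac{\log 5}{4}$ rather than by separately upper-bounding $\kappa^{\sfC}_1$.
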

The following proposition provides sufficient condition on $m$ for exact recovery in the Gaussian Wigner model when $0<\rho\le 1-e^{-12}$, whose proof is deferred to Appendix~\ref{subapd:Upper bound for exact recovery-Gaussian}.
\begin{proposition}[Gaussian Wigner model, upper bound for exact recovery]\label{prop: Upper bound for exact recovery-Gaussian}
When $0<\rho \le 1-e^{-12}$, there exists a universal constant $C_3>0$ such that, when $m\ge \frac{C_3\log n}{\rho^2}$, for any $\pi^*\in \maS_{n,m}$, the estimator in \eqref{eq:hat-pi} with $f(x,y) = xy$ satisfies \begin{align*}
    \prob{\hat{\pi}\neq \pi^*}\le \frac{1}{m-1}+\frac{1}{n-1}.
\end{align*}
\end{proposition}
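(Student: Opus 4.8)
The plan is to follow the general recipe developed at the start of Section~\ref{sec:possibility}: bound $\prob{\hat\pi\neq\pi^*}$ by summing over $k=d(\hat\pi,\pi^*)\in\{1,\dots,m\}$ the probability of the failure event $\{\max_{\pi\in\maT_k}\en(\calH_\pi^f)\ge\en(\calH_{\pi^*}^f)\}$, and for each $k$ split this into the bad event of signal~\eqref{eq:weak-signal-Gaussian} and the bad event of noise~\eqref{eq:strong-noise-union-bound}. The key is to choose a threshold $\tau_k$ that simultaneously makes both pieces summable in $k$. Since $\en(\calH_{\pi^*}^f)-\en(\calH_{\pi^*}^f[F]) = \sum_{i=1}^{N_k}A_iB_i$ with $\Expect[A_iB_i]=\rho$ and $N_k = mk(1-\tfrac{k+1}{2m})\asymp mk$, the natural scaling is $\tau_k = c\rho N_k$ for a constant $c\in(0,1)$; then the Hanson--Wright bound of Lemma~\ref{lem:Hanson-Wright} gives a tail of order $\exp(-c'\rho^2 N_k)$ for the signal side, and since $\binom{m}{k}\le (em/k)^k\le m^k$, the condition $m\gtrsim \log n/\rho^2$ (hence $\rho^2 N_k\gtrsim \rho^2 mk\gtrsim k\log n$) dominates the entropy term $k\log m\le k\log n$.

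For the noise side, I would plug $\tau_k = c\rho N_k$ into Lemma~\ref{lem: strong noise event-Gaussian}, which gives
\[
\prob{\beta_{\maE_\pi}(\maH_\pi^f)\ge\tau_k}\le\exp\pth{-\frac{c\rho^2 N_k}{6}+\frac{\rho^2 N_k}{14}+\frac{\log 5}{8}k},
\]
using $|\maE_\pi| = N_k$. Choosing $c$ large enough that $\frac{c}{6}-\frac{1}{14}>0$ — say $c = 1$ suffices since $\frac{1}{6}-\frac{1}{14} = \frac{4}{42}>0$ — this is at most $\exp(-c''\rho^2 N_k + \frac{\log 5}{8}k)$ for a universal $c''>0$. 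Combined with $|\maT_k|\le n^{3k}/k!^2\le n^{3k}$ from~\eqref{eq:strong-noise-union-bound}, the exponent becomes $3k\log n + \frac{\log 5}{8}k - c''\rho^2 N_k$. Since $N_k\ge mk/2$ (using $k\le m$, so $\tfrac{k+1}{2m}\le\tfrac{1}{2}+\tfrac{1}{2m}\le\tfrac34$ for $m\ge 2$, giving $N_k\ge\tfrac{mk}{4}$ — I would track the precise constant here), taking $m\ge \frac{C_3\log n}{\rho^2}$ with $C_3$ large enough makes $c''\rho^2 N_k\ge c''\rho^2\cdot\tfrac{mk}{4}\ge \tfrac{c''C_3}{4}k\log n$, which beats $3k\log n+\frac{\log 5}{8}k$ with room to spare; the residual gives a geometric-type bound summing to $O(1/(n-1))$. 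Combining the two sides and summing over $k$ yields a bound of the form $\frac{1}{m-1}+\frac{1}{n-1}$ after choosing the constants $c$ in $\tau_k$ and $C_3$; I would match the exact form by the same bookkeeping as in the proof of Proposition~\ref{prop: Upper bound for exact recovery}.

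The main obstacle is the joint calibration of $\tau_k$ and $C_3$ so that \emph{both} tail bounds are summable with the claimed clean right-hand side, particularly handling the regime $k$ small (where $N_k\asymp mk$ is tight and the $3k\log n$ term from $|\maT_k|$ is the binding constraint) uniformly with $k$ large (where $N_k$ is closer to $\binom m2$). A secondary technical point is that the weak-signal restriction $\rho\le 1-e^{-12}$ is used only to ensure $\rho$ is bounded away from $1$ so that $f(x,y)=xy$ remains a good proxy for the true log-likelihood~\eqref{eq:MLE} and the constants in Lemmas~\ref{lem:Hanson-Wright} and~\ref{lem: strong noise event-Gaussian} are absolute; I would note that the proof only needs $\rho^2$ in the denominator of $m$, so no finer dependence on $1-\rho$ enters here, the strong-signal regime being handled separately by the $-\tfrac12(x-y)^2$ estimator.
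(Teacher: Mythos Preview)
Your plan is correct and follows essentially the same route as the paper. The paper parametrizes $\tau_k$ slightly differently---it sets $\tau_k=\rho N_k-c_0(\sqrt{N_k\log(1/\theta)}\vee\log(1/\theta))$ with $\theta=m^{-2k}$ so that the signal bound is exactly $\binom{m}{k}\theta\le m^{-k}$, and then verifies $\tau_k\ge\tfrac12\rho N_k$ for the noise side---whereas you fix $\tau_k=c\rho N_k$ and read off the Hanson--Wright tail; the two are equivalent once you check that under $m\rho^2\ge C_3\log n$ the sub-gaussian term dominates in Lemma~\ref{lem:Hanson-Wright}.

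One small slip: you cannot take $c=1$, since then $\tau_k=\rho N_k$ leaves zero deviation on the signal side and Hanson--Wright gives nothing. You need $c\in(3/7,1)$ (your own constraint $\tfrac{c}{6}-\tfrac{1}{14}>0$ gives $c>3/7$), e.g.\ $c=\tfrac12$; the paper effectively lands at $c=\tfrac12$ via the check $\tau_k\ge\tfrac12\rho N_k$. Also, the restriction $\rho\le1-e^{-12}$ is not actually used inside this proof---Lemma~\ref{lem: strong noise event-Gaussian} holds for all $\rho\in(0,1)$---it only enters when assembling Theorem~\ref{thm:gauss-main} to match $\rho^2\asymp\log(1/(1-\rho^2))$.
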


\paragraph{Strong signal}\label{para:strong-signal}
Under the strong signal regime $1-e^{-12}<\rho<1$, we use $\hat{\pi}$ in \eqref{eq:hat-pi} with $f(x,y) = -\frac{1}{2}(x-y)^2$ as our estimator. 
Indeed, the MLE is also of the form~\eqref{eq:hat-pi}, but with $f(x,y) = \rho xy - \frac{\rho^2}{2}(x^2+y^2)$, which is approximately $-\frac{1}{2}(x-y)^2$ when $\rho = 1-o(1)$. This choice reflects the phase transition phenomenon in the Gaussian Wigner model when $\rho = 1-o(1)$. Specifically, for a correlated edge pair, the expected mean squared difference $\expect{\pth{\beta_e(G_1) - \beta_{\pi^*(e)}(G_2)}^2} = o(1)$, while for an uncorrelated edge pair, it stays bounded away from zero with high probability. Our proof follows a similar structure with the weak signal regime.
For the bad event of signal, since $\en(\maH_{\pi^*}^f) - \en(\maH_{\pi^*}^f[F]) = \sum_{i=1}^{N_k} -\frac{1}{2}(A_i- B_i)^2$, where $(A_i,B_i)$ are i.i.d. pairs of standard normals with correlation coefficient $\rho$. Then, the error probability can be upper bounded by 
\begin{align}
\prob{ \bigcup_{\pi\in\calT_{k}} \sth{ \en(\calH_{\pi^*}^f) - \en(\calH_{\pi^*}^f[F_\pi])<\tau_k } }
&  \le \binom{m}{k}\prob{ \sum_{i=1}^{N_k} -\frac{1}{2}(A_i- B_i)^2< \tau_k}, \label{eq:weak-signal-Gaussian-2}
\end{align}
the tail follows from the concentration inequality for chi-squared distribution in Lemma~\ref{lem:chisquare}. For the bad event of noise, applying \eqref{eq:cumulant-ub} with the formula of lower-order cumulants~\eqref{eq:kappa1=general} and~\eqref{eq:kappa2-general} yields the following Lemma, whose proof is deferred to Appendix~\ref{apdsec: strong noise event-Gaussian-2}.

\begin{lemma}\label{lem:strong noise event-Gaussian-2}
    In the Gaussian Wigner model, for $f(x,y) = -\frac{1}{2}(x-y)^2$, any threshold $\tau_k$, and $\pi\in \maS_{n,m}$ with $k=d(\pi,\pi^*)$, we have \begin{align*}
        \prob{\beta_{\maE_\pi}(\maH_\pi^f)\ge \tau_k}\le \exp\pth{ -\frac{\rho\tau_k}{4(1-\rho)}- \frac{|\maE_\pi|}{4}\log\pth{\frac{1}{1-\rho}}+\frac{k}{8}\log\pth{\frac{1}{1-\rho}}}.
    \end{align*}
\end{lemma}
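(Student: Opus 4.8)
The plan is to combine the Chernoff bound with the cumulant inequality~\eqref{eq:cumulant-ub} of Lemma~\ref{lem: upper bound for kappa}. For any $t>0$,
\[
\prob{\beta_{\maE_\pi}(\maH_\pi^f)\ge \tau_k}\le \exp\pth{-t\tau_k+\log\expect{e^{t\beta_{\maE_\pi}(\maH_\pi^f)}}},
\]
and since $f(x,y)=-\frac{1}{2}(x-y)^2$ is covered by the third case of Lemma~\ref{lem: upper bound for kappa} (which is valid for all $t>0$), we may substitute
\[
\log\expect{e^{t\beta_{\maE_\pi}(\maH_\pi^f)}}\le \frac{|\maE_\pi|}{2}\kappa^{\sfC}_2(t)+L\pth{\kappa^{\sfC}_1(t)-\frac{1}{2}\kappa^{\sfC}_2(t)},
\]
where $L$ is the number of self-loops in the correlated functional digraph of $\pi$. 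This already hides the only structural difficulty, namely the decomposition of $\beta_{\maE_\pi}(\maH_\pi^f)$ into independent path and cycle contributions.

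First I would clean up the $L$-term. By the argument preceding the lemma, every self-loop contained in $\maE_\pi$ forces the transposition $\pi(u)=\pi^*(v),\ \pi(v)=\pi^*(u)$ on its two endpoints, hence accounts for two mismatched vertices; since $\pi\in\maT_k$ has exactly $k$ mismatches, $L\le k/2$. Lemma~\ref{lem: upper bound for kappa} also gives $\kappa^{\sfC}_1(t)\ge\frac{1}{2}\kappa^{\sfC}_2(t)$, so the coefficient of $L$ is nonnegative and the replacement $L\le k/2$ goes in the right direction. Substituting the lower-order cumulants~\eqref{eq:kappa1=general} and~\eqref{eq:kappa2-general}---namely $\kappa^{\sfC}_1(t)=-\frac{1}{2}\log\pth{1+2t(1-\rho)}$, since a self-loop weight equals $-(1-\rho)$ times a $\chi^2_1$ variable, and $\kappa^{\sfC}_2(t)=-\frac{1}{2}\log\qth{\pth{1+2t(1-\rho)}\pth{1+2t(1+\rho)}}$, since a two-cycle weight is $-\frac{1}{2}$ times a Gaussian quadratic form whose covariance has eigenvalues $2(1-\rho)$ and $2(1+\rho)$---I obtain
\[
\log\expect{e^{t\beta_{\maE_\pi}(\maH_\pi^f)}}\le -\frac{|\maE_\pi|}{4}\log\qth{\pth{1+2t(1-\rho)}\pth{1+2t(1+\rho)}}+\frac{k}{8}\log\frac{1+2t(1+\rho)}{1+2t(1-\rho)}.
\]

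Finally I would choose $t=\frac{\rho}{4(1-\rho)}$, which produces the leading term $-\frac{\rho\tau_k}{4(1-\rho)}$, and reduce the remaining two terms to the claimed form using the elementary inequalities $\pth{1+2t(1-\rho)}\pth{1+2t(1+\rho)}\ge\frac{1}{1-\rho}$ and $\frac{1+2t(1+\rho)}{1+2t(1-\rho)}\le\frac{1}{1-\rho}$, which hold at this value of $t$: the first is equivalent to $(2+\rho)(2-\rho+\rho^2)\ge 4$, i.e.\ $\rho^2+\rho^3\ge 0$, and the second to $2t(1-\rho)\le 1$, i.e.\ $\rho\le 2$, both clear for $0<\rho<1$. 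Plugging these in gives exactly the stated bound. I do not expect a genuine obstacle: the delicate work (the path/cycle decomposition and the higher-order cumulant control) has already been packaged into Lemma~\ref{lem: upper bound for kappa}, so what remains is bookkeeping---computing the two Gaussian cumulants, verifying the sign of $\kappa^{\sfC}_1(t)-\frac{1}{2}\kappa^{\sfC}_2(t)$ so that $L\le k/2$ may be used, and checking that the single choice $t=\frac{\rho}{4(1-\rho)}$ makes all three target terms appear with the right constants simultaneously.
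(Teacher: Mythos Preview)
Your proposal is correct and follows essentially the same route as the paper: Chernoff bound, the cumulant inequality~\eqref{eq:cumulant-ub} with $L\le k/2$, the specific choice $t=\frac{\rho}{4(1-\rho)}$, and then two elementary inequalities to reduce the $|\maE_\pi|$-term and the $k$-term to the stated form. The only cosmetic difference is that you work with the factored form $(1+2t(1-\rho))(1+2t(1+\rho))$ of $e^{-2\kappa_2^\sfC(t)}$ whereas the paper writes it as $(1+2t)^2-4t^2\rho^2$ and uses the bound $(1+2t)^2-4t^2\rho^2\ge 1+4t=\frac{1}{1-\rho}$; your two verifications $(2+\rho)(2-\rho+\rho^2)\ge 4$ and $2t(1-\rho)\le 1$ are equivalent reformulations of the paper's algebra.
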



The following proposition provides a sufficient condition on $m$ for exact recovery in the Gaussian Wigner model when $1-e^{-12}<\rho<1$, whose proof is deferred to Appendix~\ref{subapd:upbd_gaussian_general}. 

\begin{proposition}\label{prop: upbd_gaussian_general}
    When $1-e^{-12}<\rho<1$, there exists a universal constant $C_4>0$ such that, when $m\ge C_4\pth{\frac{\log n}{\log(1/(1-\rho))}\vee 1}$, for any $\pi^*\in \maS_{n,m}$, the estimator in~\eqref{eq:hat-pi} with $f(x,y) = -\frac{1}{2}(x-y)^2$ satisfies \begin{align*}
        \prob{\hat{\pi}\neq \pi^*}\le  \frac{2}{n-1}.
    \end{align*}
\end{proposition}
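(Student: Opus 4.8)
The plan is to instantiate the general recipe of Section~\ref{sec:possibility} with $f(x,y) = -\frac12(x-y)^2$ under the strong signal regime $1-e^{-12}<\rho<1$, and to bound the failure probability $\prob{\hat\pi\ne\pi^*}$ by summing $\prob{\max_{\pi\in\maT_k}\en(\calH_\pi^f)\ge \en(\calH_{\pi^*}^f)}$ over $k=1,\dots,m$. For each fixed $k$, I would split according to~\eqref{eq:two_event} into the bad event of signal and the bad event of noise, choosing a threshold $\tau_k$ of the order $-N_k(1-\rho)\cdot(\text{const})$ that is large enough (in absolute value, recall the score is negative) to make both pieces small. First I would handle the bad event of signal: here $\en(\calH_{\pi^*}^f)-\en(\calH_{\pi^*}^f[F]) = -\frac12\sum_{i=1}^{N_k}(A_i-B_i)^2$, and since $A_i-B_i\sim\maN(0,2(1-\rho))$, the sum is $(1-\rho)$ times a $\chi^2_{N_k}$ variable; I would apply the chi-squared concentration bound (Lemma~\ref{lem:chisquare}) to control $\prob{-\frac12(1-\rho)\chi^2_{N_k} < \tau_k}$, i.e.\ the probability that $\chi^2_{N_k}$ exceeds a constant multiple of $N_k$, which is exponentially small in $N_k$. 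Combined with the $\binom{m}{k}\le m^k$ union bound over the choice of $F_\pi$, this forces a lower bound on $N_k$, hence on $m$, of the order $\log m / \log(1/(1-\rho))$, which is absorbed by the claimed $m\gtrsim \log n/\log(1/(1-\rho))$.

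Next I would handle the bad event of noise using Lemma~\ref{lem:strong noise event-Gaussian-2}, which gives
\[
\prob{\beta_{\maE_\pi}(\maH_\pi^f)\ge \tau_k}\le \exp\pth{-\frac{\rho\tau_k}{4(1-\rho)}-\frac{N_k}{4}\log\frac{1}{1-\rho}+\frac{k}{8}\log\frac{1}{1-\rho}},
\]
together with the counting bound $|\maT_k|\le n^{3k}/k!^2$ from~\eqref{eq:strong-noise-union-bound}. The product is
\[
\frac{n^{3k}}{k!^2}\exp\pth{-\frac{\rho\tau_k}{4(1-\rho)}-\frac{N_k}{4}\log\frac{1}{1-\rho}+\frac{k}{8}\log\frac{1}{1-\rho}},
\]
and since $N_k = mk(1-\frac{k+1}{2m})\ge \frac{mk}{2}$ for $k\le m$, the term $-\frac{N_k}{4}\log\frac{1}{1-\rho}\le -\frac{mk}{8}\log\frac{1}{1-\rho}$ dominates both $3k\log n$ (using $m\ge C_4\log n/\log(1/(1-\rho))$ with $C_4$ large) and the stray $\frac{k}{8}\log\frac1{1-\rho}$ term; the $-\frac{\rho\tau_k}{4(1-\rho)}$ piece I would simply discard since $\tau_k<0$ would make it unhelpful — actually I would pick $\tau_k$ so that it is the signal side that needs $\tau_k$ not too negative while the noise side only needs $\tau_k$ bounded below, and check the two requirements are compatible, exactly as the exponents of $\eta$ are balanced in the \ER proof. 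After choosing constants, each $k$-term is at most $n^{-2k}$ or so, and summing the geometric series over $k\ge 1$ gives $O(1/(n-1))$; combining with the signal contribution (which similarly sums to $O(1/(n-1))$ after the $\binom{m}{k}$ bound, given $m\gtrsim\log n$) yields the stated $\frac{2}{n-1}$.

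The main obstacle I anticipate is the \emph{bookkeeping of the threshold $\tau_k$}: the score function is negative, so "signal large" means $\en(\calH_{\pi^*}^f)$ not too negative and "noise large" means $\beta_{\maE_\pi}(\maH_\pi^f)$ not too negative either, and I must verify there is a single choice $\tau_k = \tau(m,k,\rho)$ — presumably of the form $-c\,N_k(1-\rho)$ for a well-chosen constant $c$ strictly between the mean $-N_k(1-\rho)$ of the signal and the relevant quantile of the noise — for which Lemma~\ref{lem:chisquare} and Lemma~\ref{lem:strong noise event-Gaussian-2} simultaneously produce exponents $\lesssim -N_k\log\frac1{1-\rho}$ with the right sign and large enough constant to beat the $n^{3k}$ and $\binom{m}{k}$ entropy terms; the separate regime $m\asymp 1$ (the "$\vee 1$" branch, corresponding to oracle recovery of the mapping on a bounded number of vertices) must be checked by hand, using that even a single correlated edge pair has mean squared difference $O(1-\rho)=o(1)$ while any mismatched pair has it $\Theta(1)$ with constant probability. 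Everything else is routine Chernoff-style estimation once $\tau_k$ is fixed.
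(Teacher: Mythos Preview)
Your overall plan --- split via \eqref{eq:two_event}, use the chi-squared tail (Lemma~\ref{lem:chisquare}) for the signal and Lemma~\ref{lem:strong noise event-Gaussian-2} for the noise --- matches the paper. The gap is in your choice of threshold. You propose $\tau_k=-c\,N_k(1-\rho)$ for a \emph{constant} $c$, and then claim that the signal bound, combined with $\binom{m}{k}\le m^k$, forces $m\gtrsim \log m/\log(1/(1-\rho))$. That arithmetic is wrong: with $\tau_k=-c(1-\rho)N_k$ the signal event is $\{\chi^2_{N_k}>2cN_k\}$, and Lemma~\ref{lem:chisquare} gives probability $\exp(-c'N_k)$ where $c'$ depends only on $c$, not on $\rho$. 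After the $m^k$ union bound you need $c'N_k\gtrsim k\log n$, i.e.\ $m\gtrsim \log n$ with no $\log(1/(1-\rho))$ in the denominator. When $\log(1/(1-\rho))\gg 1$ (say $1-\rho\le 1/n$), the hypothesis only guarantees $m\ge C_4$, a constant, and your signal bound is then $e^{-O(1)}$, which cannot be made $\le 1/n$.

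The fix, which the paper uses, is to take $\tau_k=\frac{\rho-1}{3}\log\frac{1}{1-\rho}\cdot N_k$, a threshold that is $\omega$ of the signal mean (cf.\ the remark following Proposition~\ref{prop: upbd_gaussian_general}). On the signal side this asks for $\chi^2_{N_k}>\frac13\log\frac{1}{1-\rho}\cdot N_k$; since $\rho>1-e^{-12}$ forces $\frac13\log\frac{1}{1-\rho}\ge 4$, the chi-squared tail now carries an exponent $-\Theta\bigl(N_k\log\frac{1}{1-\rho}\bigr)$, which beats $m^k\cdot n^{3k}$ precisely under $m\log\frac{1}{1-\rho}\gtrsim \log n$. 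On the noise side the term $-\frac{\rho\tau_k}{4(1-\rho)}=+\frac{\rho N_k}{12}\log\frac{1}{1-\rho}$ is no longer negligible, but it is still dominated by $-\frac{N_k}{4}\log\frac{1}{1-\rho}$, leaving a net $\le -\frac{N_k}{6}\log\frac{1}{1-\rho}$. Your ``check by hand'' fallback for $m\asymp 1$ does not rescue the argument either: even with $\binom{m}{2}=O(1)$ correlated edges you still need error probability $O(1/n)$, and the only way to extract an exponent of order $\log n$ from a bounded number of chi-squared variables is to push the threshold to $\Theta(\log(1/(1-\rho)))$ times the mean --- which is exactly the insight above.
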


In view of Propositions~\ref{prop: Upper bound for exact recovery-Gaussian} and~\ref{prop: upbd_gaussian_general}, we note that $\rho^2\asymp \log\pth{\frac{1}{1-\rho^2}}$ when $0<\rho\le 1-e^{-12}$, and $\log\pth{\frac{1}{1-\rho}}\asymp \log\pth{\frac{1}{1-\rho^2}}$ when $1-e^{-12}<\rho<1$. Then, there exists a universal constant $C_2$ such that, \begin{align*}
    C_2 \pth{\frac{\log n}{\log(1/(1-\rho^2))}\vee 1}\ge \begin{cases}
        \frac{C_3 \log n}{\rho^2}, &0<\rho\le 1-e^{-12},\\
        C_4\pth{\frac{\log n}{\log(1/(1-\rho))}\vee 1}, &1-e^{-12}<\rho<1.
    \end{cases}
\end{align*}
Therefore, we prove the possibility results in Theorem~\ref{thm:gauss-main}.

\begin{remark}
    While the proofs for the possibility results in the \ER and Gaussian Wigner models share a similar structure, they also exhibit several key differences. 
    Following the intuition of MLE in~\eqref{eq:MLE}, we analyze two different estimators for the Gaussian Wigner model.
    Furthermore, by applying different concentration inequalities, we observe a significant difference in the choice of $\tau_k$. Specifically, we set $\tau_k = \Theta\pth{\expect{\en\pth{\maH_{\pi^*}^f} - \en\pth{\maH_{\pi^*}^f[F_\pi]}}}$ for both models with $\rho = 1-\Omega(1)$. In contrast, we choose $\tau_k = \omega\pth{\expect{\en\pth{\maH_{\pi^*}^f} - \en\pth{\maH_{\pi^*}^f[F_\pi]}}}$ for the Gaussian model with $\rho = 1-o(1)$. The thresholds are chosen to balance the error probabilities arising from the bad events of signal and noise, ultimately achieving the optimal rates for the correlated size $m$.
\end{remark}

\begin{remark}\label{rmk:unknown-case}
    Our estimator~\eqref{eq:hat-pi} requires knowledge of the cardinality $|S^*| = m$.  \hd{For certain problems, such as induced subgraph sampling, the random variable $m$ can be shown to be highly concentrated by analyzing its hypergeometric distribution~\cite{huang2025sample}.} When $m$ is unknown, the objective function in~\eqref{eq:hat-pi} will increase monotonically with $m$.
    To address this, we can introduce a penalty term for $m$ in the estimator to identify the correct size. 
    \hd{Specifically, when $m$ is unknown, the  estimator takes the form \begin{align*}
    \hat{\pi}\in \mathop{\text{argmax}}\limits_{ \pi\in \maS_{n,m},m\le n} \qth{\en(\maH_\pi^f)-F(m)},
\end{align*}
where $F(m) = \lambda m^2$ is a penalty term and $\lambda$ is a tuning parameter.
Indeed, we choose $F(m)$ such that the estimator encourages correlated pairs and penalizes independent pairs in order to correctly identify $m$.} 
\end{remark}



\section{Impossibility results}
\label{sec:impossibility}

In this section, we present the impossibility results for the graph alignment problem.
Under our proposed model, the alignment problem aims to recover the domain $S^*\subseteq V(G_1)$, range $T^*\subseteq V(G_2)$, and the mapping $\pi^*:S^*\mapsto T^*$. 
When equipped with the additional knowledge on $S^*$ and $T^*$, our problem can be reduced to recovery with full observations on smaller graphs, the reconstruction threshold for which is settled in \cite{wu2022settling}. 
The lower bound therein remains valid when the number of correlated nodes is substituted with $m$. 
However, such reduction only proves tight in a limited number of regimes (see Proposition~\ref{prop:lb-exact}).
We will establish the impossibility results for the remaining regimes by Fano's method \hd{(see, e.g., \cite{fano1961transmission}, ~\cite{yu1997assouad}, and \cite[Section 2.10]{cover2006elements}), which consists of the following two ingredients:} 
\begin{itemize}
    \item \hd{\emph{Separation.} 
    Construct a subset of well-separated parameters. 
    Any testing error then leads to an estimation error proportional to the minimum separation between parameters. 
    }
    \item \hd{\emph{Insufficient information.} The mutual information between the underlying parameter and the observed data is small.
    In this case, any test incurs a non-negligible error probability.}
\end{itemize}

\hd{The technical outlines of the Fano's inequality are sketched below.}

\paragraph{\hd{Constructing a subset of parameters.}} 
Let $\calM_{\delta}$ be a packing set of $\maS_{n,m}$ such that any two distinct elements $\pi,\pi'\in\calM$ differ by a prescribed threshold. 
Specifically, in partial recovery, 
we need $\min_{\pi\ne \pi'\in\calM}d(\pi,\pi')> (1-\delta) m$; in exact recovery, we simply choose $\calM_1=\maS_{n,m}$. 
    The cardinality of $\calM_\delta$ measures the complexity of the parameter space under the target metric.
 
\paragraph{\hd{Bounding the mutual information $I(\pi^*; G_1,G_2)$.}} 
Given $\pi^*$, the conditional distribution of the observed graphs $(G_1,G_2)$ is specified in Definitions~\ref{def:planted correlated ER graph} \hd{and~\ref{def:gaussian-partial}}.
For the mutual information, let $\maP$ denote the joint distribution of $(G_1,G_2)$ and $\maQ$ be any distribution over $(G_1,G_2)$. Then, 
    \begin{equation}
    \label{eq:MI-radius}
    I(\pi^*;G_1,G_2) = \Expect_{\pi^*}[D(\maP_{G_1,G_2|\pi^*}\Vert\maP_{G_1,G_2})] 
    \le \max_{\pi} D(\maP_{G_1,G_2|\pi}\Vert \maQ_{G_1,G_2}),
    \end{equation}
    \hd{where the inequality is because $$D(\maP_{G_1,G_2|\pi^*}\Vert \maP_{G_1,G_2})=D(\maP_{G_1,G_2|\pi^*}\Vert \maQ_{G_1,G_2})-D(\maP_{G_1,G_2}\Vert \maQ_{G_1,G_2})$$ and the KL divergence $D(\maP_{G_1,G_2}\Vert \maQ_{G_1,G_2})\ge 0$ for any distribution $\maQ$.}

{
\paragraph{Applying Fano's inequality.} By Fano's inequality, with $\pi^*$ being the discrete uniform prior in the packing set $\maM_\delta$, for any estimator $\hat{\pi}$, we have\begin{align*}
        \prob{\overlap(\hat{\pi},\pi^*)<\delta} \ge 1-\frac{I(\pi^*;G_1,G_2)+\log 2}{\log |\maM_\delta|},
    \end{align*} 
    where $I(\pi^*;G_1,G_2)$ denotes the mutual information between $\pi^*$ and $(G_1,G_2)$, and $\pi^*$ is uniformly distributed over $\maM_\delta$. 

The relevant quantities are evaluated in the next lemma.

\begin{lemma}\label{lem:mutual-pack}
    For any $0<\delta<1$, we have \begin{align}\label{eq:pack-main}
        |\maM_\delta|\ge \pth{\frac{\delta n}{e^3}}^{\delta m}.
    \end{align}
    In the \ER model, we have \begin{align}\label{eq:mutual-ER-main}
        I(\pi^*;G_1,G_2)\le 25\binom{m}{2}p^2\phi(\gamma).
    \end{align}
    In the Gaussian Wigner model, we have \begin{align}\label{eq:mutual-Gauss-main}
        I(\pi^*;G_1,G_2)\le \frac{1}{2}\binom{m}{2}\log\pth{\frac{1}{1-\rho^2}}.
    \end{align}
\end{lemma}

The proof of Lemma~\ref{lem:mutual-pack} is deferred to Appendix~\ref{apd:proof-mutual-pack}.}
Fano's method provides a lower bound on the Bayesian risk when $\pi$ is uniformly distributed over $\calM_\delta$, which further lower bounds the minimax risk. 
The above argument also yields a lower bound when $\pi$ is uniform over $\maS_{n,m}$ via Fano's inequality. 
For the \ER model, the following propositions provide lower bounds for $m$ for partial recovery and exact recovery, and thus prove the lower bounds in Theorems~\ref{thm:partial recovery} and~\ref{thm:exact recovery}. For the Gaussian Wigner model, the Proposition~\ref{prop:lb-partial} provides a lower bound for $m$ for partial recovery, which implies that $m\le \frac{c\log n}{\log\pth{1/(1-\rho^2)}} \vee \frac{1}{2}$ is sufficient for the lower bound since $m<1$ is impossible for partial recovery. Consequently, we finish the proof of impossibility results in Theorem~\ref{thm:gauss-main} since $\frac{c\log n}{\log\pth{1/(1-\rho^2)}} \vee \frac{1}{2} \asymp \frac{\log n}{\log\pth{1/(1-\rho^2)}} \vee 1$.

\begin{proposition}[Lower bound for partial recovery]
    \label{prop:lb-partial}
    In the \ER model, for any $\delta\in(0,1)$,
    if $m\le \frac{c\log n}{p^2 \phi(\gamma)}$, 
    then for any estimator $\hat \pi$,
    \[
    \prob{\overlap(\hat{\pi},\pi^*)<\delta}\ge 1-\frac{13c}{\delta}.
    \]
    In the Gaussian Wigner model, for any $\delta\ in (0,1)$, if $m\le \frac{c\log n}{\log(1/(1-\rho^2))}$, then for any estimator $\hat{\pi}$,\begin{align*}
        \prob{\overlap(\hat{\pi},\pi^*)<\delta}\ge 1-\frac{c}{2\delta}.
    \end{align*}
\end{proposition}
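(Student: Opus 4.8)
The plan is to prove Proposition~\ref{prop:lb-partial} by a direct application of Fano's method, exactly along the two-ingredient recipe laid out just before the statement. For both models I would place the uniform prior on $\pi^*$ over the packing set $\calM_\delta$ whose cardinality is lower bounded in \eqref{eq:packing} by $|\calM_\delta| \ge (\delta n/e^3)^k$ with $k = \delta m$, so that $\log|\calM_\delta| \ge k\log(\delta n/e^3) \gtrsim \delta m \log n$ once $n$ is large. The heart of the argument is then to upper bound the mutual information $I(\pi^*; G_1, G_2)$ via \eqref{eq:MI-radius}, i.e.\ to choose a convenient reference distribution $\maQ_{G_1,G_2}$ and bound $\max_\pi D(\maP_{G_1,G_2\mid\pi} \Vert \maQ_{G_1,G_2})$.

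For the \ER model I would take $\maQ_{G_1,G_2}$ to be the product measure $\maG(n,p)\otimes\maG(n,p)$ (two independent \ER graphs). Under $\pi$, the only edge pairs that differ from this product law are the $\binom{m}{2}$ correlated pairs indexed by $\binom{S^*}{2}$; each such pair contributes the KL divergence $D(\multibern(p,p,\rho)\,\Vert\,\Bernoulli(p)\otimes\Bernoulli(p))$, and all other edges contribute zero. So $D(\maP_{G_1,G_2\mid\pi}\Vert\maQ_{G_1,G_2}) = \binom{m}{2}\, D(\multibern(p,p,\rho)\,\Vert\,\mathrm{Bern}(p)^{\otimes 2})$, independent of $\pi$. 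The routine computation I would then carry out (using the probabilities $p_{11},p_{10},p_{01},p_{00}$ from \eqref{eq:def_of_pab} and the reparametrization $\gamma=\rho(1-p)/p$) shows that this per-pair divergence is of order $p^2\phi(\gamma)$, where $\phi(\gamma)=(1+\gamma)\log(1+\gamma)-\gamma$; concretely $D(\multibern(p,p,\rho)\Vert\mathrm{Bern}(p)^{\otimes2}) \le 2 p^2 \phi(\gamma)$ or a similar explicit constant. Hence $I(\pi^*;G_1,G_2) \lesssim m^2 p^2\phi(\gamma)$. Plugging in the hypothesis $m \le c\log n/(p^2\phi(\gamma))$ gives $I(\pi^*;G_1,G_2) \lesssim m\cdot c\log n$, and comparing with $\log|\calM_\delta| \gtrsim \delta m\log n$ yields $I(\pi^*;G_1,G_2) \le (\text{const}\cdot c/\delta)\log|\calM_\delta|$. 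Fano's inequality (in the packing form, or the generalized version \cite[Lemma 20]{banerjee2012price} for the uniform-over-$\maS_{n,m}$ statement) then gives $\prob{\overlap(\hat\pi,\pi^*)<\delta} \ge 1 - (I(\pi^*;G_1,G_2) + \log 2)/\log|\calM_\delta| \ge 1 - 13c/\delta$ after tracking the absolute constants ($e^3$, the factor $2$ in the divergence bound, the $\log 2$, etc.) carefully enough to land on $13$.

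For the Gaussian Wigner model the structure is identical, with $\maQ_{G_1,G_2}$ the product of two independent standard Gaussian Wigner matrices. Each of the $\binom{m}{2}$ correlated pairs contributes $D\big(\mathcal{N}(0,\left(\begin{smallmatrix}1&\rho\\\rho&1\end{smallmatrix}\right))\,\big\Vert\,\mathcal{N}(0,I_2)\big) = -\tfrac12\log(1-\rho^2)$, so $I(\pi^*;G_1,G_2) \le \binom{m}{2}\cdot\tfrac12\log\!\big(1/(1-\rho^2)\big) \le \tfrac{m^2}{4}\log\!\big(1/(1-\rho^2)\big)$. Under $m \le c\log n/\log(1/(1-\rho^2))$ this is at most $\tfrac{c}{4}m\log n$, while $\log|\calM_\delta| \ge \delta m\log(\delta n/e^3) \ge \tfrac12 \delta m\log n$ for large $n$; Fano then gives $\prob{\overlap<\delta} \ge 1 - c/(2\delta)$, again after bookkeeping of constants.

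The main obstacle is not conceptual but a matter of constant-tracking: getting the explicit numerical constants $13c/\delta$ and $c/(2\delta)$ requires (i) an explicit, clean upper bound on the per-edge-pair KL divergence $D(\multibern(p,p,\rho)\Vert\mathrm{Bern}(p)^{\otimes2})$ in terms of $p^2\phi(\gamma)$ valid for all $0<p\le\tfrac12$ and $0<\rho<1$ — this is the one genuinely delicate estimate, since $\phi$ must emerge as the sharp rate and one must handle both the $\gamma = O(1)$ and $\gamma = \omega(1)$ regimes — and (ii) keeping the $e^3$ from \eqref{eq:packing}, the $\log 2$ slack in Fano, and the $\binom{m}{2}$ versus $m^2/2$ comparisons all under control simultaneously. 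I would handle (i) by writing $D = p_{11}\log\frac{p_{11}}{p^2} + 2p_{10}\log\frac{p_{10}}{p(1-p)} + p_{00}\log\frac{p_{00}}{(1-p)^2}$, substituting $p_{11}=p^2(1+\gamma)$ and $p_{10}=p(1-p)(1 - p\gamma/(1-p))$ etc., and bounding each logarithm by its first- and second-order expansion; the dominant term is $p_{11}\log(p_{11}/p^2) - (p_{11}-p^2) = p^2\phi(\gamma)$, and the remaining terms are shown to be of the same or smaller order, giving a bound like $D \le 2p^2\phi(\gamma)$. Everything else is a clean substitution into Fano's inequality.
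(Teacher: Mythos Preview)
Your proposal is correct and follows essentially the same route as the paper: uniform prior, reference measure $\maQ$ equal to the product of two independent \ER (resp.\ Gaussian Wigner) graphs, the tensorization $D(\maP_{G_1,G_2\mid\pi}\Vert\maQ)=\binom{m}{2}D(P\Vert Q)$, and then the generalized Fano inequality of \cite{banerjee2012price} combined with the packing bound \eqref{eq:packing}. The Gaussian case is exactly as you describe, with $D(P\Vert Q)=\tfrac12\log\frac{1}{1-\rho^2}$ computed in closed form (this is the paper's Lemma~\ref{lmm:KL-Bern}).

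The only point where your sketch underestimates the work is the per-pair KL bound in the \ER case. Your claim $D\le 2p^2\phi(\gamma)$ is too optimistic: for $p=\tfrac12$ and $\rho\to 0$ the exact ratio $D/(p^2\phi(\gamma))\to 1/(1-p)^2=4$, so the constant $2$ cannot hold uniformly. The paper proves $D\le 25\,p^2\phi(\gamma)$ (Lemma~\ref{lmm:KL-Bern}) by first using $\log(1+x)\le x$ on the $p_{10}$ and $p_{00}$ terms to obtain the exact identity $D\le p^2\phi(\gamma)+\rho^2[2p(1-p)+p^2]$, and then bounding the second summand by a multiple of the first via a case split on $\gamma\lessgtr 3$ (using $\phi(\gamma)\ge\gamma^2/(2(1+\gamma))$ for small $\gamma$ and $\phi(\gamma)\ge\gamma(\log 4-1)$ for large $\gamma$). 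Your proposed second-order Taylor expansion would also work, but you will land on a constant larger than $2$; since $25/2<13$, the final numerical bookkeeping still delivers $1-13c/\delta$.
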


\begin{proof}
Applying Fano's inequality with~\eqref{eq:pack-main} and \eqref{eq:mutual-ER-main}, we obtain
\begin{equation*}
\prob{\overlap(\pi^*,\hat{\pi})<\delta}\ge 1-\frac{25\binom{m}{2}p^2 \phi(\gamma)+\log 2}{\delta m\log\pth{\frac{\delta n}{e^3}}} \ge 1- \frac{13 c}{\delta },
\end{equation*}
where $\pi^*$ is uniformly distributed over $\maM_\delta$.

Applying Fano's inequality with~\eqref{eq:pack-main} and \eqref{eq:mutual-Gauss-main}, we obtain
\begin{equation*}
\prob{\overlap(\pi^*,\hat{\pi})<\delta}\ge 1-\frac{\frac{1}{2}\binom{m}{2}\log\pth{\frac{1}{1-\rho^2}}+\log 2}{\delta m\log\pth{\frac{\delta n}{e^3}}} \ge 1- \frac{ c}{2\delta},
\end{equation*}
where $\pi^*$ is uniformly distributed over $\maM_\delta$.
\end{proof}

\begin{proposition}[\ER model, lower bound for exact recovery]
    \label{prop:lb-exact}
For any $c\in(0,1)$ and any estimator $\hat{\pi}$, there exists a constant $c_3$  depending on $c$ such that, when $m\le c_3 \pth{\frac{\log n}{p^2 \phi(\gamma)} \vee \pth{\frac{1}{p^2 \gamma} \log\frac{1}{p^2 \gamma}}}$,
    \[
    \prob{\hat{\pi}\ne \pi^*}\ge 1-c,
    \]
    where $\pi^*$ is uniformly distributed over $\calS_{n,m}$.

\end{proposition}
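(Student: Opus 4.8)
The plan is to split the threshold into its two constituent pieces. Write $A \triangleq \frac{\log n}{p^2 \phi(\gamma)}$ and $B \triangleq \frac{1}{p^2 \gamma}\log\frac{1}{p^2\gamma}$, so that the hypothesis $m \le c_3(A \vee B)$ forces $m \le c_3 A$ or $m \le c_3 B$. It therefore suffices to establish the stated bound separately under each of these two conditions and then take $c_3$ to be the smaller of the two constants produced. The condition $m \le c_3 A$ will be handled by a reduction from exact recovery to partial recovery, and the condition $m \le c_3 B$ by revealing the latent vertex sets and invoking the exact-recovery lower bound for fully observed correlated \ER graphs from \cite{wu2022settling}.

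For the regime governed by $A$, observe that for any estimator $\hat\pi$ we have $\{\hat\pi = \pi^*\} \subseteq \{\overlap(\pi^*,\hat\pi) \ge \frac{1}{2}\}$, hence $\prob{\hat\pi \ne \pi^*} \ge \prob{\overlap(\pi^*,\hat\pi) < \frac{1}{2}}$. I would then apply Proposition~\ref{prop:lb-partial} with $\delta = \frac{1}{2}$ and with its free constant taken to be $c/26$: this yields that whenever $m \le \frac{c}{26} A$ and $\pi^*$ is uniform over $\maS_{n,m}$, one has $\prob{\overlap(\pi^*,\hat\pi) < \frac{1}{2}} \ge 1 - \frac{13(c/26)}{1/2} = 1 - c$. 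This already proves the proposition in this case with $c_3 = c/26$.

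For the regime governed by $B$, I would use that revealing the latent sets $S^*$ and $T^* = \pi^*(S^*)$ to the statistician can only decrease the Bayes error, so it is enough to lower bound the error of recovering $\pi^*$ from $(G_1, G_2, S^*, T^*)$. Conditioned on $(S^*, T^*) = (S, T)$ with $|S| = |T| = m$, the edges of $G_1$ (resp.\ $G_2$) having an endpoint outside $S$ (resp.\ $T$) are independent of $\pi^*$ and of every other edge, so $(G_1[S], G_2[T])$ is a sufficient statistic for $\pi^*$; moreover, by Definition~\ref{def:planted correlated ER graph}, conditionally $(G_1[S], G_2[T])$ is distributed as correlated \ER graphs (Definition~\ref{def:corr-ER-full}) on $m$ vertices with parameters $p$ and $\rho$, and $\pi^*$ is conditionally uniform over the bijections $S \to T$. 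Hence, for every realization of $(S^*, T^*)$, the conditional problem is precisely exact recovery of a hidden bijection in the correlated \ER model on $m$ vertices---the setting settled in \cite{wu2022settling}. I would quote their exact-recovery lower bound with the vertex count equal to $m$; using the identity $p^2\gamma = \rho p(1-p) = p_{11} - p^2$ and the elementary fact that $\log m \asymp \log\frac{1}{p^2\gamma}$ in the regime where $m \asymp B$, their condition translates to: the error probability is at least $1 - c$ whenever $m \le c_3'(c)\, B$, for a suitable constant $c_3'(c)$. Since this conditional bound is the same for every $(S,T)$, it passes through the expectation over $(S^*, T^*)$ under the uniform prior on $\pi^*$.

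Taking $c_3 = \min\{c/26,\, c_3'(c)\}$, the hypothesis $m \le c_3(A \vee B)$ places the problem in at least one of the two regimes above, each of which gives $\prob{\hat\pi \ne \pi^*} \ge 1 - c$, as required. I expect the main obstacle to be the second reduction: one must verify that after conditioning on $(S^*, T^*)$ no information about $\pi^*$ leaks through the ``cross'' edges between $S^*$ and $V(G_1)\setminus S^*$ (and their $G_2$-counterparts) or through the induced subgraphs on the complements, so that the residual problem coincides \emph{exactly} with correlated \ER exact recovery on $m$ vertices and the lower bound of \cite{wu2022settling} transfers verbatim with $n$ replaced by $m$; the partial-to-exact reduction in the first regime is by comparison immediate.
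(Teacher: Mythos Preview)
Your approach is essentially the same as the paper's: both combine (i) the partial-recovery lower bound of Proposition~\ref{prop:lb-partial} at $\delta=\tfrac12$ to handle the $A=\frac{\log n}{p^2\phi(\gamma)}$ regime, and (ii) an oracle reduction that reveals $(S^*,T^*)$ and invokes the exact-recovery lower bound of \cite{wu2022settling} on the $m$-vertex correlated \ER model to handle the $B=\frac{1}{p^2\gamma}\log\frac{1}{p^2\gamma}$ regime.

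One technical point is worth flagging. The threshold in \cite{wu2022settling} is expressed through the squared Hellinger-type quantity $(\sqrt{p_{00}p_{11}}-\sqrt{p_{01}p_{10}})^2$, which the paper computes as $\asymp p^2(\gamma\wedge\gamma^2)$, not $p^2\gamma=p_{11}-p^2$ as you suggest. Consequently the paper does \emph{not} translate the \cite{wu2022settling} bound directly into $B$; instead it splits on $\gamma\le 1$ versus $\gamma\ge 1$. When $\gamma\le 1$ one has $p^2(\gamma\wedge\gamma^2)=p^2\gamma^2\asymp p^2\phi(\gamma)$, so the oracle bound is absorbed into the $A$ term (under the harmless side condition $\frac{1}{p^2(\gamma\wedge\gamma^2)}\lesssim n$, the boundary case being handled separately); only when $\gamma\ge 1$ does the oracle bound coincide with $B$. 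Your heuristic ``$\log m\asymp\log\frac{1}{p^2\gamma}$ when $m\asymp B$'' glosses over this, and in the $\gamma<1$ range your identification of the governing quantity is off by a factor of $\gamma$. This does not break the argument---since $\gamma\wedge\gamma^2\le\gamma$, the \cite{wu2022settling} threshold always dominates $B$---but to make the proof clean you should perform the $\gamma\lessgtr 1$ case split rather than rely on that asymptotic.
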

\begin{proof}
    We first apply the reduction argument.
    With the additional information on the domain and range of $\pi^*$, our problem can be reduced to the reconstruction of mapping as in \cite{wu2022settling}.
    Applying the lower bound in \cite[Theorem 4]{wu2022settling},
    for a fixed $\epsilon\in(0,1)$, when $m(\sqrt{p_{00}p_{11}}-\sqrt{p_{01}p_{10}})^2\le (1-\epsilon)\log m$, we have $\prob{\hat{\pi}\ne \pi^*}\ge 1-o(1)$ for any estimator $\hat\pi$. 
    Note that $(\sqrt{p_{00}p_{11}}-\sqrt{p_{01}p_{10}})^2 \asymp p^2(\gamma\wedge \gamma^2) \asymp (\rho^2)\wedge (\rho p).$
    Therefore, when 
    \begin{align}\label{eq: lower bound for exact recovery: 1}
        m\lesssim \frac{1}{p^2 (\gamma\wedge \gamma^2)}\log\pth{\frac{1}{p^2 (\gamma\wedge \gamma^2)}},
    \end{align}
    we have $\prob{\hat{\pi}\ne \pi^*}\ge 1-o(1)$.
    Applying Proposition~\ref{prop:lb-partial} with  $\delta = 1/2$ yields that,
    when 
    \begin{align}\label{eq: lower bound for exact ercovery: 2}
        m\lesssim \frac{\log n}{p^2 \phi(\gamma)},
    \end{align}
    we have $\prob{\hat{\pi}\ne \pi^*}\ge 1-c$ for $c\in (0,1)$.

    When $\frac{1}{p^2 (\gamma\wedge \gamma^2)}\asymp n$, by \eqref{eq: lower bound for exact recovery: 1}, exact recovery is impossible, even when $m = n$. 
    Next, we consider the regime that $\frac{1}{p^2 (\gamma\wedge \gamma^2)}\lesssim n$. When $\gamma \le 1$, we have  $p^2 (\gamma\wedge \gamma^2) = p^2 \gamma^2 \asymp p^2 \phi(\gamma)$, and thus \begin{align*}
        \frac{1}{p^2 (\gamma\wedge \gamma^2)}\log\pth{\frac{1}{p^2 (\gamma\wedge \gamma^2)}}  \lesssim \frac{\log n}{p^2 \phi(\gamma)}.
    \end{align*}
    When $\gamma \ge 1$, $\gamma\wedge \gamma^2 = \gamma$. By comparing \eqref{eq: lower bound for exact recovery: 1} and \eqref{eq: lower bound for exact ercovery: 2}, we derive that exact recovery is impossible if $m\lesssim \frac{\log n}{p^2 \phi(\gamma)} \vee \pth{\frac{1}{p^2 \gamma} \log\frac{1}{p^2 \gamma}}$.
\end{proof}

\begin{remark}\label{rmk:intuition-lwbd}
    {
    As discussed in Section~\ref{subsec: interpretation}, the estimation of $\pi^*$ consists of two different subproblems: 1) recovering the support sets $S^*$ and $T^*$; and 2) recovering the mapping $\pi^*$ between these two sets. 
    The first lower bound $\frac{\log n}{p^2 \phi(\gamma)}$ arises from the overlap complexity associated with the former (see Appendix~\ref{apd:support-recovery} for a proof), while the second lower bound $\frac{1}{p^2 \gamma} \log\frac{1}{p^2 \gamma}$ is due to the latter. 
    }
\end{remark}


\section{Discussion and future directions}
This paper introduces the partially correlated \ER model and the partially correlated Gaussian Wigner model, wherein a pair of induced subgraphs of a specific size is correlated. We investigate the optimal information-theoretic threshold for recovering both the latent correlated subgraphs and the hidden vertex correspondence in these new models.
In comparison with prior work on the correlated \ER and correlated Gaussian Wigner models, the additional challenge arises from the unknown location of the correlated subsets.
For a candidate mapping $\pi$ whose domain may include both correlated and ambient vertices, we extend the classical notion of functional digraph to formally describe the correlation structure among the edges. 
From the correlated functional digraph, we observe that the independent components consist of cycles and paths.
The graphical representation may be of independent interest for general models.


There are many  problems to be further investigated under our proposed models: 
\begin{itemize}
    \item \emph{Refined results.} The results in the paper could be further refined in various ways, such as deriving the sharp constants and characterizing the optimal scaling in terms of the fraction $\delta$ in partial recovery.
    \item \emph{Efficient algorithms.} It is of interest to investigate the polynomial-time algorithms and identify the  computational hardness under our model.
    More efficient algorithms are also desirable when the signal is stronger. 
    \item \emph{Graph sampling.} One motivation of the paper stems from graph sampling as discussed in
Section~\ref{sec:related}. The sampled subgraphs are partially correlated, where the size of correlated
subsets is a random variable depending on the sampling methods. Thus, it is natural to ask about
the sample size needed for reliable recovery.
\item \emph{Correlation test.} The correlation test problem under our model is also highly relevant.  It is interesting to find out whether the detection problem is strictly easier
than recovery, both in terms of the information thresholds and algorithmic developments.
\end{itemize}


\appendix
\section{\hd{Support Recovery Problem}}\label{apd:support-recovery}

{

We prove the impossibility results for exact recovery of $(S^*,T^*)$ by Fano's method. 
On the one hand, since $\pi^*$ contains the information of $S^*$ and $T^*$,
the mutual information can be upper bounded by \begin{align*}
    I(S^*,T^*;G_1,G_2)\le I(\pi^*;G_1,G_2)\le 25\binom{m}{2}p^2 \phi(\gamma).
\end{align*}
On the other hand, for exact recovery, $\maM$ is simply the set of all $(S,T)$, and thus $|\maM| = \binom{n}{m}^2$.
Applying the Fano's inequality, we obtain that \begin{align*}
    \prob{(\hat{S},\hat{T})\ne (S^*,T^*)}\ge 1-\frac{I(S^*,T^*;G_1,G_2)+\log 2}{\log |\maM|}\ge 1- \frac{25 \binom{m}{2}p^2 \phi(\gamma)+\log 2}{2\log\binom{n}{m}}.
\end{align*}
Since $\binom{n}{m}\ge \pth{\frac{n}{m}}^m$, the impossibility results follow if \begin{align}\label{eq:sufficient-support-recovery}
    25 m p^2 \phi(\gamma)\le 4c(\log n-\log m).
\end{align}

Consider the regime that $\frac{\log n}{p^2 \phi(\gamma)} \ge  13\pth{\frac{1}{p^2(\gamma\wedge\gamma^2)}\log\frac{1}{p^2(\gamma\wedge\gamma^2)}}$.
We first show $\frac{\log n}{p^2 \phi(\gamma)} = O(n^{1-\delta_0})$ for some constant $\delta_0>0$. Indeed, if $\frac{\log n}{p^2 \phi(\gamma)}=n^{1-o(1)}$, we have $p^2 \phi(\gamma) = n^{-1+o(1)}$, and thus $2\log(\frac{1}{p^2 (\gamma\wedge \gamma^2)})\ge \log n$.
When $\gamma\ge 1$, we have $\phi(\gamma)\ge \frac{\gamma}{6}$. When $0<\gamma<1$, we have $\phi(\gamma)\ge \frac{\gamma^2}{6}$.
Therefore, $p^2 \phi(\gamma)\ge \frac{1}{6}p^2 (\gamma\wedge \gamma^2)$. We obtain that $\frac{\log n}{p^2 \phi(\gamma)}\le  \frac{12\log(\frac{1}{p^2 (\gamma\wedge \gamma^2)})}{p^2 (\gamma\wedge \gamma^2)}$, which is contradictory with $\frac{\log n}{p^2 \phi(\gamma)} \ge 13\pth{\frac{1}{p^2(\gamma\wedge\gamma^2)}\log\pth{\frac{1}{p^2(\gamma\wedge\gamma^2)}}}$. 
Consequently, $\frac{\log n}{p^2 \phi(\gamma)} = O(n^{1-\delta_0})$ for some $\delta_0>0$. We note that $m\le \frac{4c\delta_0 \log n}{25 p^2 \phi(\gamma)}$ suffices for~\eqref{eq:sufficient-support-recovery}, and thus proves the impossibility results. 
}

\section{Proof of Propositions}

\subsection{Proof of Proposition~\ref{prop: Upper bound for partial recovery}}\label{subapd:Upper bound for partial recovery}
For any $k\ge (1-\delta)m$, let $\tau_k = N_k p_{11}(1-\eta)$ with \begin{align*}
    \eta = \sqrt{\frac{8h(\frac{k}{m})}{kp_{11}}}\cdot \indc{k\le m-1}     +\sqrt{\frac{\log n}{kmp_{11}}}\cdot \indc{k=m},
\end{align*}
where $h(x)\triangleq -x\log x-(1-x)\log(1-x)$ is the binary entropy function and $p_{11}=p^2(1+\gamma)$. Let $c_1(\delta) = 100\vee \frac{200 h(1-\delta)}{1-\delta}$.
We first show that $\eta \le \frac{\gamma}{4(1+\gamma)}<1$ under $m\ge \pth{100\vee \frac{200 h(1-\delta)}{1-\delta}} \frac{\log n}{p^2 \phi(\gamma)}$. Since $\phi(\gamma) = (1+\gamma)\log (1+\gamma) - \gamma\le \gamma^2$ by $\log(1+\gamma)\le\gamma$, we obtain that $p^2 \phi(\gamma)\le p^2 \gamma^2 = \rho^2(1-p)^2\le 1$ and $m \ge c_1(\delta) \log n$. We note that
\begin{align*}
    \eta &\overset{\mathrm{(a)}}{\le} \sqrt{\frac{8 h(1-\delta)}{(1-\delta) mp_{11}}}\cdot \indc{k\le m-1}+\sqrt{\frac{\log n}{m^2 p_{11}}}\cdot \indc{k=m}\\
    &\overset{\mathrm{(b)}}{\le} \pth{\sqrt{\frac{8h(1-\delta)}{1-\delta}}\vee \frac{1}{\sqrt{c_1(\delta)}}} \frac{1}{\sqrt{m p_{11}}}\\
    &\overset{\mathrm{(c)}}{\le}  \pth{\sqrt{\frac{8h(1-\delta)}{(1-\delta) c_1(\delta)}}\vee \frac{1}{c_1(\delta)}} \sqrt{\frac{\log(1+\gamma)-\gamma/(1+\gamma)}{\log n}}\overset{\mathrm{(d)}}{\le}    \frac{1}{5} \sqrt{\frac{\log(1+\gamma)-\gamma/(1+\gamma)}{\log n}},
\end{align*}
where $\mathrm{(a)}$ uses the fact that $\frac{h(k/m)}{k/m}\le \frac{h(1-\delta)}{1-\delta}$ since $\frac{h(x)}{x}$ decreases in $(0,1)$ and $k\ge (1-\delta) m$; $\mathrm{(b)}$ follows from $m\ge c_1(\delta) \log n$; $\mathrm{(c)}$ is because $m p_{11} = mp^2(1+\gamma)\ge \frac{c_1(\delta) \log n}{\phi(\gamma)/(1+\gamma)} = \frac{c_1(\delta)\log n}{\log(1+\gamma) - \gamma/(1+\gamma)}$; $\mathrm{(d)}$ follows from $c_1(\delta) = 100\vee \frac{200 h(1-\delta)}{1-\delta}$. Recall the assumption stated in Section~\ref{subsec: main results}, where it's asserted that $p\ge n^{-1}$ in the \ER model, thereby implying $\log(1+\gamma)\le \log n$ and thus $\eta\le \frac{1}{5}$. 
When $\gamma >10$, $\eta \le \frac{1}{5}\le \frac{\gamma }{4(1+\gamma)}$.
When $\gamma \le 10$,
since $\log(1+x)-\frac{x}{1+x}-x^2\le 0$ for any $x>0$, we obtain $\eta \le \frac{1}{5}\sqrt{\frac{\log(1+\gamma)-\gamma/(1+\gamma)}{\log n}}\le \frac{\gamma}{5\sqrt{\log n}}\le \frac{\gamma}{4(1+\gamma)}$ when $n$ sufficiently large. Therefore, we obtain $\eta\le \frac{\gamma}{4(1+\gamma)}<1$.

We then upper bound the bad event of signal by Chernoff bound. By \cite[Lemma 17.5.1]{cover2006elements}, we have $\binom{m}{k}\le \exp\qth{mh(k/m)}$. When $k\le m-1$, it follows from~\eqref{eq:bad-signal-total} and~\eqref{eq:weak-signal} that 
\begin{align}
    \nonumber \prob{ \bigcup_{\pi\in\calT_{k}} \sth{ \en(\calH_{\pi^*}^f) - \en(\calH_{\pi^*}^f[F_\pi])<\tau_k } }&\le \binom{m}{k} \exp\pth{-\frac{N_k p_{11}\eta^2}{2}}\\\nonumber &\le \exp\pth{mh\pth{\frac{k}{m}}-\frac{N_k p_{11}\eta^2}{2}}\\&\le\exp\pth{-mh\pth{\frac{k}{m}}}\label{eq:ER-partial-signal-m-1},
\end{align}
where the last inequality follows from $\frac{N_k p_{11}\eta^2}{2} = 2m\pth{2-\frac{k+1}{m}}h\pth{\frac{k}{m}}\ge 2mh\pth{\frac{k}{m}}$ when $k\le m-1$.
When $k=m$, $N_k = \frac{mk}{2}\pth{2-\frac{k+1}{m}}\ge \frac{mk}{3}$. Then, it follows from~\eqref{eq:bad-signal-total} and~\eqref{eq:weak-signal} that
\begin{align}
    \prob{ \bigcup_{\pi\in\calT_{k}} \sth{ \en(\calH_{\pi^*}^f) - \en(\calH_{\pi^*}^f[F_\pi])<\tau_k } }\le \binom{m}{k} \exp\pth{-\frac{N_k p_{11}\eta^2}{2}}\le n^{-1/6}\label{eq:ER-partial-signal-m},
\end{align}
where the last inequality follows from $\binom{m}{k} =1$ and $\frac{N_k p_{11} \eta^2}{2} = \frac{\log n}{2} \frac{N_k}{mk}\ge \frac{\log n}{6}$ when $k=m$.

Next, we upper bound the bad event of noise. Since $\eta\le \frac{\gamma}{4(1+\gamma)}$, it follows from Lemma~\ref{lem: general case for eta} that $\frac{\tau_k}{|\maE_\pi|p^2} = (1+\gamma)(1-\eta)>1$, where $p_{11} = p^2(1+\gamma)$ and $|\maE_\pi| = N_k$. Since $e^{\frac{k\gamma}{4(2+\gamma)}}\le e^{\frac{k}{4}}$ and $k! \ge \pth{\frac{k}{e}}^k$, it follows from~\eqref{eq:strong-noise-union-bound} and Lemma~\ref{lem: strong noise event} that 
\begin{align}
     \nonumber\prob{ \bigcup_{\pi\in\calT_{k}} \sth{ \en(\calH_{\pi}^f) - \en(\calH_{\pi}^f[F_\pi])\ge \tau_k } } &\le n^{3k} \exp\pth{-\frac{\tau_k}{2}\log\pth{\frac{\tau_k}{|\maE_\pi| p^2}}+\frac{\tau_k}{2}-\frac{|\maE_\pi| p^2}{2}}\\\nonumber
     &= n^{3k}\exp\pth{-\frac{N_k p^2}{2} \phi\pth{(1+\gamma)(1-\eta)-1}}\\&\overset{\mathrm{(a)}}{\le} n^{3k}\exp\pth{-\frac{N_kp^2\phi(\gamma)}{8}} \overset{\mathrm{(b)}}{\le} n^{-k}\label{eq:ER-partial-noise},
 \end{align}
where $\mathrm{(a)}$ applies \eqref{eq: bound for Hgamma} in Lemma \ref{lem: general case for eta}; $\mathrm{(b)}$ is because $N_k = \frac{mk}{2}\pth{2-\frac{k+1}{m}}\ge \frac{mk}{3}$ and $\frac{mp^2 \phi(\gamma)}{24}\ge 4\log n$.

Finally, we upper bound the error probability $\prob{\overlap(\hat{\pi},\pi^*)<\delta}$. By summing~\eqref{eq:ER-partial-signal-m-1} over $(1-\delta)m\le k\le m-1$ and summing~\eqref{eq:ER-partial-noise} over $(1-\delta)m\le k\le m$ and~\eqref{eq:ER-partial-signal-m}, we obtain that 
\begin{align*}
    &~\prob{\overlap(\hat{\pi},\pi^*)<\delta}\\=&~\sum_{k=(1-\delta) m}^m\prob{d(\pi^*,\hat{\pi}) = k}\\
    \le&~ \sum_{k=(1-\delta) m}^m \sth{\prob{ \bigcup_{\pi\in\calT_{k}} \sth{ \en(\calH_{\pi^*}^f) - \en(\calH_{\pi^*}^f[F_\pi])<\tau_k } }+\prob{ \bigcup_{\pi\in\calT_{k}} \sth{ \en(\calH_{\pi}^f) - \en(\calH_{\pi}^f[F_\pi])\ge \tau_k } }}\\
    \le&~ n^{-1/6}+\sum_{k=(1-\delta) m}^{m-1}\exp\qth{-mh\pth{\frac{k}{m}}} +\sum_{k=(1-\delta) m}^m n^{-k}\\
\le &~n^{-1/6}+\sum_{k=(1-\delta) m}^{m-1}\exp\qth{-mh\pth{\frac{k}{m}}}+\frac{n^{-(1-\delta) m}}{1-n^{-1}}.
\end{align*}
Since $m\ge c_1(\delta) \log n$ and $c_1(\delta)\ge 100$, it follows from Lemma~\ref{lem: sum over delta m} that $\sum_{k=(1-\delta)  m}^m \prob{d(\pi^*,\hat{\pi}) = k}\le \frac{4\log m+2}{m}.$  

\subsection{Proof of Proposition~\ref{prop: Upper bound for exact recovery}}\label{subapd:Upper bound for exact recovery}
Let $\tau_k = N_k p_{11}(1-\eta)$ with $\eta = \frac{\gamma}{4(1+\gamma)}<1$. By \eqref{eq:bad-signal-total} and \eqref{eq:weak-signal}  and applying $\binom{m}{k}\le m^k$, we get
    \begin{align}
        \prob{ \bigcup_{\pi\in\calT_{k}} \sth{ \en(\calH_{\pi^*}^f) - \en(\calH_{\pi^*}^f[F_\pi])<\tau_k } }\le m^k \exp\pth{-\frac{N_k p_{11}\eta^2}{2}}.\label{eq:ER-exact-signal}
    \end{align}
Since $p_{11}=p^2(1+\gamma)$ and $|\maE_\pi|=N_k$, we have $\frac{\tau_k}{|\maE_\pi|p^2} = (1+\gamma)(1-\eta)>1$ by Lemma \ref{lem: general case for eta}. 
Since $e^{\frac{k\gamma}{4(2+\gamma)}}\le e^{\frac{k}{4}}$ and $k! \ge \pth{\frac{k}{e}}^k$, we obtain $\frac{1}{k!^2} e^{\frac{k\gamma}{4(2+\gamma)}}\le 1$. Then it follows from \eqref{eq:strong-noise-union-bound} and Lemma~\ref{lem: strong noise event} that 
\begin{align}
\prob{ \bigcup_{\pi\in\calT_{k}} \sth{ \en(\calH_{\pi}^f) - \en(\calH_{\pi}^f[F_\pi])\ge \tau_k } }
&\le n^{3k}\exp\pth{-\frac{\tau_k}{2}\log\pth{\frac{\tau_k}{|\maE_\pi|p^2}}+\frac{\tau_k}{2}-\frac{|\maE_\pi|p^2}{2}}\nonumber\\
&= n^{3k}\exp\pth{-\frac{N_k p^2}{2} \phi\pth{(1+\gamma)(1-\eta)-1}}\nonumber\\
&\le n^{3k}\exp\pth{-\frac{N_kp^2}{8}\phi(\gamma)},\label{eq:ER-exact-noise}
\end{align}
where the last inequality applies \eqref{eq: bound for Hgamma} in Lemma \ref{lem: general case for eta}.
By summing~\eqref{eq:ER-exact-signal} and~\eqref{eq:ER-exact-noise} over $k\ge 1$ and applying $N_k\ge \frac{km}{3}$, we obtain that 
\begin{align}
    \prob{\hat{\pi}\neq \pi^*}=&~\sum_{k=1}^m\prob{d(\pi^*,\hat{\pi}) = k}\nonumber\\
    \le&~ \sum_{k=1}^m \sth{\prob{ \bigcup_{\pi\in\calT_{k}} \sth{ \en(\calH_{\pi^*}^f) - \en(\calH_{\pi^*}^f[F_\pi])<\tau_k } }+\prob{ \bigcup_{\pi\in\calT_{k}} \sth{ \en(\calH_{\pi}^f) - \en(\calH_{\pi}^f[F_\pi])\ge \tau_k } }}\nonumber\\
    \le&~ \sum_{k=1}^m \sth{\qth{m\exp\pth{-\frac{mp_{11}\eta^2}{6}}}^k+\qth{n^3\exp\pth{-\frac{mp^2 \phi(\gamma)}{24}}}^k}.\label{eq:ER-exact-total}
\end{align}

Let $C_1=3000$. It remains to upper bound~\eqref{eq:ER-exact-total} under $m\ge 3000\pth{\frac{\log(1/p^2\gamma)}{p^2 \gamma}\vee \frac{\log n}{p^2 \phi(\gamma)}}$. 
Since $m\ge \frac{3000 \log n}{p^2 \phi(\gamma)}$, we have
\begin{equation}
\label{eq:ER-exact-n-term}
n^3\exp\pth{-\frac{mp^2 \phi(\gamma)}{24}}
\le \frac{1}{n}.
\end{equation}
With a weak signal that $\gamma\le 1$, we have $\phi(\gamma)\le \frac{\gamma^2}{1+\gamma}$, and thus $m p_{11} \eta^2 =  \frac{mp^2\gamma^2}{16(1+\gamma)}\ge \frac{mp^2 \phi(\gamma)}{16}$. Since $m\ge \frac{3000 \log n}{p^2 \phi(\gamma)}$, we have 
\begin{equation}
    \label{eq:ER-exact-m-term}
    m\exp\pth{-\frac{mp_{11}\eta^2}{6}}\le \frac{1}{m}.
\end{equation}
With a strong signal that $\gamma\ge 1$, we have $\frac{\gamma}{1+\gamma}\ge \frac{1}{2}$ and thus $m p_{11} \eta^2 =  \frac{mp^2\gamma^2}{16(1+\gamma)}\ge \frac{mp^2 \gamma}{32}$. 
Since $m\ge 3000\pth{\frac{1}{p^2 \gamma}\log \frac{1}{p^2 \gamma}}$, we have that \begin{align*}
    \frac{\log m}{m}&\overset{\mathrm{(a)}}{\le} \frac{{\log 3000+\log\pth{\frac{1}{p^2\gamma}}+\log\pth{\log\pth{\frac{1}{p^2\gamma}}}}}{3000 \pth{\frac{1}{p^2\gamma} \log\pth{\frac{1}{p^2\gamma}}}}\overset{\mathrm{(b)}}{\le} \frac{\pth{2+\frac{\log 3000}{\log 4}}\log\pth{\frac{1}{p^2\gamma}}}{3000 \pth{\frac{1}{p^2\gamma} \log\pth{\frac{1}{p^2\gamma}}}}\le \frac{p^2\gamma}{384},
\end{align*} 
where $\mathrm{(a)}$ is because $\frac{\log x}{x}$ decreases on $[e,\infty]$ and $m\ge 3000$; $\mathrm{(b)}$ is because $\log\pth{\frac{1}{p^2\gamma}} = \log\pth{\frac{1}{\rho p(1-p)}}\ge \log 4$ and $\log\pth{\log\pth{\frac{1}{p^2\gamma}}}\le \log\pth{\frac{1}{p^2\gamma}}$.
Then \eqref{eq:ER-exact-m-term} holds since $m\exp\pth{-\frac{mp_{11}\eta^2}{6}}\le m\exp\pth{-\frac{mp^2\gamma}{192}}\le m\exp(-2\log m) = m^{-1}$.
We conclude the proof by applying \eqref{eq:ER-exact-total} with \eqref{eq:ER-exact-n-term} and \eqref{eq:ER-exact-m-term}.

\subsection{Proof of Proposition~\ref{prop: Upper bound for exact recovery-Gaussian}}\label{subapd:Upper bound for exact recovery-Gaussian}

Let $\tau_k = \rho N_k - c_0\pth{\sqrt{N_k\log(1/\theta)}\vee\log(1/\theta)}$ with $\theta =  \exp\pth{-2k\log m}$, where $c_0$ is the universal constant in Lemma~\ref{lem:Hanson-Wright}. Let $C_3 = 25c_0^2\vee 1100$. 
We first verify that  $\tau_k\ge \frac{1}{2} \rho N_k$ under condition $m\rho^2 \ge C_3\log n$. We show that $c_0\sqrt{N_k\log(1/\theta)}\le \frac{\rho N_k}{2}$ and $c_0\log(1/\theta)\le \frac{\rho N_k}{2}$, respectively. For the first term $c_0\sqrt{N_k \log(1/\theta)}$, we note that
\begin{align*}
    c_0 \sqrt{N_k \log (1/\theta)}&\overset{\mathrm{(a)}}{\le} \sqrt{\frac{2C_3}{25} N_k k\log n}= \sqrt{\frac{C_3\log n}{m}\cdot \frac{mk}{3}\cdot \frac{6N_k}{25}}\overset{\mathrm{(b)}}{\le} \sqrt{\rho^2 \cdot N_k \cdot \frac{6N_k}{25} }\le \frac{1}{2} \rho N_k,
\end{align*}
where $\mathrm(a)$ follows from $25c_0^2\le C_3$ and $\log(1/\theta) = 2k\log m\le 2k\log n$; $\mathrm{(b)}$ follows from $m\rho^2 \ge C_3\log n$ and $N_k = mk\pth{1-\frac{k+1}{2m}}\ge \frac{mk}{3}$. For the second term $c_0\log(1/\theta)$, we have that \begin{align*}
    c_0 \log(1/\theta)&\overset{\mathrm{(a)}}{\le} 6c_0 N_k \frac{\log m}{m}\overset{\mathrm{(b)}}{\le}\frac{6c_0}{\sqrt{C_3}} N_k\cdot  \sqrt{\frac{C_3\log n}{m}}\cdot \sqrt{\frac{\log m}{m}} \overset{\mathrm{(c)}}{\le} \frac{6c_0\sqrt{\log m}}{\sqrt{C_3 m}} \rho N_k\le \frac{1}{2}\rho N_k
\end{align*}
when $m$ sufficiently large, where $\mathrm{(a)}$ is because $\log(1/\theta) = 2k\log m$ and $N_k\ge \frac{mk}{3}$; $\mathrm{(b)}$ follows from $\log m\le \log n$; $\mathrm{(c)}$ uses $m\rho^2\ge C_3\log n$.
Therefore, we obtain $\tau_k\ge \frac{1}{2}\rho N_k$.

For the bad event of signal, by \eqref{eq:weak-signal-Gaussian} and Hanson-Wright inequality in Lemma~\ref{lem:Hanson-Wright} with $M_0 = I_{N_k}$ and applying $\binom{m}{k}\le m^k$, we obtain that \begin{align}\label{eq:Gaussian-sum-weak-1}
    \prob{ \bigcup_{\pi\in\calT_{k}} \sth{ \en(\calH^f_{\pi^*}) - \en(\calH^f_{\pi^*}[F_\pi])<\tau_k } } \le m^k\exp\pth{-2k\log m} = m^{-k}.
\end{align}
For the bad event of noise, it follows from~\eqref{eq:strong-noise-union-bound} and Lemma~\ref{lem: strong noise event-Gaussian} that \begin{align}\label{eq:Gaussian-sum-weak-2}
    \nonumber &\prob{ \bigcup_{\pi\in\calT_{k}} \sth{ \en(\calH^f_{\pi}) - \en(\calH^f_{\pi}[F_\pi])\ge \tau_k } } \\ \nonumber \le&~ n^{3k} \exp\pth{-\frac{\rho\tau_k}{6}+\frac{\rho^2|\maE_\pi|}{14}+\frac{\log 5}{8}k}
    \\\overset{\mathrm{(a)}}{\le}&~ n^{3k} \exp\pth{-\frac{\rho^2 |\maE_\pi|}{84}+\frac{\log 5}{8}k}
    \overset{\mathrm{(b)}}{\le} n^{3k}  \exp\pth{-4k\log n} = n^{-k},
\end{align}
where $\mathrm{(a)}$ is because $\tau_k\ge \frac{1}{2}\rho N_k$; $\mathrm{(b)}$ follows from $|\maE_\pi| = N_k\ge \frac{mk}{3}$ and $m\rho^2 \ge 1100\log n$.
By summing~\eqref{eq:Gaussian-sum-weak-1} and~\eqref{eq:Gaussian-sum-weak-2} over $1\le k\le m$, we obtain that  \begin{align*}
    \prob{\hat{\pi}\neq \pi^*}=&~\sum_{k=1}^m\prob{d(\pi^*,\hat{\pi}) = k}\\
    \le&~ \sum_{k=1}^m \prob{ \bigcup_{\pi\in\calT_{k}} \sth{ \en(\calH^f_{\pi^*}) - \en(\calH^f_{\pi^*}[F_\pi])<\tau_k } }+\prob{ \bigcup_{\pi\in\calT_{k}} \sth{ \en(\calH^f_{\pi}) - \en(\calH^f_{\pi}[F_\pi])\ge \tau_k } }\\\le&~ \sum_{k=1}^m \pth{m^{-k}+n^{-k}}\le \frac{m^{-1}}{1-m^{-1}}+\frac{n^{-1}}{1-n^{-1}} = \frac{1}{m-1}+\frac{1}{n-1}.
\end{align*}

\subsection{Proof of Proposition~\ref{prop: upbd_gaussian_general}}\label{subapd:upbd_gaussian_general}

    Let $\tau_k = \frac{\rho-1}{3}\log\pth{\frac{1}{1-\rho}} N_k$ and $C_4 =100$. We first upper bound the bad event of signal.
    Recall~\eqref{eq:weak-signal-Gaussian-2}, since $A_i-B_i \sim \maN(0,2-2\rho)$ for $(A_i,B_i)\sim \gaussianrho$, we obtain that $\sum_{i=1}^{N_k} \frac{(A_i-B_i)^2}{2(1-\rho)}$ follows the chi-squared distribution with $N_k$ degrees of freedom. Then, it follows from~\eqref{eq:weak-signal-Gaussian-2} and $\binom{m}{k}\le m^k$ that \begin{align}
        \nonumber &~\prob{ \bigcup_{\pi\in\calT_{k}} \sth{ \en(\calH^f_{\pi^*}) - \en(\calH^f_{\pi^*}[F_\pi])<\tau_k } }\\\nonumber\le&~ m^k \prob{\sum_{i=1}^{N_k}-\frac{1}{2}(A_i-B_i)^2<\frac{\rho-1}{3}\log\pth{\frac{1}{1-\rho}}N_k}\\\nonumber=&~ m^k \prob{\sum_{i=1}^{N_k} \frac{(A_i-B_i)^2}{2(1-\rho)} >\frac{1}{3}\log\pth{\frac{1}{1-\rho}}N_k}\\\nonumber \overset{\mathrm{(a)}}{\le}&~ m^k\exp\sth{-\frac{N_k}{2}\qth{\frac{1}{3}\log\pth{\frac{1}{1-\rho}}-1-\log\pth{\frac{1}{3}\log\pth{\frac{1}{1-\rho}}}}}\\\overset{\mathrm{(b)}}{\le}&~ m^k\exp\pth{-\frac{mk\log\pth{1/(1-\rho)}}{48}}\overset{\mathrm{(c)}}{\le} n^{-k}\label{eq:Gaussian-strong-sum-1},
    \end{align}
    where $\mathrm{(a)}$ uses \eqref{eq:concentration_for_chisquare} in Lemma~\ref{lem:chisquare}; $\mathrm{(b)}$ is because $ N_k=\binom{m}{2}-\binom{m-k}{2}=mk\pth{1-\frac{k+1}{2m}}\ge \frac{km}{3}$ and $x-1-\log x\ge \frac{3x}{8}$ for $x = \frac{1}{3}\log\pth{\frac{1}{1-\rho}}\ge 4$; $\mathrm{(c)}$ is because $m \log(1/(1-\rho))\ge 100\log n$ and $n\ge m$.   
    
    We then upper bound the bad event of noise. It follows from~\eqref{eq:strong-noise-union-bound} and Lemma~\ref{lem:strong noise event-Gaussian-2} that \begin{align}
        \nonumber &~\prob{ \bigcup_{\pi\in\calT_{k}} \sth{ \en(\calH^f_{\pi}) - \en(\calH^f_{\pi}[F_\pi])\ge \tau_k } }\\\nonumber\le&~ n^{3k}   \exp\pth{ -\frac{\rho\tau_k}{4(1-\rho)}- \frac{|\maE_\pi|}{4}\log\pth{\frac{1}{1-\rho}}+\frac{k}{8}\log\pth{\frac{1}{1-\rho}}} \\ 
        \nonumber\overset{\mathrm{(a)}}{\le}&~ n^{3k} \exp\pth{-\frac{|\maE_\pi|}{6}\log\pth{\frac{1}{1-\rho}}+\frac{k}{8}\log\pth{\frac{1}{1-\rho}}}\\\nonumber\overset{\mathrm{(b)}}{\le}&~ \exp\pth{3k\log n-\frac{mk}{18}\log\pth{\frac{1}{1-\rho}}+\frac{k}{8}\log\pth{\frac{1 }{1-\rho}}}\\=&~ \exp\qth{\pth{3k\log n-\frac{mk}{24}\log\frac{1}{1-\rho}}+\pth{\frac{k}{8}\log\frac{1}{1-\rho}-\frac{mk}{72}\log\frac{1}{1-\rho}}}\overset{\mathrm{(c)}}{\le} n^{-k}\label{eq:Gaussian-strong-sum-2},
    \end{align}
    where $\mathrm{(a)}$ is because $|\maE_\pi|=N_k$ and $-\frac{\rho\tau_k}{4(1-\rho)}- \frac{|\maE_\pi|}{4}\log\pth{\frac{1}{1-\rho}} = {\frac{(\rho-3)|\maE_\pi|}{12}}\log\pth{\frac{1}{1-\rho}}\le -\frac{|\maE_\pi|}{6}\log\pth{\frac{1}{1-\rho}}$; $\mathrm{(b)}$ follows from $|\maE_\pi| = N_k = \frac{mk}{2}\pth{1-\frac{k+1}{2m}}\ge \frac{mk}{3}$; $\mathrm{(c)}$ is because $m\ge 100\pth{\frac{\log n}{\log(1/(1-\rho))}\vee 1}$ implies $3k\log n-\frac{mk}{24}\log\frac{1}{1-\rho}\le -k\log n$ and $\frac{k}{8}\log\frac{1}{1-\rho}-\frac{mk}{72}\log\frac{1}{1-\rho}\le 0$.
    By summing over~\eqref{eq:Gaussian-strong-sum-1} and~\eqref{eq:Gaussian-strong-sum-2}, we obtain that \begin{align*}
    \prob{\hat{\pi}\neq \pi^*} = &~\sum_{k=1}^m\prob{d(\pi^*,\hat{\pi}) = k}\\
    \le&~ \sum_{k=1}^m \prob{ \bigcup_{\pi\in\calT_{k}} \sth{ \en(\calH^f_{\pi^*}) - \en(\calH^f_{\pi^*}[F_\pi])<\tau_k } }+\prob{ \bigcup_{\pi\in\calT_{k}} \sth{ \en(\calH^f_{\pi}) - \en(\calH^f_{\pi}[F_\pi])\ge \tau_k } }\\\le&~ \sum_{k=1}^m \pth{n^{-k}+n^{-k}}\le  \frac{2n^{-1}}{1-n^{-1}} = \frac{2}{n-1}.
\end{align*}

\section{Proof of Lemmas}

\subsection{Proof of Lemma \ref{lem: upper bound for kappa}}\label{apdsec: upper bound for kappa}
\paragraph{\ER Model} We consider $f(x,y) = xy$ in the \ER model. The lower-order cumulants can be calculated directly: \begin{align}
    \kappa^{\sfC}_1(t) & = \log(1+p_{11}(e^t-1)), \quad \kappa^{\sfP}_1(t) = \log(1+p^2(e^t-1)),\label{eq:cumulants-C1}\\
    \kappa^{\sfC}_2(t) & = \log(1+2p^2(e^t-1)+p_{11}^2(e^t-1)^2). \label{eq:cumulants-C2}
\end{align}
We first evaluate the moment generating function for paths. 
Consider a path $P$ of size $\ell$ denoted by $\langle e_1 e_2 \dots e_{\ell} \rangle$ as illustrated in Figure~\ref{fig:general_path}.
For each $i=1,\dots,\ell$, define $A_{i-1}\triangleq \beta_{e_i}(G_1)$ and $B_i\triangleq \beta_{\pi(e_i)}(G_2)$. Then $(A_i,B_i)\sim\multibern(p,p,\rho)$. 
By definition~\eqref{eq:calE-counter}, 
\[
\beta_P(\maH_\pi^f) = \sum_{i=1}^{\ell}\beta_{e_i}(G_1)\beta_{\pi(e_i)}(G_2)
= \sum_{i=1}^{\ell} A_{i-1}B_i.
\]
\begin{figure}[htb]
    \begin{center}
        \begin{tikzpicture}
            \draw (1,3) coordinate (A1);
            \draw (1,2) coordinate (A2);
            \draw (1,1) coordinate (A3);
            \draw (1,-0.5) coordinate (A4);
            \draw (3,3) coordinate (B1);
            \draw (3,2) coordinate (B2);
            \draw (3,1) coordinate (B3);
            \draw (3,-0.5) coordinate (B4);
            \draw[blue] (A1) -- node[above right,scale = 0.7]{} (B1) node[sloped, pos=0.5,scale=2]{\arrowIn};
            \draw[blue] (A2) -- (B2) node[sloped, pos=0.5,scale=2]{\arrowIn};
            \draw[blue] (A3) -- (B3) node[sloped, pos=0.5,scale=2]{\arrowIn};
            \draw[blue] (A4) -- (B4) node[sloped, pos=0.5,scale=2]{\arrowIn};
            \draw[red,dashed] (A2) -- node[above left,scale = 0.7] {$\pi^*$} (B1) node[sloped, pos=0.5,scale=2]{\arrowIn};
            \draw[red,dashed] (A3) -- (B2) node[sloped, pos=0.5,scale=2]{\arrowIn};
            \draw (A1) node[left,scale=0.7]{$e_1$} node[above=2pt,scale=0.7]{$A_0$} node{$\bullet$};
            \draw (A2) node[left,scale=0.7]{$e_2$} node[above=2pt,scale=0.7]{$A_1$} node{$\bullet$};
            \draw (A3) node[left,scale=0.7]{$e_3$} node[above=2pt,scale=0.7]{$A_2$} node{$\bullet$};
            \draw (A4) node[left,scale=0.7]{$e_\ell$} node[above=2pt,scale=0.7]{$A_{\ell-1}$} node{$\bullet$};
            \draw (B1) node[right,scale=0.7]{$\pi(e_1)$} node[above=2pt,scale=0.7]{$B_1$} node{$\bullet$};
            \draw (B2) node[right,scale=0.7]{$\pi(e_2)$} node[above=2pt,scale=0.7]{$B_2$} node{$\bullet$};
            \draw (B3) node[right,scale=0.7]{$\pi(e_3)$} node[above=2pt,scale=0.7]{$B_3$} node{$\bullet$};
            \draw (B4) node[right,scale=0.7]{$\pi(e_\ell)$} node[above=2pt,scale=0.7]{$B_{\ell}$} node{$\bullet$};
            \draw (2,0.75) node[below] {$\vdots$};

            \draw[->,thick] (4.5,1.5) -- (5.5,1.5);
            
            \draw (6.5,1.5) coordinate(C1) node[below=3pt,scale=0.7]{$e_1$} node{$\bullet$};
            \draw (7.5,1.5) coordinate(C2) node[below=3pt,scale=0.7]{$e_2$} node[above=2pt,scale=0.7]{$\pi(e_1)$} node{$\bullet$};
            \draw (8.5,1.5) coordinate(C3) node[below=3pt,scale=0.7]{$e_3$} node[above=2pt,scale=0.7]{$\pi(e_2)$} node{$\bullet$};
            \draw (10,1.5) coordinate(C4) node[below=3pt,scale=0.7]{$e_{\ell}$} node[above=2pt,scale=0.7]{$\pi(e_{\ell-1})$} node{$\bullet$};
            \draw (11,1.5) coordinate(C5) node[below=3pt,scale=0.7]{} node[above=2pt,scale=0.7]{$\pi(e_\ell)$} node{$\bullet$};
            \draw[blue] (C1) -- (C2) node[sloped, pos=0.5,scale=2]{\arrowIn};
            \draw[blue] (C2) -- (C3) node[sloped, pos=0.5,scale=2]{\arrowIn};
            \path (C3) -- (C4) node[pos=0.5,scale=1]{$\dots$};
            \draw[blue] (C4) -- (C5) node[sloped, pos=0.5,scale=2]{\arrowIn};
        \end{tikzpicture}
    \end{center}
    \caption{Illustration of a path of size $\ell$.}
    \label{fig:general_path}
\end{figure}
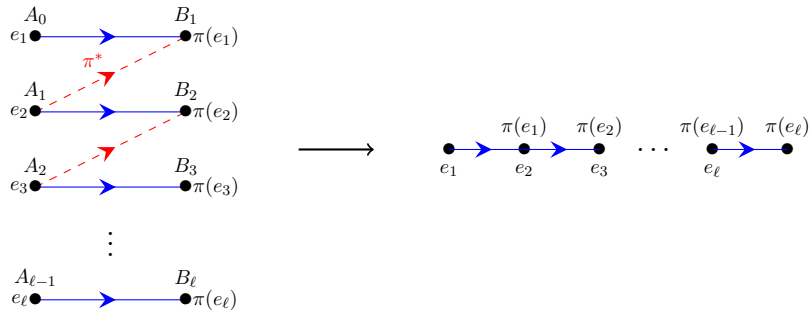
For notational simplicity, we introduce an auxiliary random variable $B_0$ that is correlated with $A_0$ such that $(A_0,B_0)\sim\multibern(p,p,\rho)$. 
Then
\begin{align}
    m_\ell
    \triangleq\expect{e^{t\beta_P\pth{\maH^f_\pi}}}
    &=\expect{ \expect{ \prod_{i=1}^\ell e^{t A_{i-1}B_i} | B_0\dots B_\ell} } 
    =\expect{  \prod_{i=1}^\ell \expect{ e^{t A_{i-1}B_i} | B_{i-1}, B_i}  }\nonumber \\
    &=\sum_{b_0,\dots,b_\ell\in\{0,1\} } \prod_{i=0}^\ell \prob{B_i=b_i} \prod_{i=1}^\ell  \expect{ e^{t A_{i-1}b_i} | B_{i-1}=b_{i-1} }. \label{eq:mgf-path-1}
\end{align}

Define $M(b_{i-1},b_i)\triangleq \prob{B_i=b_i} \expect{ e^{t A_{i-1}b_i} | B_{i-1}=b_{i-1} } $ for $b_{i-1},b_i\in\{0,1\}$ and a matrix 
\[
M\triangleq 
\begin{bmatrix}
    M(0,0) & M(0,1) \\
    M(1,0) & M(1,1)
\end{bmatrix}
= 
\begin{bmatrix}
    \bar p & (\bar p + p_{01} (e^t-1))p/\bar p \\
    \bar p & p+p_{11}(e^t-1)
\end{bmatrix},
\]
where $\bar p \triangleq 1-p$. Recall that $\prob{B_i=1}=p$. 
Then, we obtain that
\[
m_\ell
= \sum_{b_0,\dots,b_\ell\in\{0,1\} } \prob{B_0=b_0} M(b_0,b_1) \dots M(b_{\ell-1},b_\ell)
= [\bar p, p] M^\ell \begin{bmatrix}1\\ 1\end{bmatrix}.
\]
The trace and determinant of $M$ are given by
\[
T\triangleq \tr(M) = 1+p_{11}(e^t-1),
\quad 
D\triangleq \det(M) = \rho p \bar p (e^t-1) >0.
\]
Since $D<p_{11} (e^t-1)$, the discriminant is $T^2-4D>0$.
Hence, the matrix $M$ has two distinct eigenvalues denoted by $\lambda_1>\lambda_2$. Since $\lambda_1+\lambda_2=T>0$ and $\lambda_1\lambda_2 = D>0$, we have $\lambda_1>\lambda_2>0$, and the general term of $m_\ell$ is 
\begin{equation}
\label{eq:mgf-path}
m_\ell = \alpha_1 \lambda_1^\ell + \alpha_2 \lambda_2^\ell.
\end{equation}
The coefficients $\alpha_1$ and $\alpha_2$ can be determined via the first two terms $m_0=1$ and $m_1$. Then we get
\[
m_\ell = 
\pth{\frac{1}{2}+\frac{2m_1-T}{2\sqrt{T^2-4D}}} \lambda_1^\ell+\pth{\frac{1}{2}-\frac{2m_1-T}{2\sqrt{T^2-4D}}} \lambda_2^\ell.
\]
Furthermore, by plugging $m_1=1+p^2 (e^t-1)$, we get $T-m_1=D$ and thus $m_1(T-m_1)> D$, which is equivalent to $|2m_1-T|< \sqrt{T^2-4D}$. 
Therefore, both coefficients $\alpha_1,\alpha_2 \in (0,1)$.

The analysis for cycles follows from similar arguments.
Consider a cycle $C$ of size $\ell$ denoted by $[e_1\dots e_\ell]$ as illustrated in Figure~\ref{fig:general_circle}. 
For each $i=1,\dots,\ell$, define $A_{i-1}\triangleq \beta_{e_i}(G_1)$ and $B_i\triangleq \beta_{\pi(e_i)}(G_2)$. 
We also let $B_0=B_\ell$ for notational simplicity. 
Then $(A_i,B_i)\sim\multibern(p,p,\rho)$ for $i=0,\dots,\ell-1$. 
\begin{figure}[htb]
    \begin{center}
        \begin{tikzpicture}[global scale = 1]
            \draw (1,3) coordinate (A1);
            \draw (1,2) coordinate (A2);
            \draw (1,1) coordinate (A3);
            \draw (1,-0.5) coordinate (A4);
            \draw (3,3) coordinate (B1);
            \draw (3,2) coordinate (B2);
            \draw (3,1) coordinate (B3);
            \draw (3,-0.5) coordinate (B4);
            \draw[blue] (A1) -- node[above right,scale = 0.7] {} (B1) node[sloped, pos=0.5,scale=2]{\arrowIn};
            \draw[blue] (A2) -- (B2) node[sloped, pos=0.5,scale=2]{\arrowIn};
            \draw[blue] (A3) -- (B3) node[sloped, pos=0.5,scale=2]{\arrowIn};
            \draw[blue] (A4) -- (B4) node[sloped, pos=0.5,scale=2]{\arrowIn};
            \draw[red,dashed] (A2) -- node[above left,scale = 0.7] {$\pi^*$} (B1) node[sloped, pos=0.5,scale=2]{\arrowIn};
            \draw[red,dashed] (A3) -- (B2) node[sloped, pos=0.65,scale=2]{\arrowIn};
            \draw[red,dashed] (A1) -- (B4) node[sloped, pos=0.7,scale=2]{\arrowIn};
            \draw (A1) node[left,scale = 0.7] {$e_1$} node{$\bullet$};
            \draw (A2) node[left,scale = 0.7] {$e_2$} node{$\bullet$};
            \draw (A3) node[left,scale = 0.7] {$e_3$} node{$\bullet$};
            \draw (A4) node[left,scale = 0.7] {$e_\ell$} node{$\bullet$};
            \draw (B1) node[right,scale = 0.7] {$\pi(e_1) $} node{$\bullet$};
            \draw (B2) node[right,scale = 0.7] {$\pi(e_2) $} node{$\bullet$};
            \draw (B3) node[right,scale = 0.7] {$\pi(e_3) $} node{$\bullet$};
            \draw (B4) node[right,scale = 0.7] {$\pi(e_\ell)$} node{$\bullet$};
            \draw (2,0.75) node[below] {$\vdots$};

            \draw[->,thick] (4.5,1.5) -- (5.5,1.5);
            
            \draw (6.5,1.5) coordinate (C1);
            \draw (7.5,1.5) coordinate (C2);
            \draw (8.5,1.5) coordinate (C3);
            \draw (10,1.5) coordinate (C4);
            \draw (11,1.5) coordinate (C5);
            \draw[blue] (C1) -- (C2) node[sloped, pos=0.5,scale=2]{\arrowIn};
            \draw[blue] (C2) -- node[above,scale = 0.7] {$\pi$} (C3) node[sloped, pos=0.5,scale=2]{\arrowIn};
            \path (C3) -- (C4) node[pos=0.5,scale=1]{$\dots$};
            \draw[blue] (C4) -- (C5) node[sloped, pos=0.5,scale=2]{\arrowIn};
            \draw[blue] (C5) edge[bend right=30] node[sloped, pos=0.5,scale=2]{\arrowOut} (C1);
            \draw (C1) node[below,scale = 0.7] {$\begin{array}{c} e_1 \\ \pi(e_\ell) \end{array}$} node{$\bullet$};
            \draw (C2) node[below,scale = 0.7] {$\begin{array}{c} e_2 \\ \pi(e_1) \end{array}$} node{$\bullet$};
            \draw (C3) node[below,scale = 0.7] {$\begin{array}{c} e_3 \\ \pi(e_2) \end{array}$} node{$\bullet$};
            \draw (C4) node[below,scale = 0.7] {$\begin{array}{c} e_{\ell-1} \\ \pi(e_{\ell-2}) \end{array}$} node{$\bullet$};
            \draw (C5) node[below,scale = 0.7] {$\begin{array}{c} e_{\ell} \\ \pi(e_{\ell-1}) \end{array}$} node{$\bullet$};
        \end{tikzpicture}
    \end{center}
    \caption{Illustration of a cycle of size $\ell$.}
    \label{fig:general_circle}
\end{figure}
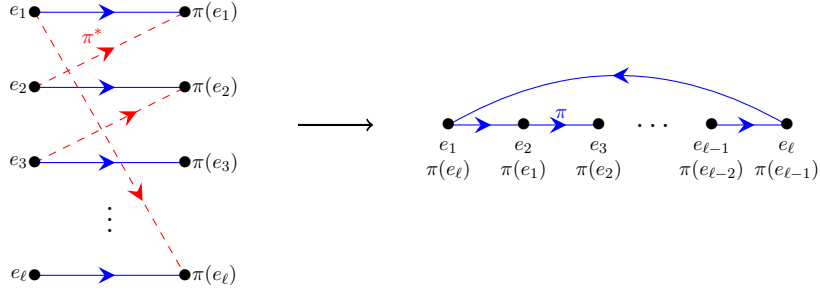
Following a similar argument as~\eqref{eq:mgf-path-1}, we have 
\begin{align*}
    \tilde m_\ell
    \triangleq\expect{e^{t\beta_C\pth{\maH^f_\pi}}}
    & =\sum_{b_1,\dots,b_\ell=b_0\in\{0,1\} } \prod_{i=1}^\ell \prob{B_i=b_i} \prod_{i=1}^\ell  \expect{ e^{t A_{i-1}b_i} | B_{i-1}=b_{i-1} } \\
    & = \sum_{b_1,\dots,b_\ell=b_0\in\{0,1\} } M(b_0,b_1) M(b_1,b_2) \dots M(b_{\ell-1},b_0).
\end{align*}
Applying the eigenvalue decomposition of $M$ again, we obtain that 
\begin{equation}
\label{eq:mgf-cycle}
\tilde m_\ell = \tr(M^\ell) = \lambda_1^\ell + \lambda_2^\ell.
\end{equation}

By definition, $\kappa^{\sfP}_\ell(t) = \log m_\ell$ and $\kappa^{\sfC}_\ell(t) = \log \tilde m_\ell$. 
To upper bound the cumulants, it suffices to consider $m_\ell$ and $\tilde m_\ell$.
In \eqref{eq:mgf-path}, we have $\alpha_1,\alpha_2\in(0,1)$ and $\lambda_1>\lambda_2>0$. 
By monotonicity, it follows that $m_\ell\le \tilde m_\ell$ and thus 
\[
\kappa^{\sfP}_\ell(t) \le \kappa^{\sfC}_\ell(t).
\]
For $x\in\reals^n$ and $\ell\ge 2$, we have $\Norm{x}_\ell \le \Norm{x}_2 \le \Norm{x}_1$.
It follows from the formula of $\tilde m_\ell$ in \eqref{eq:mgf-cycle} that $\tilde m_\ell^{1/\ell} \le \tilde m_2^{1/2} \le \tilde m_1$. Equivalently,
\[
\frac{1}{2}\kappa^{\sfC}_2(t) \le \kappa^{\sfC}_1(t),
\quad 
\kappa^{\sfC}_\ell(t)\le \frac{\ell}{2} \kappa^{\sfC}_2(t)\quad \forall~\ell\ge 2.
\]
The last inequality $2\kappa^{\sfP}_1(t) \le \kappa^{\sfC}_2(t)$ follows by comparing the explicit formula $\kappa^{\sfP}_1(t)=\log(1+p^2 (e^t-1))$ with $\kappa^{\sfC}_2(t)$ in~\eqref{eq:cumulants-C2} and using $p_{11}\ge p^2$.

Finally, since the summands over different connected components are independent, it follows that 
\begin{align*}
\log \expect{e^{t\beta_\maE(\maH_\pi^f)}}
& = \sum_{P\in\calP} \kappa_{|P|}^{\sfP}(t) + \sum_{C\in\calC} \kappa_{|C|}^{\sfC}(t) \\
& \le \sum_{P\in\calP} \frac{|P|}{2}\kappa_{2}^{\sfC}(t) + \sum_{C\in\calC: |C|\ge 2} \frac{|C|}{2}\kappa_{2}^{\sfC}(t) + \sum_{C\in\calC: |C|=1} \kappa_{1}^{\sfC}(t) \\
& = \frac{|\calE|}{2}\kappa_{2}^{\sfC}(t) + |\{C\in\calC: |C|=1\}| \pth{\kappa_{1}^{\sfC}(t) - \frac{1}{2}\kappa_{2}^{\sfC}(t) },
\end{align*}
where the last equality uses fact that $|\calE| = \sum_{P\in\calP} |P| + \sum_{C\in\calC} |C|$.

\begin{remark}
    We have two bounds for large $\ell$ in the  \ER model, namely $\kappa_\ell^\sfP(t)\le \kappa_\ell^\sfC(t)$ and $\kappa_\ell^\sfC(t)\le \frac{\ell}{2}\kappa_2^\sfC(t)$. 
 For the first bound, we apply $\frac{1}{\ell} \log\pth{\alpha_1 \lambda_1^\ell+\alpha_2 \lambda_2^\ell}\le \frac{1}{\ell} \log(\lambda_1^\ell+\lambda_2^\ell)$, where $0<\alpha_2<\alpha_1<1,\alpha_1+\alpha_2 = 1$ and $\lambda_1>\lambda_2>0$.
Consequently, 
$\lambda_1 - \frac{\log 2}{\ell} 
\le \frac{1}{\ell}\kappa_\ell^\sfP(t)
\le \frac{1}{\ell}\kappa_\ell^\sfC(t)
\le \lambda_1 + \frac{\log 2}{\ell}.$
Hence, the first bound is essentially tight for large $\ell$.
 The second bound, previously used in \cite{wu2022settling}, applies the inequality  $\Vert x\Vert_\ell \le \Vert x\Vert_2$, which becomes less tight as $\ell$ increases. 
Nevertheless, it suffices for our analysis as the probability of long cycles occurring is relatively small.
\end{remark}

\paragraph{Gaussian Wigner Model, Part 1}
In this part, we focus on the Gaussian Wigner model with $f(x,y) = xy$. The lower-order cumulants can be calculated directly:
\begin{align}
    \kappa_1^\sfC(t) &= -\frac{1}{2}\log(1-2t\rho-t^2(1-\rho^2)),
    \quad 
    \kappa_1^\sfP(t) =-\frac{1}{2}\log\pth{1-t^2}, \label{eq:cumulants-C1-Gaussian}\\
    \kappa_2^\sfC(t) &= -\frac{1}{2}\log\pth{1-2t^2(1+\rho^2)+t^4(1-\rho^2)^2}\label{eq:cumulants-C2-Gaussian}.
\end{align}
We first evaluate the moment generating function for paths. Consider a path $P$ of size $\ell$ denoted by $\langle e_1 e_2 \dots e_{\ell} \rangle$ as illustrated in Figure~\ref{fig:general_path}.
For each $i=1,\dots,\ell$, define $A_{i-1}\triangleq \beta_{e_i}(G_1)$ and $B_i\triangleq \beta_{\pi(e_i)}(G_2)$. Then $(A_i,B_i)\sim\mathcal{N}\pth{\begin{pmatrix} 0 \\ 0 \end{pmatrix}, \begin{pmatrix} 1 & \rho \\ \rho & 1 \end{pmatrix}}$. 
By definition~\eqref{eq:calE-counter}, 
\[
\beta_P(\maH^f_\pi) = \sum_{i=1}^{\ell}\beta_{e_i}(G_1)\beta_{\pi(e_i)}(G_2)
= \sum_{i=1}^{\ell} A_{i-1}B_i.
\]
For the sake of notational simplicity, we introduce an auxiliary variable $B_0$ that is correlated with $A_0$ such that $(A_0,B_0)\sim \mathcal{N}\pth{\begin{pmatrix} 0 \\ 0 \end{pmatrix}, \begin{pmatrix} 1 & \rho \\ \rho & 1 \end{pmatrix}}$. Then \begin{align}
    m_\ell
    \triangleq\expect{e^{t\beta_P\pth{\maH^f_\pi}}}
    &=\expect{ \expect{ \prod_{i=0}^{\ell-1} e^{t A_{i}B_{i+1}} | B_0\dots B_\ell} } 
    =\expect{  \prod_{i=0}^{\ell-1} \expect{ e^{t A_{i}B_{i+1}} | B_{i}, B_{i+1}}  }\nonumber \\
    &=\expect{\prod_{i=0}^{\ell-1} \exp\pth{t\rho B_{i}B_{i+1}+\frac{1}{2}t^2(1-\rho^2)B_{i+1}^2}} \label{eq:MGF-path-gauss},
\end{align}
where the last equality follows from $A_{i}|B_{i}\sim \maN(\rho B_{i},1-\rho^2)$ and $\expect{\exp(tZ)} = \exp\pth{t\mu+t^2\nu^2/2}$ for $Z\sim \maN(\mu,\nu^2)$.

Define $\mathbf{B}_\ell \triangleq [B_0,B_1,\cdots,B_\ell]$ and a matrix \begin{align*}
    \mathbf{W}_\ell \triangleq \begin{bmatrix}1 & -t\rho & 0 & \cdots & 0 \\ -t\rho & 1-t^2(1-\rho^2) & -t\rho & \cdots & 0 \\ 0&-t\rho & 1-t^2(1-\rho^2)  & \cdots & 0 \\ \vdots & \vdots & \vdots & \ddots & -t\rho \\ 0 & 0 & \cdots 0  & -t\rho & 1-t^2(1-\rho^2)\end{bmatrix},
\end{align*}
where $(\mathbf{W}_\ell)_{00} = 1,(\mathbf{W}_\ell)_{ii} = 1-t^2(1-\rho^2)$ for any $1\le i\le \ell$ and $(\mathbf{W}_\ell)_{ij} = -t\rho$ for any $|i-j| = 1$ with $0\le i,j\le \ell$. By~\eqref{eq:MGF-path-gauss}, we have \begin{align}
   \nonumber  m_\ell &= \expect{\prod_{i=0}^{\ell-1} \exp\pth{t\rho B_{i}B_{i+1}+\frac{1}{2}t^2(1-\rho^2)B_{i+1}^2}}\\\nonumber &=\int\cdots \int\pth{\frac{1}{\sqrt{2\pi}}}^{\ell+1}\exp\pth{-\frac{1}{2}\sum_{i=0}^\ell B_i^2}\exp\pth{\sum_{i=0}^{\ell-1}\pth{t\rho B_{i}B_{i+1}+\frac{1}{2}t^2(1-\rho^2)B_{i+1}^2}}\,\mathrm{d}B_0\cdots \mathrm{d}B_{\ell}\\\label{eq:MGF-ml-part1}&=\int\cdots \int \pth{\frac{1}{\sqrt{2\pi}}}^{\ell+1} \exp\pth{-\frac{1}{2} \mathbf{B}_\ell^\top  \mathbf{W}_\ell \mathbf{B}_\ell}\, \mathrm{d}B_0\cdots \mathrm{d}B_{\ell}.
\end{align}
If $\mathbf{W}_\ell$ is positive definite, then $m_\ell = \det\pth{\mathbf{W}_\ell}^{-1/2}$ by~\eqref{eq:MGF-ml-part1}. 
We then prove $\mathbf{W}_\ell$ is positive definite and compute the explicit formula of $\det\pth{\mathbf{W}_\ell}$.
Expanding the last column of $\mathbf{W}_\ell$ yields that \begin{align*}
    \det(\mathbf{W}_\ell) = \pth{1-t^2(1-\rho^2)} \det(\mathbf{W}_{\ell-1})-t^2 \rho^2 \det(\mathbf{W}_{\ell-2}),\text{ for any }\ell\ge 2.
\end{align*}
Therefore, the general term of $\det(\mathbf{W}_\ell)$ is determined by \begin{align*}
    \det(\mathbf{W}_\ell) = \alpha_1 \lambda_1^\ell +\alpha_2 \lambda_2^\ell,
\end{align*}
where $\lambda_1>\lambda_2$ are two roots of equation $x^2 - \pth{1-t^2(1-\rho^2)} x+ t^2 \rho^2 = 0$. The coefficients $\alpha_1$ and $\alpha_2$ can be determined via the first two terms $\det(\mathbf{W}_0) =1$ and $\det(\mathbf{W}_1) =1-t^2$. Specifically, \begin{align*}
    \alpha_1 = {\frac{1}{2}+\frac{1-t^2(1+\rho^2)}{2\sqrt{(1-t^2(1-\rho^2))^2-4t^2\rho^2}}}, \quad \alpha_2= {\frac{1}{2}-\frac{1-t^2(1+\rho^2)}{2\sqrt{(1-t^2(1-\rho^2))^2-4t^2\rho^2}}}.
\end{align*}
Since $1-t^2(1-\rho^2)-2t\rho = (1-t\rho -t)(1-t\rho+t)>0$ by $t<\frac{1}{1+\rho}$, we obtain $(1-t^2(1-\rho^2))^2-4t^2\rho^2>0$, and hence \begin{align*}
    \frac{1-t^2(1+\rho^2)}{\sqrt{(1-t^2(1-\rho^2))^2-4t^2\rho^2}} = \sqrt{\frac{(1-t^2(1-\rho^2))^2-4t^2\rho^2+4t^4\rho^2}{(1-t^2(1-\rho^2))^2-4t^2\rho^2}}>1,
\end{align*}
which implies that $\alpha_1>1$ and $\alpha_2<0$. Since $\lambda_1+\lambda_2 = 1-t^2(1-\rho^2)>0$ by $t<\frac{1}{1+\rho}$ and $\lambda_1\lambda_2 = t^2 \rho^2>0$ and the discriminant $(1-t^2(1-\rho^2))^2-4t^2\rho^2>0$ , we have $\lambda_1>\lambda_2>0$. Therefore, \begin{align*}
    \det(\mathbf{W}_\ell) =  \alpha_1 \lambda_1^\ell+\alpha_2 \lambda_2^\ell = \alpha_1(\lambda_1^\ell-\lambda_2^\ell)+\lambda_2^\ell>0, \text{ for any }\ell \ge 0,
\end{align*}
which implies that $\mathbf{W}_\ell$ is positive definite by Sylvester’s criterion (see, e.g., \cite[Theorem 7.2.5]{horn2012matrix}). Denote $\mathbf{W}_\ell = \mathbf{J}_\ell ^\top \mathbf{J}_\ell$ with $\mathbf{J}_\ell\in \mathbb{R}^{(\ell+1)\times (\ell+1)}$. Then, by~\eqref{eq:MGF-ml-part1}, \begin{align*}
    m_\ell 
    &= \int\cdots \int \pth{\frac{1}{\sqrt{2\pi}}}^{\ell+1} \exp\pth{-\frac{1}{2} \mathbf{B}_\ell^\top  \mathbf{W}_\ell \mathbf{B}_\ell}\, \mathrm{d}B_0\cdots \mathrm{d}B_{\ell}\\
    &= \int\cdots \int \pth{\frac{1}{\sqrt{2\pi}}}^{\ell+1} \exp\pth{-\frac{1}{2} \mathbf{B}_\ell^\top  \mathbf{J}_\ell^\top \mathbf{J}_\ell \mathbf{B}_\ell}\, \mathrm{d}B_0\cdots \mathrm{d}B_{\ell} = \qth{\det\pth{\mathbf{J}_\ell}}^{-1} = \qth{\det\pth{\mathbf{W}_\ell}}^{-1/2}.
\end{align*}

The analysis for cycles follows from similar arguments.
Consider a cycle $C$ of size $\ell$ denoted by $[e_1\dots e_\ell]$ as illustrated in Figure~\ref{fig:general_circle}. 
For each $i=1,\dots,\ell$, define $A_{i-1}\triangleq \beta_{e_i}(G_1)$ and $B_i\triangleq \beta_{\pi(e_i)}(G_2)$. 
We also let $B_0=B_\ell$ for notational simplicity. 
Then $(A_i,B_i)\sim \mathcal{N}\pth{\begin{pmatrix} 0 \\ 0 \end{pmatrix}, \begin{pmatrix} 1 & \rho \\ \rho & 1 \end{pmatrix}}$ for $i=0,\dots,\ell-1$. Recall that $\lambda_1+\lambda_2=1-t^2(1-\rho^2)$ and $\lambda_1\lambda_2 = t^2\rho^2$.
Following a similar argument as \eqref{eq:MGF-path-gauss}, we have \begin{align*}
    \ti{m}_\ell &\triangleq \expect{e^{t\beta_C(\maH^f_\pi)}} = \expect{\prod_{i=0}^{\ell-1} \exp\pth{t\rho B_{i}B_{i+1}+\frac{1}{2} t^2(1-\rho^2) B_{i+1}^2}} \\&= \int \cdots \int  
 \pth{\frac{1}{\sqrt{2\pi}}}^\ell \exp\pth{-\frac{1}{2} \sum_{i=0}^{\ell-1} B_i^2}\exp\pth{\sum_{i=0}^{\ell-1} \pth{t\rho B_{i}B_{i+1}+\frac{1}{2}t^2(1-\rho^2)B_{i+1}^2}} \,\mathrm{d} B_0\cdots \mathrm{d} B_{\ell-1}\\&= \int \cdots \int  
 \pth{\frac{1}{\sqrt{2\pi}}}^\ell \exp\pth{-\frac{1}{2} \sum_{i=0}^{\ell-1} \pth{\lambda_1^{1/2} B_{i}- \lambda_2^{1/2} B_{i+1}}^2} \,\mathrm{d} B_0\cdots \mathrm{d} B_{\ell-1}\\ &=\int\cdots \int \pth{\frac{1}{\sqrt{2\pi}}}^{\ell} \exp\pth{-\frac{1}{2} \mathbf{B}_{\ell-1}^\top  \ti{\mathbf{J}}_{\ell-1}^\top \ti{\mathbf{J}}_{\ell-1} \mathbf{B}_{\ell-1}}\, \mathrm{d}B_0\cdots \mathrm{d}B_{\ell-1} = \qth{\det(\ti{\mathbf{J}}_{\ell-1})}^{-1},
\end{align*}
where \begin{align*}
    \ti{\mathbf{J}}_{\ell-1} \triangleq \begin{bmatrix}
        \lambda_1^{1/2} & -\lambda_{2}^{1/2} &0& \cdots  &0 \\
        0&\lambda_1^{1/2 }&-\lambda_2^{1/2 } &\cdots &0\\
        0&0& \lambda_1^{1/2 }&\cdots &0\\
        \vdots & \vdots &\vdots &\ddots &-\lambda_2^{1/2 }\\
        -\lambda_2^{1/2 } &0&\cdots 0&0& \lambda_1^{1/2 }
    \end{bmatrix}
\end{align*} and hence $\det(\ti{\mathbf{J}}_{\ell-1}) = \lambda_1^{\ell/2} - \lambda_2^{\ell/2}$. Therefore, \begin{align*}
    m_{\ell}^{-2} - \ti{m}_\ell^{-2} &= \alpha_1 \lambda_1^\ell +\alpha_2 \lambda_2^\ell - \pth{\lambda_1^{\ell/2}-\lambda_2^{\ell/2}}^2 \\&=(\alpha_1-1)\pth{\lambda_1^{\ell} - \lambda_2^\ell}+\lambda_2^{\ell/2}\pth{\lambda_1^{\ell/2}-\lambda_2^{\ell/2}}+\lambda_1^{\ell/2} \lambda_2^{\ell/2}>0, 
\end{align*}
which implies that $m_\ell < \ti{m}_\ell$ for any $\ell\ge 1$. 

Next, we prove that $\ti{m}_\ell\le \pth{\ti{m}_2}^{\ell/2}$ for $\ell \ge 2$.  Indeed,  since $\pth{\frac{\lambda_1-\lambda_2}{\lambda_1}}^{\ell/2}+\pth{\frac{\lambda_2}{\lambda_1}}^{\ell/2}\le \frac{\lambda_1-\lambda_2}{\lambda_1}+\frac{\lambda_2}{\lambda_1}=1$ when $\ell\ge 2$, we obtain
$\ti{m}_\ell^{-1} = \lambda_1^{\ell/2} - \lambda_2^{\ell/2}\ge (\lambda_1-\lambda_2)^{\ell/2} = \ti{m}_{2}^{-\ell/2}$. We also have $\ti{m}_{2}^{-1} - \ti{m}_1^{-2} = 2\lambda_2^{1/2}\pth{\lambda_1^{1/2}-\lambda_2^{1/2}}>0$. Recall that $\kappa_\ell^\sfC(t)=\log\pth{\ti{m}_\ell}, \kappa_\ell^\sfC(t)=\log\pth{m_\ell}$, and $m_\ell<\ti{m}_\ell$.
Consequently, \begin{align*}
    \frac{1}{2}\kappa_2^\sfC(t)\le \kappa_1^\sfC(t),\quad \kappa_\ell^{\sfP}(t) \le \kappa_\ell^\sfC(t)\le \frac{\ell}{2} \kappa_2^\sfC(t)\quad\forall \ell\ge 2.
\end{align*}
The last inequality $\kappa_1^\sfP(t)\le \frac{1}{2}\kappa_2^\sfC(t)$ follows from \begin{align*}
    m_1^{-4} -\ti{m}_2^{-2} = \pth{1-t^2}^2- \qth{1-2t^2-2t^2\rho^2+t^4(1-\rho^2)^2} = t^2\rho^2\pth{2+2t^2-t^2\rho^2}\overset{\mathrm{(a)}}{>}0,
\end{align*}
where $\mathrm{(a)}$ is because $t^2 \rho^2 < \frac{\rho^2}{(1+\rho)^2}<1$ and $2+2t^2>1$.

Finally, since the summands over different connected components are independent, it follows that 
\begin{align*}
\log \expect{e^{t \beta_\maE(\maH_\pi^f)}}
& = \sum_{P\in\calP} \kappa_{|P|}^{\sfP}(t) + \sum_{C\in\calC} \kappa_{|C|}^{\sfC}(t) \\
& \le \sum_{P\in\calP} \frac{|P|}{2}\kappa_{2}^{\sfC}(t) + \sum_{C\in\calC: |C|\ge 2} \frac{|C|}{2}\kappa_{2}^{\sfC}(t) + \sum_{C\in\calC: |C|=1} \kappa_{1}^{\sfC}(t) \\
& = \frac{|\calE|}{2}\kappa_{2}^{\sfC}(t) + |\{C\in\calC: |C|=1\}| \pth{\kappa_{1}^{\sfC}(t) - \frac{1}{2}\kappa_{2}^{\sfC}(t) },
\end{align*}
where the last equality uses fact that $|\calE| = \sum_{P\in\calP} |P| + \sum_{C\in\calC} |C|$.

\paragraph{Gaussian Wigner Model, Part 2} In this part, we focus on the Gaussian Wigner model with $f(x,y) = -\frac{1}{2}(x-y)^2$.
The lower-order cumulants can be calculated directly:
\begin{align}
    \label{eq:kappa1=general}\kappa_1^\sfC(t) &= -\frac{1}{2}\log\pth{1+2t(1-\rho)},
    \quad
    \kappa_1^\sfP(t) = -\frac{1}{2}\log(1+2t),\\\label{eq:kappa2-general} \kappa_2^\sfC(t) &= -\frac{1}{2}\log\pth{(1+2t)^2-4t^2\rho^2}.
\end{align}
We first evaluate the moment generating function for paths. Consider a path $P$ of size $\ell$ denoted by $\langle e_1 e_2 \dots e_{\ell} \rangle$ as illustrated in Figure~\ref{fig:general_path}.
For each $i=1,\dots,\ell$, define $A_{i-1}\triangleq \beta_{e_i}(G_1)$ and $B_i\triangleq \beta_{\pi(e_i)}(G_2)$. Then $(A_i,B_i)\sim\mathcal{N}\pth{\begin{pmatrix} 0 \\ 0 \end{pmatrix}, \begin{pmatrix} 1 & \rho \\ \rho & 1 \end{pmatrix}}$. 
By definition~\eqref{eq:calE-counter}, 
\[
\beta_P(\maH_\pi^f) = \sum_{i=1}^{\ell}-\frac{1}{2}\pth{\beta_{e_i}(G_1)-\beta_{\pi(e_i)}(G_2)}^2
= \sum_{i=1}^{\ell} -\frac{1}{2}\pth{A_{i-1}-B_i}^2.
\]
For the sake of notational simplicity, we introduce an auxiliary variable $B_0$ that is correlated with $A_0$ such that $(A_0,B_0)\sim \mathcal{N}\pth{\begin{pmatrix} 0 \\ 0 \end{pmatrix}, \begin{pmatrix} 1 & \rho \\ \rho & 1 \end{pmatrix}}$. Then 
\begin{align}
    m_\ell
    \triangleq\expect{e^{t\beta_P\pth{\maH_\pi^f}}}
    &=\expect{\expect{ \prod_{i=0}^{\ell-1} e^{-\frac{t}{2} \pth{A_{i}-B_{i+1}}^2}| B_0\dots B_\ell} } 
    =\expect{\prod_{i=0}^{\ell-1} \expect{ e^{-\frac{t}{2} \pth{A_{i}-B_{i+1}}^2} | B_{i}, B_{i+1}}}\nonumber \\
    &=\expect{\prod_{i=0}^{\ell-1} \frac{1}{\sqrt{1+t(1-\rho^2)}}
    \exp\pth{\frac{-\frac{t}{2}\pth{\rho B_{i}-B_{i+1}}^2}{1+t(1-\rho^2)}}} \label{eq:MGF-path-gauss-general},
\end{align}
where the last equation follows from $(A_{i}-B_{i+1})\mid (B_{i},B_{i+1})\sim \maN(\rho B_{i}-B_{i+1},1-\rho^2)$ and $\expect{\exp(-tZ^2)} = \frac{1}{\sqrt{1+2t\nu^2}}\exp\pth{\frac{-\mu^2 t}{1+2t\nu^2}}$ for $Z\sim \maN(\mu,\nu^2)$ when $t>0$. 

Define $\mathbf{B}_\ell \triangleq [B_0,B_1,\cdots,B_\ell]$ and a matrix \begin{align*}
    \mathbf{W}_\ell \triangleq \begin{bmatrix}1+t & -t\rho & 0 & \cdots & 0 \\ -t\rho & 1+2t & -t\rho & \cdots & 0 \\ 0&-t\rho & 1+2t  & \cdots & 0 \\ \vdots & \vdots & \vdots & \ddots & -t\rho \\ 0 & 0 & \cdots 0  & -t\rho & 1+t(2-\rho^2)\end{bmatrix},
\end{align*}
where $(\mathbf{W}_\ell)_{00} = 1+t, (\mathbf{W}_\ell)_{\ell\ell} = 1+t(2-\rho^2),(\mathbf{W}_\ell)_{ii} = 1+2t$ for any $1\le i\le \ell-1$ and $(\mathbf{W}_\ell)_{ij} = -t\rho$ for any $|i-j| = 1$ with $0\le i,j\le \ell$. We note that the symmetric matrix 
 $\mathbf{W}_\ell$ is strictly diagonally dominant with positive diagonal entries, and then it is positive definite. This follows from the eigenvalues of symmetric matrix being real, and the Gershgorin's circle theorem \cite{gersgorin1931uber}.
 Denote $\mathbf{W}_\ell = \mathbf{J}_\ell ^\top \mathbf{J}_\ell$ with $\mathbf{J}_\ell\in \mathbb{R}^{(\ell+1)\times (\ell+1)}$. 
 Let $C_i = \frac{B_i}{\sqrt{1+t(1-\rho^2)}}$ for all $0\le i\le \ell$ and denote $\mathbf{C}_\ell\triangleq [C_0,C_1,\cdots,C_\ell]$.
By~\eqref{eq:MGF-path-gauss-general}, we note that \begin{align}
    \nonumber m_\ell &= \expect{\prod_{i=0}^{\ell-1} \frac{1}{\sqrt{1+t(1-\rho^2)}}
    \exp\pth{\frac{-\frac{t}{2}\pth{\rho B_{i}-B_{i+1}}^2}{1+t(1-\rho^2)}}}\\\nonumber
    &=\int \cdots \int \frac{\sqrt{1+t(1-\rho^2)}}{\pth{\sqrt{2\pi(1+t(1-\rho^2))}}^{\ell+1}}\exp\pth{-\frac{1}{2}\sum_{i=0}^{\ell} B_i^2-\frac{t}{2}\sum_{i=0}^{\ell-1}\frac{\pth{\rho B_i-B_{i+1}}^2}{1+t(1-\rho^2)}}\,\mathrm{d}B_0\cdots \mathrm{d}B_\ell\\\nonumber&=\int\cdots \int\frac{\sqrt{1+t(1-\rho^2)}}{\pth{\sqrt{2\pi}}^{\ell+1}}\exp\pth{-\frac{1}{2}\sum_{i=0}^\ell (1+t(1-\rho^2))C_i^2 -\frac{t}{2}\sum_{i=0}^{\ell-1}\pth{\rho C_i-C_{i+1}}^2}\,\mathrm{d}C_0\cdots \mathrm{d}C_\ell\\\nonumber
    &=\sqrt{1+t(1-\rho^2)}\int \cdots \int \pth{\frac{1}{\sqrt{2\pi}}}^{\ell+1} \exp\pth{-\frac{1}{2}\mathbf{C}_\ell^\top \mathbf{W}_\ell \mathbf{C}_\ell }\,\mathrm{d}C_0\cdots \mathrm{d}C_\ell\\\nonumber&=\sqrt{1+t(1-\rho^2)}\int \cdots \int \pth{\frac{1}{\sqrt{2\pi}}}^{\ell+1} \exp\pth{-\frac{1}{2}\mathbf{C}_\ell^\top \mathbf{J}_\ell^\top \mathbf{J}_\ell \mathbf{C}_\ell }\,\mathrm{d}C_0\cdots \mathrm{d}C_\ell 
    \\\label{eq:MGF-ml-part2}&= \sqrt{1+t(1-\rho^2)}\qth{\det\pth{\mathbf{J}_\ell}}^{-1} = \qth{\frac{\det\pth{\mathbf{W}_\ell}}{1+t(1-\rho^2)}}^{-1/2}
\end{align}

We then compute the explicit formula of $\det\pth{\mathbf{W}_\ell}$.
Let \begin{align*}
    \mathbf{U}_\ell \triangleq \begin{bmatrix}1+t & -t\rho & 0 & \cdots & 0 \\ -t\rho & 1+2t & -t\rho & \cdots & 0 \\ 0&-t\rho & 1+2t  & \cdots & 0 \\ \vdots & \vdots & \vdots & \ddots & -t\rho \\ 0 & 0 & \cdots 0  & -t\rho & 1+2t\end{bmatrix},
\end{align*}
where $(\mathbf{U}_\ell)_{00} = 1+t, (\mathbf{U}_\ell)_{ii} = 1+2t$ for any $1\le i\le \ell$ and $(\mathbf{U}_\ell)_{ij} = -t\rho$ for any $|i-j| = 1$ with $0\le i,j\le \ell$. Expanding the last column of $\mathbf{W}_\ell$ and $\mathbf{U}_\ell$ yields that \begin{align*}
    \det(\mathbf{W}_\ell) &= \pth{1+t(2-\rho^2)} \det(\mathbf{U}_{\ell-1})-t^2 \rho^2 \det(\mathbf{U}_{\ell-2}),\text{ for any }\ell\ge 2,\\
    \det(\mathbf{U}_\ell) &= \pth{1+2t} \det(\mathbf{U}_{\ell-1})-t^2 \rho^2 \det(\mathbf{U}_{\ell-2}),\text{ for any }\ell\ge 2.
\end{align*}
Therefore, the general form of $\det(\mathbf{U}_\ell)$ is determined by $\det(\mathbf{U}_\ell) = \alpha_1' \lambda_1^\ell+\alpha_2'\lambda_2^\ell$, where $\lambda_1>\lambda_2$ are two roots of the equation $x^2-(1+2t)x+t^2\rho^2=0$. Since $\lambda_1+\lambda_2 = 1+2t>0$ and $\lambda_1\lambda_2=t^2\rho^2>0$ and the discriminant $(1+2t)^2-4t^2\rho^2>0$, we obtain $\lambda_1>\lambda_2>0$.
Consequently, the general form of $\det(\mathbf{W}_\ell)$ is given by \begin{align*}
    \det(\mathbf{W}_\ell) = \alpha_1'' \lambda_1^\ell+\alpha_2''\lambda_2^\ell.
\end{align*}
The coefficients $\alpha_1''$ and $\alpha_2''$ can be determined via the first two terms $\det(\mathbf{W}_1) = (1+t(1-\rho^2))(1+2t)$ and $\det(\mathbf{W}_2) = (1+t(1-\rho^2))(1+4t+t^2(4-\rho^2))$. Then we get \begin{align*}
    \det(\mathbf{W}_\ell) = (1+t(1-\rho^2))\qth{\pth{\frac{1 }{2}+\frac{1+2t}{2\sqrt{(1+2t)^2-4t^2\rho^2}}} \lambda_1^\ell+\pth{\frac{1 }{2}-\frac{1+2t}{2\sqrt{(1+2t)^2-4t^2\rho^2}}} \lambda_2^\ell}.
\end{align*}
Denote $\det(\mathbf{W}_\ell) = (1+t(1-\rho^2))\pth{\alpha_1 \lambda_1^\ell+\alpha_2 \lambda_2^\ell}$. Then, $\alpha_1+\alpha_2 = 1$.
Since $1+2t>\sqrt{(1+2t)^2-4t^2\rho^2}$, we obtain that $\alpha_1>0$ and $\alpha_2<0$. 
 Then, by~\eqref{eq:MGF-ml-part2},\begin{align*}
    m_\ell  = \qth{\frac{\det\pth{\mathbf{W}_\ell}}{1+t(1-\rho^2)}}^{-1/2} = \pth{\alpha_1\lambda_1^\ell+\alpha_2\lambda_2^\ell}^{-1/2}.
\end{align*}

The analysis for cycles follows from similar arguments.
Consider a cycle $C$ of size $\ell$ denoted by $[e_1\dots e_\ell]$ as illustrated in Figure~\ref{fig:general_circle}. 
For each $i=1,\dots,\ell$, define $A_{i-1}\triangleq \beta_{e_i}(G_1)$ and $B_i\triangleq \beta_{\pi(e_i)}(G_2)$. 
We also let $B_0=B_\ell$ for notational simplicity. 
Then $(A_i,B_i)\sim \mathcal{N}\pth{\begin{pmatrix} 0 \\ 0 \end{pmatrix}, \begin{pmatrix} 1 & \rho \\ \rho & 1 \end{pmatrix}}$ for $i=0,\dots,\ell-1$. Recall that $\lambda_1+\lambda_2=1+2t$ and $\lambda_1\lambda_2 = t^2\rho^2$.
Following a similar argument as \eqref{eq:MGF-path-gauss-general}, we have \begin{align*}
    \ti{m}_\ell &\triangleq \expect{e^{t\beta_C(\maH_\pi^f)}} = \expect{\prod_{i=0}^{\ell-1} \frac{1}{\sqrt{1+t(1-\rho^2)}}
    \exp\pth{\frac{-\frac{t}{2}\pth{\rho B_{i}-B_{i+1}}^2}{1+t(1-\rho^2)}}}\\&=\int \cdots\int \pth{\frac{1}{\sqrt{2\pi(1+t(1-\rho^2))}}}^\ell \exp\pth{-\frac{1}{2}\sum_{i=0}^{\ell-1} B_i^2-\frac{t}{2}\sum_{i=0}^{\ell-1}\frac{\pth{\rho B_i-B_{i+1}}^2}{1+t(1-\rho^2)}}\,\mathrm{d} B_0\cdots \mathrm{d} B_{\ell-1}\\&=\int\cdots\int \pth{\frac{1}{\sqrt{2\pi}}}^{\ell}\exp\pth{-\frac{1}{2}\sum_{i=0}^{\ell-1}(1+2t)C_i^2+\sum_{i=0}^{\ell-1}t\rho C_iC_{i+1} }\,\mathrm{d} C_0\cdots \mathrm{d} C_{\ell-1} \\&= \int \cdots \int  
 \pth{\frac{1}{\sqrt{2\pi}}}^\ell \exp\pth{-\frac{1}{2} \sum_{i=0}^{\ell-1} \pth{\lambda_1^{1/2} C_{i}- \lambda_2^{1/2} C_{i+1}}^2} \,\mathrm{d} C_0\cdots \mathrm{d} C_{\ell-1}\\ &=\int\cdots \int \pth{\frac{1}{\sqrt{2\pi}}}^{\ell} \exp\pth{-\frac{1}{2} \mathbf{C}_{\ell-1}^\top  \ti{\mathbf{J}}_{\ell-1}^\top \ti{\mathbf{J}}_{\ell-1} \mathbf{C}_{\ell-1}}\, \mathrm{d}C_0\cdots \mathrm{d}C_{\ell-1} = \qth{\det(\ti{\mathbf{J}}_{\ell-1})}^{-1},
\end{align*}
where \begin{align*}
    \ti{\mathbf{J}}_{\ell-1} = \begin{bmatrix}
        \lambda_1^{1/2} & -\lambda_{2}^{1/2} &0& \cdots  &0 \\
        0&\lambda_1^{1/2 }&-\lambda_2^{1/2 } &\cdots &0\\
        0&0& \lambda_1^{1/2 }&\cdots &0\\
        \vdots & \vdots &\vdots &\ddots &-\lambda_2^{1/2 }\\
        -\lambda_2^{1/2 } &0&\cdots 0&0& \lambda_1^{1/2 }
    \end{bmatrix}
\end{align*} and hence $\det(\ti{\mathbf{J}}_{\ell-1}) = \lambda_1^{\ell/2} - \lambda_2^{\ell/2}$. Therefore, \begin{align*}
    m_{\ell}^{-2} - \ti{m}_\ell^{-2} &= \alpha_1 \lambda_1^\ell +\alpha_2 \lambda_2^\ell - \pth{\lambda_1^{\ell/2}-\lambda_2^{\ell/2}}^2 \\&=(\alpha_1-1)\pth{\lambda_1^{\ell} - \lambda_2^\ell}+\lambda_2^{\ell/2}\pth{\lambda_1^{\ell/2}-\lambda_2^{\ell/2}}+\lambda_1^{\ell/2} \lambda_2^{\ell/2}>0, 
\end{align*}
which implies that $m_\ell < \ti{m}_\ell$ for any $\ell\ge 1$. 

Next, we prove that $\ti{m}_\ell\le \pth{\ti{m}_2}^{\ell/2}$ for $\ell \ge 2$.  Indeed,  $\ti{m}_\ell^{-1} = \pth{\lambda_2+(\lambda_1-\lambda_2)}^{\ell/2} - \lambda_2^{\ell/2}\ge (\lambda_1-\lambda_2)^{\ell/2} = \ti{m}_{2}^{-\ell/2}$, and $\ti{m}_{2}^{-1} - \ti{m}_1^{-2} = 2\lambda_2^{1/2}\pth{\lambda_1^{1/2}-\lambda_2^{1/2}}>0$. Recall that $\kappa_\ell^\sfC(t)=\log\pth{\ti{m}_\ell}, \kappa_\ell^\sfC(t)=\log\pth{m_\ell}$, and $m_\ell<\ti{m}_\ell$.
Consequently, \begin{align*}
    \frac{1}{2}\kappa_2^\sfC(t)\le \kappa_1^\sfC(t),\quad \kappa_\ell^{\sfP}(t) \le \kappa_\ell^\sfC(t)\le \frac{\ell}{2} \kappa_2^\sfC(t)\quad\forall \ell\ge 2.
\end{align*}
The last inequality $\kappa_1^\sfP(t)\le \frac{1}{2}\kappa_2^\sfC(t)$ follows from \begin{align*}
    m_1^{-4} -\ti{m}_2^{-2} = \pth{1+2t}^2- \qth{(1+2t)^2-4t^2\rho^2} = 4t^2\rho^2>0.
\end{align*}

Finally, since the summands over different connected components are independent, it follows that 
\begin{align*}
\log \expect{e^{t \beta_\maE(\maH_\pi^f)}}
& = \sum_{P\in\calP} \kappa_{|P|}^{\sfP}(t) + \sum_{C\in\calC} \kappa_{|C|}^{\sfC}(t) \\
& \le \sum_{P\in\calP} \frac{|P|}{2}\kappa_{2}^{\sfC}(t) + \sum_{C\in\calC: |C|\ge 2} \frac{|C|}{2}\kappa_{2}^{\sfC}(t) + \sum_{C\in\calC: |C|=1} \kappa_{1}^{\sfC}(t) \\
& = \frac{|\calE|}{2}\kappa_{2}^{\sfC}(t) + |\{C\in\calC: |C|=1\}| \pth{\kappa_{1}^{\sfC}(t) - \frac{1}{2}\kappa_{2}^{\sfC}(t) },
\end{align*}
where the last equality uses fact that $|\calE| = \sum_{P\in\calP} |P| + \sum_{C\in\calC} |C|$.

\begin{remark}
    We have two bounds for large $\ell$ in the Gaussian Wigner model, namely $\kappa_\ell^\sfP(t)\le \kappa_\ell^\sfC(t)$ and $\kappa_\ell^\sfC(t)\le \frac{\ell}{2}\kappa_2^\sfC(t)$. 
 For the first bound, we apply $\frac{1}{\ell} \log\pth{\alpha_1 \lambda_1^\ell+\alpha_2 \lambda_2^\ell}\ge \frac{1}{\ell} \log\pth{\pth{\lambda_1^{\ell/2}-\lambda_2^{\ell/2}}^2}$, where $\alpha_2<0<1<\alpha_1,\alpha_1+\alpha_2 = 1$ and $\lambda_1>\lambda_2>0$.
Consequently, 
$-2\pth{\lambda_1 + \frac{\log \alpha_1}{\ell} }
\le \frac{1}{\ell}\kappa_\ell^\sfP(t)
\le \frac{1}{\ell}\kappa_\ell^\sfC(t)
\le -2\pth{\lambda_1+\frac{1}{\ell} \log\qth{1-2\pth{\frac{\lambda_2}{\lambda_1}}^{\ell/2}}}.$
Hence, the first bound is essentially tight for large $\ell$.
 The second bound, previously used in \cite{wu2022settling}, applies the inequality  $(x+y)^{\ell/2} -y^{\ell/2}\ge x^{\ell/2}$, which becomes less tight as $\ell$ increases. 
Nevertheless, it suffices for our analysis as the probability of long cycles occurring is relatively small.
\end{remark}

\subsection{Proof of Lemma \ref{lem: strong noise event}}\label{apdsec: strong noise event}
We first upper bound the higher-order cumulants.
By Lemma \ref{lem: upper bound for kappa}, for any $t>0$,
    \begin{align*}
         \log \expect{e^{t \beta_{\maE_\pi}(\maH_\pi^f)}} \le \frac{|\maE_\pi|}{2} \kappa^{\sfC}_2(t)+L\pth{\kappa^{\sfC}_1(t)-\frac{1}{2} \kappa^{\sfC}_2(t)},
    \end{align*}
    where $L$ denotes the number of self-loops. The self-loop for $e$ only happens when $\pi(e) = \pi^*(e)$. For $uv\in \binom{V(G_1)}{2}\backslash \binom{F_\pi}{2}$, by the definition of $F_\pi$, $\pi(u) \neq \pi^*(u)$ or $\pi(v) \neq \pi^*(v)$. Therefore, $\pi(uv) = \pi^*(uv)$ implies that $\pi(u) = \pi^*(v)$ and $\pi(v) = \pi^*(u)$. Since $d(\pi^*,\pi) = k$, we must have $L\le \frac{k}{2}$.
Applying the formulas \eqref{eq:cumulants-C1} and \eqref{eq:cumulants-C2} and the fact that $p_{11}\le p$, we obtain  
\begin{align*}
    \kappa^{\sfC}_2(t) 
    &\le \log\pth{1+2p^2(e^t-1)+p^2(e^t-1)^2}= \log\pth{1+p^2(e^{2t}-1)}\le p^2(e^{2t}-1)
\end{align*}
and  
\begin{align*}
         \kappa^{\sfC}_1(t)-\frac{1}{2}\kappa^{\sfC}_2(t) 
        &= \frac{1}{2}\log\qth{1+\frac{2(p_{11}-p^2)}{p_{11}^2(e^t-1)+2p^2+(e^t-1)^{-1}}}\\
        &\overset{\mathrm{(a)}}{\le} \frac{1}{2}\log\qth{1+\frac{2(p_{11}-p^2)}{2(p_{11}+p^2)}} \overset{\mathrm{(b)}}{\le} \frac{\gamma}{2(\gamma+2)},
    \end{align*}
where $\mathrm{(a)}$ is because $p_{11}^2(e^t-1)+(e^t-1)^{-1}\ge 2\sqrt{p_{11}^2(e^t-1)(e^t-1)^{-1}} = 2p_{11}$ and $\mathrm{(b)}$ is because $p_{11} = p^2(1+\gamma)$ and $\log(1+x)\le x$ for any $x\ge 0$. 
Therefore, we obtain
\[
\log \expect{e^{t\beta_{\maE_\pi}(\maH_\pi^f)}} 
\le  \frac{|\calE_{\pi}|}{2} p^2(e^{2t}-1)+\frac{k\gamma}{4(\gamma+2)}.
\]

We then apply the Chernoff bound the provide an upper bound for $\prob{ \beta_{\maE_\pi}(\maH_\pi^f) \ge \tau_k }$.
For any $t>0$,
\begin{align*}
    \prob{ \beta_{\maE_\pi}(\maH_\pi^f) \ge \tau_k }
    \le \exp\pth{-t\tau_k+ \frac{|\calE_{\pi}|}{2} p^2(e^{2t}-1) + \frac{k\gamma}{4(2+\gamma)} }. 
\end{align*}
Let $t = \frac{1}{2}\log\pth{\frac{\tau_k}{|\maE_\pi|p^2}}$. 
Then $t>0$ by the assumption $\tau_k>|\maE_\pi|p^2$. 
We obtain that
\begin{align*}
    \prob{\beta_{\maE_\pi}(\maH_\pi^f) \ge \tau_k}\le \exp\pth{-\frac{\tau_k}{2}\log\pth{\frac{\tau_k}{|\maE_\pi|p^2}}+\frac{\tau_k}{2}-\frac{|\maE_\pi|p^2}{2}+\frac{k\gamma}{4(2+\gamma)}}.
\end{align*}

\subsection{Proof of Lemma~\ref{lem: strong noise event-Gaussian}}\label{apdsec: strong noise event-Gaussian}

We first upper bound the higher-order cumulants. Let $t = \frac{\rho}{3\sqrt{1+\rho^2}}$. Then $0<t<\frac{1}{1+\rho}$ since $0< \rho< 1$. By Lemma~\ref{lem: upper bound for kappa}, 
    \begin{align*}
         \log \expect{e^{t \beta_{\maE_\pi}(\maH_\pi^f)}} \le \frac{|\maE_\pi|}{2} \kappa^{\sfC}_2(t)+L\pth{\kappa^{\sfC}_1(t)-\frac{1}{2} \kappa^{\sfC}_2(t)},
    \end{align*}
    where $L$ denotes the number of self-loops. The self-loop for $e$ only happens when $\pi(e) = \pi^*(e)$. For $uv\in \binom{V(G_1)}{2}\backslash \binom{F_\pi}{2}$, by the definition of $F_\pi$, $\pi(u) \neq \pi^*(u)$ or $\pi(v) \neq \pi^*(v)$. Therefore, $\pi(uv) = \pi^*(uv)$ implies that $\pi(u) = \pi^*(v)$ and $\pi(v) = \pi^*(u)$. Since $d(\pi^*,\pi) = k$, we must have $L\le \frac{k}{2}$.
Applying the formulas~\eqref{eq:cumulants-C1-Gaussian} and \eqref{eq:cumulants-C2-Gaussian}, we obtain that \begin{align*}
    \kappa_1^\sfC(t)-\frac{1}{2}\kappa_2^\sfC(t) =& \frac{1}{4}\log\qth{\frac{\pth{1-t^2(1-\rho^2)-2t\rho}\pth{1-t^2(1-\rho^2)+2t\rho}}{\pth{1-t^2(1-\rho^2)-2t\rho}^2}}\\=&\frac{1}{4}\log\qth{1+\frac{4t\rho}{1-t^2(1-\rho^2)-2t\rho}}\overset{\mathrm{(a)}}{\le} \frac{1}{4}\log\pth{1+\frac{12\rho^2}{9-7\rho^2+\rho^4}}\overset{\mathrm{(b)}}{\le} \frac{\log 5}{4},
\end{align*}
where $\mathrm{(a)}$ is because $\frac{4t\rho}{1-t^2(1-\rho^2)-2t\rho} = \frac{12\rho^2}{9\sqrt{1+\rho^2}-\frac{\rho^2(1-\rho^2)}{\sqrt{1+\rho^2}} - 6\rho^2}\le \frac{12\rho^2}{9-\rho^2(1-\rho^2)-6\rho^2}=\frac{12\rho^2}{9-7\rho^2+\rho^4}$; $\mathrm{(b)}$ is because $\frac{12x^2}{9-7x^2+x^4}$ is increasing on $(0,1)$ and $\rho<1$.

By the Chernoff bound, we have that
\begin{align*}
    \prob{\beta_{\maE_\pi}\pth{\maH_\pi^f} \ge \tau_k} \le \exp\pth{-t\tau_k+\frac{|\maE_\pi|}{2} \kappa^{\sfC}_2(t)+L\pth{\kappa^{\sfC}_1(t)-\frac{1}{2} \kappa^{\sfC}_2(t)}}.
\end{align*}
It remains to upper bound $-t\tau_k+\frac{|\maE_\pi|}{2}\kappa_2^\sfC(t)$. We note that
\begin{align*}
    -t\tau_k+\frac{|\maE_\pi|}{2}\kappa_2^\sfC(t)&=-t\tau_k-\frac{|\maE_\pi|}{4}\log\qth{1-\frac{2\rho^2}{9}+\frac{\rho^4(1-\rho^2)^2}{81(1+\rho^2)^2}}\\&\overset{\mathrm{(a)}}{\le} -t\tau_k+\frac{9|\maE_\pi|}{28}\qth{\frac{2\rho^2}{9}-\frac{\rho^4(1-\rho^2)^2}{81(1+\rho^2)^2}}\overset{\mathrm{(b)}}{\le} -\frac{\rho\tau_k}{6}+\frac{\rho^2|\maE_\pi|}{14},
\end{align*}
where $\mathrm{(a)}$ is because $\frac{2\rho^2}{9}-\frac{\rho^4(1-\rho^2)^2}{81(1+\rho^2)^2} = \frac{\rho^2\pth{18(1+\rho^2)^2-\rho^2\pth{1-\rho^2}^2}}{81(1+\rho^2)^2}\ge 0$ and $\log(1-x)\ge -\frac{9x}{7}$ for any $0\le x\le \frac{2}{9}$; $\mathrm{(b)}$ is because $t\ge \frac{\rho}{6}$. Therefore, we obtain that \begin{align*}
     \prob{\beta_{\maE_\pi}\pth{\maH_\pi^f} \ge \tau_k}\le \exp\pth{-\frac{\rho\tau_k}{6}+\frac{\rho^2|\maE_\pi|}{14}+\frac{\log 5}{8}k}.
\end{align*}

\subsection{Proof of Lemma~\ref{lem:strong noise event-Gaussian-2}}\label{apdsec: strong noise event-Gaussian-2}
    We first upper bound the higher-order cumulants. By Lemma~\ref{lem: upper bound for kappa}, for any $t>0$, \begin{align*}
        \log \expect{e^{t\beta_{\maE_\pi}\pth{\maH_\pi^f}}} \le \frac{|\calE_\pi|}{2} \kappa^{\sfC}_2(t)+L\pth{\kappa^{\sfC}_1(t)-\frac{1}{2} \kappa^{\sfC}_2(t)},
    \end{align*}
    where $L$ denotes the number of self-loops. The self-loop for $e$ only happens when $\pi(e) = \pi^*(e)$. For $uv\in \binom{V(G_1)}{2}\backslash \binom{F_\pi}{2}$, by the definition of $F_\pi$, $\pi(u) \neq \pi^*(u)$ or $\pi(v) \neq \pi^*(v)$. Therefore, $\pi(uv) = \pi^*(uv)$ implies that $\pi(u) = \pi^*(v)$ and $\pi(v) = \pi^*(u)$. Since $d(\pi^*,\pi) = k$, we must have $L\le \frac{k}{2}$.
Let $t = \frac{\rho}{4(1-\rho)}$. Applying the formulas~\eqref{eq:kappa1=general} and \eqref{eq:kappa2-general}, we obtain that \begin{align*}
    \kappa_1^\sfC(t) - \frac{1}{2}\kappa_2^\sfC(t) &= \frac{1}{4}\log\qth{\frac{(1+2t-2t\rho)(1+2t+2t\rho)}{(1+2t(1-\rho))^2}}\\&=\frac{1}{4}\log \pth{1+\frac{4t\rho}{1+2t-2t\rho}}\le  \frac{1}{4}\log\pth{\frac{1}{1-\rho}},
\end{align*}
where the last inequality follows from $1+\frac{4t\rho}{1+2t-2t\rho} = \frac{2+\rho^2-\rho}{2+\rho}\cdot \frac{1}{1-\rho}\le \frac{1}{1-\rho}$.

By the Chernoff bound, we have that
\begin{align*}
    \prob{\beta_{\maE_\pi}\pth{\maH_\pi^f} \ge \tau_k} \le \exp\pth{-t\tau_k+\frac{|\maE_\pi|}{2} \kappa^{\sfC}_2(t)+L\pth{\kappa^{\sfC}_1(t)-\frac{1}{2} \kappa^{\sfC}_2(t)}}.
\end{align*}
It remains to upper bound $-t\tau_k+\frac{|\maE_\pi|}{2}\kappa_2^\sfC(t)$.
We note that \begin{align*}
    -t\tau_k+\frac{|\maE_\pi|}{2}\kappa_2^\sfC(t) &= -t\tau_k-\frac{|\maE_\pi|}{4} \log\qth{(1+2t)^2-4t^2\rho^2}\\&\le -t\tau_k-\frac{|\maE_\pi|}{4}\log(1+4t)=  -\frac{\rho\tau_k}{4(1-\rho)} - \frac{|\maE_\pi|}{4}\log\pth{\frac{1}{1-\rho}}.
\end{align*}
Therefore, we obtain that
\begin{align*}
    \prob{\beta_{\maE_\pi}(\maH_\pi^f)\ge \tau_k}\le \exp\pth{ -\frac{\rho\tau_k}{4(1-\rho)}- \frac{|\maE_\pi|}{4}\log\pth{\frac{1}{1-\rho}}+\frac{k}{8}\log\pth{\frac{1}{1-\rho}}}.
\end{align*}

\subsection{\hd{Proof of Lemma~\ref{lem:mutual-pack}}}\label{apd:proof-mutual-pack}
\hd{ The size of $\calM_\delta$ follows from the standard volume argument \cite[Theorem 27.3]{polyanskiy2025information}.
For $r\in [m]$, let $B(\pi,r)\triangleq \{\pi':d(\pi,\pi')\le r\}$ denote the ball of radius $r$ centered at $\pi$. Then, we obtain 
\[
|\calM_\delta|\ge \frac{|\maS_{n,m}|}{\max_\pi |B(\pi,(1-\delta)m-1)|}\ge \frac{|\maS_{n,m}|}{\max_\pi |B(\pi,(1-\delta)m)|}.
\]
It remains to evaluate the cardinality of $\maS_{n,m}$ and upper bound the volume of the ball under our distance metric $d$.
It is straightforward to obtain that $|\maS_{n,m}|=\binom{n}{m}^2 m!.$
Let $k=\delta m$.
Note that all elements from $B(\pi,m-k)$ have at least $k$ common mappings. 
To upper bound $|B(\pi,m-k)|$, we first choose $k$ elements from the domain of $\pi$ and map to the same value as $\pi$, and the remaining domain and range of size $m-k$ and the mapping are selected arbitrarily. We get
$
|B(\pi,m-k)|\le \binom{m}{k}\binom{n-k}{m-k}^2 (m-k)!.$
Consequently,
\begin{equation}
\label{eq:packing}
|\calM_\delta|
\ge \frac{\binom{n}{m}^2m!}{\binom{m}{k}\binom{n-k}{m-k}^2(m-k)!}
= \pth{\frac{\binom{n}{k}}{\binom{m}{k}}}^2k!
> \pth{\frac{n^2k}{e^3m^2}}^k
\ge \pth{\frac{\delta n}{e^3}}^k,
\end{equation}
where we use the inequalities that $(\frac{n}{k})^k \le \binom{n}{k}<(\frac{e n}{k})^k$ and $k!\ge (k/e)^k$.}

\hd{In the \ER model, for any $\pi$ with domain $S$ and range $T$ such that $|S| = |T| = m$,
we arbitrarily pick a bijection $\sigma:V(G_1)\mapsto V(G_2)$ such that $\sigma|_{S} = \pi$. 
Then, the conditional distribution $\maP_{G_1,G_2|\pi}$ can be factorized into
\[
\maP_{G_1,G_2|\pi} 
=\prod_{e\in \binom{S}{2}} P(e,\pi(e)) \prod_{e\in \binom{V(G_1)}{2}\backslash \binom{S}{2}} Q(e,\sigma(e)),
\]
where $P= \multibern(p,p,\rho)$ and $Q= \multibern(p,p,0)$.
Pick $\maQ$ to be an auxiliary null model under which $G_1$ and $G_2$ are independent with the same marginal as $\maP$. 
Then, $\maQ_{G_1,G_2}$ can be factorized into
$$
\maQ_{G_1,G_2} = \prod_{e\in \binom{S}{2}} Q(e,\pi(e)) \prod_{e\in \binom{V(G_1)}{2}\backslash \binom{S}{2}} Q(e,\sigma(e)).
$$
The KL-divergence between the product measures $\maP_{G_1,G_2|\pi}$ and $\maQ_{G_1,G_2}$ can be expressed as \begin{align*}
    D\pth{\maP_{G_1,G_2|\pi}\| \maQ_{G_1,G_2}} 
    = \binom{m}{2} D(P\Vert Q)
\end{align*}
for any $\pi:S\mapsto T$ with $|S| = |T| =m$. By Lemma~\ref{lmm:KL-Bern} and~\eqref{eq:MI-radius},
we obtain 
\begin{equation}\label{eq:MI-ub}
I(\pi^*;G_1,G_2)\le \max_{\pi} D(\maP_{G_1,G_2|\pi}\Vert \maQ_{G_1,G_2})
\le \binom{m}{2}D(P\Vert Q)\le 25\binom{m}{2}p^2 \phi(\gamma).
\end{equation}}

\hd{In the Gaussian Wigner model, let $P= \maN \pth{\begin{pmatrix} 0 \\ 0 \end{pmatrix}, \begin{pmatrix} 1 & \rho \\ \rho & 1 \end{pmatrix}}$ and $Q= \maN \pth{\begin{pmatrix} 0 \\ 0 \end{pmatrix}, \begin{pmatrix} 1 & 0 \\ 0 & 1 \end{pmatrix}}$.
We can similarly pick the auxiliary null model $\maQ$ under which $G_1$ and $G_2$ are independent. Then, $\maQ$ can be factorized into $$
\maQ_{G_1,G_2} = \prod_{e\in \binom{S}{2}} Q(e,\pi(e)) \prod_{e\in \binom{V(G_1)}{2}\backslash \binom{S}{2}} Q(e,\sigma(e)).
$$
By Lemma~\ref{lmm:KL-Bern} and~\eqref{eq:MI-radius},
we obtain 
\begin{equation}\label{eq:MI-ub-Gaussian}
I(\pi^*;G_1,G_2)\le \max_{\pi} D(\maP_{G_1,G_2|\pi}\Vert \maQ_{G_1,G_2})
\le \binom{m}{2}D(P\Vert Q)\le \frac{1}{2}\binom{m}{2}\log\pth{\frac{1}{1-\rho^2}}.
\end{equation}}

\section{Auxiliary results}

\begin{lemma}\label{lem: general case for eta}
    Recall that $\phi(\gamma) = (1+\gamma)\log(1+\gamma)-\gamma$ and $\eta,\gamma>0$. If $\eta \le \frac{\gamma}{4(1+\gamma)}$, then $(1+\gamma)(1-\eta)>1$
    and \begin{align}\label{eq: bound for Hgamma}
        \phi\qth{(1-\eta)(1+\gamma)-1}\ge \frac{1}{4} \phi(\gamma).
    \end{align}
\end{lemma}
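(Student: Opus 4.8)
The plan is to track the shifted signal $\gamma' \triangleq (1-\eta)(1+\gamma)-1$ and to compare $\phi(\gamma')$ with $\phi(\gamma)$ through a homogeneity-type lower bound for $\phi$. First I would dispatch the claim $(1+\gamma)(1-\eta)>1$ by a one-line computation: since $\eta \le \frac{\gamma}{4(1+\gamma)}$ and $1+\gamma>0$, we have $(1+\gamma)(1-\eta) \ge (1+\gamma)\bigl(1-\frac{\gamma}{4(1+\gamma)}\bigr) = (1+\gamma)-\frac{\gamma}{4} = 1+\frac{3\gamma}{4}$, which is strictly larger than $1$ because $\gamma>0$. In particular this shows $\gamma' \ge \frac{3\gamma}{4}>0$, so the argument of $\phi$ in \eqref{eq: bound for Hgamma} lies in the domain where $\phi$ behaves well.

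Next I would record two elementary analytic facts about $\phi$. Since $\phi'(x)=\log(1+x)$, the function $\phi$ is nondecreasing on $[0,\infty)$ and hence $\phi(x)\ge \phi(0)=0$ there. Moreover $\phi$ obeys the scaling bound $\phi(cx)\ge c^2\phi(x)$ for every $c\in(0,1]$ and every $x\ge 0$. To prove the latter I would set $\psi(x)\triangleq \phi(cx)-c^2\phi(x)$, note $\psi(0)=0$, and compute $\psi'(x)=c\log(1+cx)-c^2\log(1+x)=c\bigl[\log(1+cx)-c\log(1+x)\bigr]$; applying concavity of $\log$ to the convex combination $1+cx=c(1+x)+(1-c)\cdot 1$ gives $\log(1+cx)\ge c\log(1+x)$, so $\psi'\ge 0$ on $[0,\infty)$ and therefore $\psi\ge 0$.

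Finally I would chain these together. Monotonicity of $\phi$ together with $\gamma'\ge \frac{3\gamma}{4}$ gives $\phi(\gamma')\ge \phi\bigl(\frac{3\gamma}{4}\bigr)$, and the scaling bound with $c=\frac34$ gives $\phi\bigl(\frac34\gamma\bigr)\ge \frac{9}{16}\phi(\gamma)\ge \frac14\phi(\gamma)$, where the last inequality uses $\phi(\gamma)\ge 0$. Combining, $\phi\bigl[(1-\eta)(1+\gamma)-1\bigr]\ge \frac14\phi(\gamma)$, which is exactly \eqref{eq: bound for Hgamma}.

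I do not expect a real obstacle here; the only step needing a moment's care is the scaling inequality $\phi(cx)\ge c^2\phi(x)$, and even that collapses to the standard concavity estimate $\log(1+cx)\ge c\log(1+x)$. As an alternative to the $\psi$-monotonicity argument one could integrate $\phi'$ from $0$ to $x$ and insert the same concavity bound under the integral sign, but the derivative-sign argument is the most economical.
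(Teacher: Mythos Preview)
Your proof is correct, and it takes a genuinely different route from the paper's argument. The paper expands $\phi\bigl((1-\eta)(1+\gamma)-1\bigr)$ directly, splitting the logarithm as $\log(1+\gamma)+\log(1-\eta)$, and arrives at the lower bound $(1-\eta)\phi(\gamma)-\eta\gamma$; it then uses the inequality $\phi(\gamma)\ge \frac{\gamma^2}{2(1+\gamma)}$ together with $\eta\le \frac{\gamma}{4(1+\gamma)}$ to absorb the $-\eta\gamma$ term into $\frac12\phi(\gamma)$, concluding with $(\frac12-\eta)\phi(\gamma)\ge \frac14\phi(\gamma)$. Your approach instead first localizes the argument via $\gamma'\ge \frac{3\gamma}{4}$ and then invokes monotonicity of $\phi$ together with the clean scaling inequality $\phi(cx)\ge c^2\phi(x)$ for $c\in(0,1]$. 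This is arguably more conceptual: the scaling bound is a reusable statement about $\phi$, the proof is shorter, and as a byproduct you obtain the sharper constant $\frac{9}{16}$ in place of $\frac14$. The paper's expansion, on the other hand, tracks the dependence on $\eta$ more explicitly (yielding $(\frac12-\eta)\phi(\gamma)$), which could be useful if one ever needed a bound that degrades gracefully as $\eta$ grows, but for the stated lemma your argument is at least as strong.
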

\begin{proof}
We note that $(1+\gamma)(1-\eta)\ge 1+\gamma-\frac{\gamma}{4}>1$, and
\begin{align*}
    \phi\qth{(1-\eta)(1+\gamma)-1}&= (1+\gamma)(1-\eta)\log\qth{(1+\gamma)(1-\eta)}-\qth{(1+\gamma)(1-\eta)-1}\\
    &= (1-\eta)\qth{(1+\gamma)\log (1+\gamma)-\gamma}+(1+\gamma)(1-\eta)\log(1-\eta)+\eta\\
    &\ge  (1-\eta)\qth{(1+\gamma)\log (1+\gamma)-\gamma}+(1+\gamma)(-\eta)+\eta\\
    &=  (1-\eta)\qth{(1+\gamma)\log (1+\gamma)-\gamma}-\eta\gamma,
\end{align*}
where the last inequality is due to the fact that $(1-x)\log(1-x)+x\ge 0$ for any $0<x<1$ and $0<\eta \le \frac{\gamma}{4(1+\gamma)}<\frac{1}{4}$.
Since $\eta\le \frac{\gamma}{4(1+\gamma)}$ and $(1+\gamma)\log(1+\gamma)- \gamma\ge \frac{\gamma^2}{2(1+\gamma)}$, we obtain $\eta\gamma\le \frac{\gamma^2}{4(1+\gamma)}\le \frac{1}{2}\qth{(1+\gamma)\log(1+\gamma)-\gamma}$. Therefore,
\begin{align*}
    \phi\qth{(1-\eta)(1+\gamma)-1}&\ge (1-\eta)\qth{(1+\gamma)\log (1+\gamma)-\gamma}-\eta\gamma\\
    &\ge \pth{\frac{1}{2}-\eta} \phi(\gamma)\ge \frac{1}{4}\phi(\gamma),
\end{align*}
where the last inequality is because $0<\eta \le \frac{\gamma}{4(1+\gamma)}<\frac{1}{4}$.
\end{proof}

\begin{lemma}\label{lem: sum over delta m}
For any $m\ge 10$,
\begin{align*}
    \sum_{k=1}^{m-1} \exp\qth{-mh\pth{\frac{k}{m}}} \le \frac{4\log m+2}{m},
\end{align*}
where $h(x) = -x\log x-(1-x)\log(1-x)$ is the binary entropy function.
\end{lemma}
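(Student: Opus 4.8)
The plan is to isolate the two extreme terms $k\in\{1,m-1\}$ and bound the remaining ``bulk'' terms uniformly, using only two elementary facts about $h$: it is symmetric about $1/2$, i.e.\ $h(x)=h(1-x)$, and strictly concave on $(0,1)$ since $h''(x)=-\frac1x-\frac1{1-x}<0$. I will also use the pointwise inequality $h(x)\ge x\log(1/x)$, which merely drops the nonnegative term $(1-x)\log\frac1{1-x}$ from the definition of $h$.

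First, for the extreme terms: since $h(1/m)=h\big(\tfrac{m-1}{m}\big)\ge \tfrac1m\log m$, we get $e^{-mh(1/m)}\le 1/m$, so the terms $k=1$ and $k=m-1$ together contribute at most $2/m$. Next, for the bulk $2\le k\le m-2$: the point $k/m$ lies in $[2/m,\,1-2/m]$, and by concavity $h$ attains its minimum over a closed interval at an endpoint; by the symmetry $h(x)=h(1-x)$ both endpoints give the value $h(2/m)$. Hence $h(k/m)\ge h(2/m)\ge \tfrac2m\log(m/2)$ for every such $k$, so $e^{-mh(k/m)}\le e^{-2\log(m/2)}=4/m^2$. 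There are fewer than $m$ indices in this range (in fact $m-3$), so their total contribution is at most $m\cdot\frac{4}{m^2}=\frac4m$.

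Combining the two pieces yields $\sum_{k=1}^{m-1}e^{-mh(k/m)}\le \frac{6}{m}$, and since $m\ge 10>e$ we have $6\le 4\log m+2$, which gives the stated bound $\frac{4\log m+2}{m}$ with substantial slack (indeed $6/m$ already suffices).

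There is no serious obstacle here; the only step requiring a little care is the uniform lower bound $h(k/m)\ge h(2/m)$ across the entire bulk range, which is exactly where concavity and the reflection symmetry $h(x)=h(1-x)$ enter. An alternative, equally short route would replace the entropy bound by the binomial inequality $\binom{m}{k}\le e^{mh(k/m)}$ and estimate $\sum_{k=1}^{m-1}\binom{m}{k}^{-1}$ directly, using $\binom{m}{k}\ge\binom{m}{2}$ for $2\le k\le m-2$; this again gives an $O(1/m)$ bound. I would keep the entropy argument above since it is self-contained and avoids invoking an external binomial estimate.
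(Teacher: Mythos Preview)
Your proof is correct. The decomposition into the two extreme terms $k\in\{1,m-1\}$ and the bulk $2\le k\le m-2$ is clean, and the concavity argument that forces $h(k/m)\ge h(2/m)$ on the bulk is valid; the resulting bound $6/m$ is in fact sharper than the stated $\frac{4\log m+2}{m}$.

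The paper takes a different route. After applying the same two ingredients (symmetry $h(x)=h(1-x)$ and $h(x)\ge x\log(1/x)$), it reduces to $2\sum_{1\le k\le m/2}e^{-k\log(m/k)}$ and then splits at $k=2\log m$: for small $k$ it uses $k\log(m/k)\ge\log m$ (which requires $m\ge 10$ to hold all the way up to $k=2\log m$), and for large $k$ it uses $\log(m/k)\ge\log 2$ and sums the geometric tail. This yields exactly the bound $\frac{4\log m+2}{m}$, so the logarithmic factor in the statement is an artifact of the paper's splitting point, not of the problem. Your concavity argument avoids the split entirely, needs the hypothesis $m\ge10$ only in the trivial final step $6\le 4\log m+2$, and delivers the cleaner $O(1/m)$. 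The paper's approach has the minor advantage of not invoking concavity explicitly, but yours is shorter and more transparent; the alternative you sketch via $\binom{m}{k}^{-1}\le\binom{m}{2}^{-1}$ for $2\le k\le m-2$ would work equally well.
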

\begin{proof}
    We note that \begin{align*}
    \sum_{k=1 }^{m-1}  \exp\qth{-mh\pth{\frac{k}{m}}} 
    &\overset{\mathrm{(a)}}{\le} 2\sum_{1\le k\le \frac{m}{2}} \exp\qth{-k\log\pth{\frac{m}{k}}}\\
    &\overset{\mathrm{(b)}}{\le} 2\sum_{1\le k\le 2\log m} \exp\qth{-k\log\pth{\frac{m}{k}}}+2\sum_{2\log m+1\le k\le \frac{m}{2}} 2^{-k}\\
    &\overset{\mathrm{(c)}}{\le} 2\cdot \exp\pth{-\log m}\cdot(2 \log m)+2\cdot 2^{-2\log m}\\&\overset{\mathrm{(d)}}{\le} \frac{4\log m+2}{m},
\end{align*}
where $\mathrm{(a)}$ is because $h(x) = h(1-x)$ and $h(x)\ge -x\log x$; $\mathrm{(b)}$ is because $\log\pth{\frac{m}{k}}\ge \log 2$ when $k\le \frac{m}{2}$; $\mathrm{(c)}$ is because $k\log\pth{\frac{m}{k}}\ge \log m$ for $1\le k\le 2\log m$ when $m\ge 10$; $\mathrm{(d)}$ is because $2\cdot 2^{-2\log m}\le \frac{2}{m}$.
\end{proof}

\begin{lemma}\label{lmm:KL-Bern}
For $P_1=\multibern(p,p,\rho)$ and $Q_1= \multibern(p,p,0)$, we have  
\begin{align*}
    D(P_1\Vert Q_1)\le 25 p^2\phi(\gamma).
\end{align*}
For $P_2= \gaussianrho$ and $Q_2= \mathcal{N}\pth{\begin{bmatrix} 0 \\ 0 \end{bmatrix}, \begin{pmatrix} 1 & 0\\ 0 & 1 \end{pmatrix}}$, we have
\begin{align*}
    D(P_2\Vert Q_2) = \frac{1}{2}\log\pth{\frac{1}{1-\rho^2}}.
\end{align*}
\end{lemma}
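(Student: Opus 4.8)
The plan is to treat the two parts separately. The Gaussian bound is a direct instance of the closed form for the relative entropy between centered bivariate normals: for $\maN(0,\Sigma_1)$ and $\maN(0,\Sigma_0)$ one has $D = \tfrac12\pth{\tr(\Sigma_0^{-1}\Sigma_1) - 2 + \log(\det\Sigma_0/\det\Sigma_1)}$. Taking $\Sigma_0 = I_2$ and $\Sigma_1$ the $2\times 2$ covariance matrix with unit diagonal and off-diagonal entry $\rho$ (i.e.\ the covariance in $P_2$), we have $\tr\Sigma_1 = 2$ and $\det\Sigma_1 = 1-\rho^2$, so $D(P_2\|Q_2) = \tfrac12\log\pth{1/(1-\rho^2)}$, exactly as stated.

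For the Bernoulli part, I would first expand $D(P_1\|Q_1)$ over the four atoms and reparametrize. Writing $\bar p\triangleq 1-p$, recalling $\gamma = \rho\bar p/p$, and introducing the companion parameter $\gamma'\triangleq \rho p/\bar p$, one has $p_{11} = p^2(1+\gamma)$, $p_{00} = \bar p^2(1+\gamma')$, $p_{10} = p_{01} = (1-\rho)p\bar p$, $\gamma\gamma' = \rho^2$, and $(1+\gamma)\log(1+\gamma) = \phi(\gamma)+\gamma$. A short computation then shows that the linear pieces $p^2\gamma = p\rho\bar p$ and $\bar p^2\gamma' = p\rho\bar p$ coming from the $(1,1)$- and $(0,0)$-atoms combine with the $(1,0),(0,1)$ contribution $2(1-\rho)p\bar p\log(1-\rho)$ to produce the exact identity \[ D(P_1\|Q_1) = p^2\phi(\gamma) + \bar p^2\phi(\gamma') + 2p\bar p\, g(\rho),\qquad g(\rho)\triangleq \rho + (1-\rho)\log(1-\rho). \] All three summands are nonnegative: $\phi\ge 0$ by definition, while $g(0)=0$ and $g'(\rho) = -\log(1-\rho)\ge 0$. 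It therefore suffices to dominate each by a constant multiple of $p^2\phi(\gamma)$.

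The three estimates are elementary. First, the leading term is already $p^2\phi(\gamma)$. Second, since $p\le\tfrac12$ we get $\gamma'/\gamma = (p/\bar p)^2\le 1$; moreover $\phi(x)/x$ is nondecreasing, because $\phi$ is convex with $\phi(0)=0$ (equivalently $x\phi'(x)-\phi(x) = x-\log(1+x)\ge0$), hence $\phi(\gamma')\le(\gamma'/\gamma)\phi(\gamma)$ and so $\bar p^2\phi(\gamma')\le p^2\phi(\gamma)$. Third, for the cross term, $g(\rho)\le\rho^2$ because $(1-\rho)\log(1-\rho)\le -\rho(1-\rho)$; combining with the identity $\rho^2 = \gamma^2 p^2/\bar p^2$ (from $\rho = \gamma p/\bar p$) gives $2p\bar p\, g(\rho)\le 2\gamma^2 p^3/\bar p$, and then the standard bound $\phi(\gamma)\ge \gamma^2/(2(1+\gamma))$ reduces the target $2\gamma^2 p^3/\bar p\le C p^2\phi(\gamma)$ to $4p(1+\gamma)\le C\bar p$; since $p(1+\gamma) = p+\rho\bar p < 1$ and $\bar p\ge\tfrac12$, this holds with $C = 8$. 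Summing, $D(P_1\|Q_1)\le (1+1+8)\,p^2\phi(\gamma) = 10\,p^2\phi(\gamma)\le 25\,p^2\phi(\gamma)$.

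The crux is not any individual estimate but the need to work with the exact identity above rather than a crude divergence bound. Indeed the $\chi^2$-divergence evaluates to $\chi^2(P_1\|Q_1) = \rho^2$, which fails to be $O(p^2\phi(\gamma))$ in the strong-signal regime $\gamma\to\infty$ (there $p^2\phi(\gamma)\asymp p\rho\bar p\log\gamma$, which can be much smaller than $\rho^2$), so the cancellation of the linear terms in $D$ is essential. Once that identity is in hand, everything is controlled through the single substitution $\rho = \gamma p/\bar p$ together with the two one-line inequalities $g(\rho)\le\rho^2$ and $\phi(\gamma)\ge\gamma^2/(2(1+\gamma))$.
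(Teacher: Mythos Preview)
Your proof is correct. For the Gaussian part both you and the paper compute the divergence directly; there is no meaningful difference. For the Bernoulli part the two arguments diverge in organization. The paper upper-bounds the $(0,0)$- and $(1,0)$-terms in the expansion of $D(P_1\|Q_1)$ using $\log(1+x)\le x$, arriving at the inequality $D(P_1\|Q_1)\le p^2\phi(\gamma)+\rho^2[2p(1-p)+p^2]$, and then controls the residual $\rho^2$-term via a case split on $\gamma<3$ versus $\gamma\ge 3$ (using $\phi(\gamma)\ge \gamma^2/(2(1+\gamma))$ for small $\gamma$ and $\phi(\gamma)\ge (\log 4-1)\gamma$ for large $\gamma$). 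You instead derive the exact symmetric identity $D(P_1\|Q_1)=p^2\phi(\gamma)+\bar p^2\phi(\gamma')+2p\bar p\,g(\rho)$ and bound each piece separately; the monotonicity of $\phi(x)/x$ handles the $\bar p^2\phi(\gamma')$ term cleanly and the same lower bound $\phi(\gamma)\ge\gamma^2/(2(1+\gamma))$ absorbs the $g(\rho)\le\rho^2$ term without any case distinction. Your route is a bit tidier (no case split, explicit symmetry, and a sharper constant $10$ in place of $25$), while the paper's route is slightly more hands-on but equally short.
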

\begin{proof}
Recall $p_{ab}$ for any $a,b\in \{0,1\}$ defined in~\eqref{eq:def_of_pab}. Then, we have
    \begin{align*}
    D(P_1\Vert Q_1) =&~ \sum_{\{a,b\}\in \{0,1\}} p_{ab}\log\qth{\frac{p_{ab}}{p^{a+b}(1-p)^{2-a-b}}}\\ =&~ \qth{p^2+\rho p(1-p)}\log\qth{1+\frac{\rho (1-p)}{p}}+2p(1-p)(1-\rho)\log(1-\rho)\\&~+\qth{(1-p)^2+\rho p(1-p)}\log\pth{1+\frac{\rho p}{1-p}}\\
    \overset{\mathrm{(a)}}{\le}&~ \qth{p^2+\rho p(1-p)}\log\qth{1+\frac{\rho (1-p)}{p}}+2p(1-p)(1-\rho)\cdot (-\rho)\\&~+\qth{(1-p)^2+\rho p(1-p)}\cdot \frac{\rho p}{1-p}\\
    \overset{\mathrm{(b)}}{=}&~p^2\qth{(1+\gamma)\log(1+\gamma)-\gamma}+\rho^2\qth{2p(1-p)+p^2},
\end{align*}
where $\mathrm{(a)}$ uses $\log(1+x)\le x$ for any $x>-1$; $\mathrm{(b)}$ follows from $\gamma = \frac{\rho(1-p)}{p}$.
Since $\log(1+x)\ge \frac{x}{x+1}+\frac{x^2}{2(x+1)^2}$ for any $x\ge 0$, we obtain that $p^2\qth{(1+\gamma)\log(1+\gamma)-\gamma}\ge \frac{p^2\gamma^2}{2(\gamma+1)}$.
When $\gamma<3$, since $\frac{p^2\gamma^2}{2(\gamma+1)}\ge \frac{p^2\gamma^2}{8} = \frac{\rho^2\cdot 3(1-p)^2}{24}\ge \frac{\rho^2\qth{2p(1-p)+p^2}}{24}$ for $0<p\le \frac{1}{2}$, we obtain that
\begin{align*}
    D(P_1\Vert Q_1)\le p^2\qth{(1+\gamma)\log(1+\gamma)-\gamma}+\rho^2\qth{2p(1-p)+p^2}\le 25p^2\qth{(1+\gamma)\log(1+\gamma)-\gamma}.
\end{align*}
When $\gamma\ge 3$, since  $\rho^2\qth{2p(1-p)+p^2}\le 3\rho p(1-p)$ and $(\log 4-1)\rho p(1-p)=p^2\gamma(\log 4-1)\le p^2\qth{(1+\gamma)\log(1+\gamma)-\gamma} $, we obtain
\begin{align*}
   D(P_1\Vert Q_1)& \le p^2\qth{(1+\gamma)\log(1+\gamma)-\gamma}+\rho^2\qth{2p(1-p)+p^2}\\&\le \pth{\frac{3}{\log 4-1}+1}p^2\qth{(1+\gamma)\log(1+\gamma)-\gamma}\\
   &\le  25p^2\qth{(1+\gamma)\log(1+\gamma)-\gamma}.
\end{align*}
Therefore, we get $D(P_1\Vert Q_1)\le 25 p^2 \phi(\gamma)$.

We denote $P_2(a,b)$ and $Q_2(a,b)$ the probability density function under $P_2$ and $Q_2$, respectively. Then, the KL-divergence between $P_2$ and $Q_2$ is given by \begin{align*}
    D(P_2\Vert Q_2) &= \iint P_2(a,b) \log\pth{\frac{P_2(a,b)}{Q_2(a,b)}}\,\mathrm{d}a\mathrm{d}b\\
    & = \iint P_2(a,b)\qth{\frac{1}{2}\log\pth{\frac{1}{1-\rho^2}}+\frac{\rho ab}{1-\rho^2}-\frac{\rho^2(a^2+b^2)}{2(1-\rho^2)}}
    \,\mathrm{d}a\mathrm{d}b\\ &=\frac{1}{2}\log\pth{\frac{1}{1-\rho^2}}+\frac{\rho^2}{1-\rho^2}-\frac{2\rho^2}{2(1-\rho^2)} = \frac{1}{2}\log\pth{\frac{1}{1-\rho^2}}.\qedhere
\end{align*}
\end{proof}

\begin{lemma}[Chernoff's inequality for Binomials]\label{lem:Chernoff_bound}
    Suppose $\xi\sim \Bin(n,p)$, denote $\mu = np$, for any $\delta>0$,
    \begin{align}
        \prob{\xi\ge (1+\delta)\mu}&\le \exp\sth{-\mu\qth{(1+\delta)\log(1+\delta)-\delta}}\label{eq:chernoff_bound_right_log},\\\prob{\xi\ge (1+\delta)\mu}&\le \exp\pth{-\frac{\delta\mu}{2+\delta}}\label{eq:chernoff_bound_right_nolog}.
    \end{align}
For any $0<\delta<1$, we have   \begin{equation}\label{eq:chernoff_bound_left}
        \prob{\xi\le (1-\delta)\mu}\le \exp\pth{-\frac{\delta^2\mu}{2}}.
    \end{equation}
\end{lemma}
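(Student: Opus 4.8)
The plan is to apply the standard exponential-moment (Chernoff) method. Since $\xi\sim\Bin(n,p)$ is a sum of $n$ i.i.d.\ $\Bernoulli(p)$ random variables, its moment generating function factorizes as $\expect{e^{t\xi}}=(1-p+pe^t)^n$ for every $t\in\reals$, and the elementary bound $1+x\le e^x$ applied with $x=p(e^t-1)$ yields $\expect{e^{t\xi}}\le\exp\pth{\mu(e^t-1)}$ with $\mu=np$. This single estimate is what I will feed into Markov's inequality for both tails.

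For the upper tail, I would fix $\delta>0$ and $t>0$ and apply Markov's inequality to $e^{t\xi}$:
\[
\prob{\xi\ge(1+\delta)\mu}\le e^{-t(1+\delta)\mu}\,\expect{e^{t\xi}}\le\exp\pth{\mu(e^t-1)-t(1+\delta)\mu}.
\]
The exponent is convex in $t$ and minimized at $t=\log(1+\delta)>0$; substituting this value collapses the right-hand side to $\exp\sth{-\mu\qth{(1+\delta)\log(1+\delta)-\delta}}$, which is \eqref{eq:chernoff_bound_right_log}. To deduce \eqref{eq:chernoff_bound_right_nolog} it then suffices to invoke the scalar inequality $(1+\delta)\log(1+\delta)-\delta\ge\frac{\delta^2}{2+\delta}$ for $\delta\ge 0$, which can be verified by writing the left-hand side as $\int_0^\delta\log(1+s)\,ds$ together with $\log(1+s)\ge\frac{2s}{2+s}$, or by a short comparison of derivatives.

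For the lower tail, I would fix $0<\delta<1$ and $t>0$ and instead apply Markov's inequality to $e^{-t\xi}$:
\[
\prob{\xi\le(1-\delta)\mu}=\prob{e^{-t\xi}\ge e^{-t(1-\delta)\mu}}\le\exp\pth{\mu(e^{-t}-1)+t(1-\delta)\mu}.
\]
Optimizing the exponent over $t>0$ gives $t=-\log(1-\delta)>0$ and the bound $\exp\sth{-\mu\qth{(1-\delta)\log(1-\delta)+\delta}}$. Finally, \eqref{eq:chernoff_bound_left} follows from $(1-\delta)\log(1-\delta)+\delta\ge\frac{\delta^2}{2}$ on $(0,1)$, which is immediate once one checks that $h(\delta)\triangleq(1-\delta)\log(1-\delta)+\delta-\frac{\delta^2}{2}$ satisfies $h(0)=h'(0)=0$ and $h''(\delta)=\frac{\delta}{1-\delta}\ge 0$.

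The argument is entirely routine and I do not anticipate any genuine obstacle; the only steps requiring a modicum of care are the two scalar estimates $(1+\delta)\log(1+\delta)-\delta\ge\frac{\delta^2}{2+\delta}$ and $(1-\delta)\log(1-\delta)+\delta\ge\frac{\delta^2}{2}$, each of which is a one-line calculus exercise, together with the verification that the optimizing $t$ is positive in each regime (which uses $\delta>0$ for the upper tail and $\delta\in(0,1)$ for the lower tail).
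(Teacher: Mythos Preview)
Your proposal is correct and takes essentially the same approach as the paper: the paper simply cites Theorems~4.4 and~4.5 of Mitzenmacher--Upfal for \eqref{eq:chernoff_bound_right_log} and \eqref{eq:chernoff_bound_left} and then invokes the same scalar inequality $(1+\delta)\log(1+\delta)-\delta\ge\frac{\delta^2}{2+\delta}$ to pass to \eqref{eq:chernoff_bound_right_nolog}, whereas you spell out the underlying Chernoff argument in full. (Note that both your derivation and the paper's yield the exponent $-\frac{\delta^2\mu}{2+\delta}$ rather than $-\frac{\delta\mu}{2+\delta}$; the latter is a typo in the displayed statement.)
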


\begin{proof}
    By Theorems 4.4 and 4.5 in \cite{mitzenmacher_probability_2005} we have \eqref{eq:chernoff_bound_right_log} and \eqref{eq:chernoff_bound_left}. Since $(1+\delta)\log(1+\delta)-\delta\ge \frac{\delta^2}{2+\delta}$, we obtain \eqref{eq:chernoff_bound_right_nolog} from \eqref{eq:chernoff_bound_right_log}.
\end{proof}

\begin{lemma}[Hanson-Wright inequality]\label{lem:Hanson-Wright}
    Let $X,Y\in \mathbb{R}^n$ be standard Gaussian vectors such that the pairs $(X_i,Y_i)\sim \gaussianrho$ are independent for $i=1,\cdots,n$. Let $M_0\in \mathbb{R}^{n\times n}$ be any deterministic matrix. There exists some universal constant $c_0>0$ such that, \begin{align*}
        \prob{\left| X^\top M_0 Y -\rho \mathrm{Tr}(M_0)\right|\ge c_0\pth{\Vert M_0\Vert_F \sqrt{\log(1/\delta)}\vee\Vert M_0\Vert_2\log(1/\delta)}}\le \delta.
    \end{align*} 
\end{lemma}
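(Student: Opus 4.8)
The plan is to reduce the bilinear form $X^\top M_0 Y$ to a quadratic form in a Gaussian vector with independent coordinates and then invoke the classical Hanson--Wright inequality. First I would record that $\expect{X^\top M_0 Y}=\sum_{i,j}(M_0)_{ij}\expect{X_iY_j}=\rho\,\tr(M_0)$, since $\expect{X_iY_j}=\rho\indc{i=j}$, so the quantity of interest is a centered bilinear form. To decorrelate, write $Y=\rho X+\sqrt{1-\rho^2}\,W$, where $W\in\reals^n$ has i.i.d.\ standard Gaussian entries and is independent of $X$; this preserves the joint law of $(X,Y)$. Then
\[
X^\top M_0 Y=\rho\, X^\top M_0 X+\sqrt{1-\rho^2}\, X^\top M_0 W=Z^\top \tilde M Z,
\]
where $Z\triangleq(X^\top,W^\top)^\top\in\reals^{2n}$ now has i.i.d.\ standard Gaussian coordinates and $\tilde M$ is the symmetric $2n\times 2n$ matrix
\[
\tilde M\triangleq\frac{1}{2}\begin{bmatrix}\rho(M_0+M_0^\top) & \sqrt{1-\rho^2}\, M_0\\ \sqrt{1-\rho^2}\, M_0^\top & 0\end{bmatrix}.
\]

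Next I would check the two bookkeeping facts needed to feed $\tilde M$ into Hanson--Wright: $\tr(\tilde M)=\rho\,\tr(M_0)=\expect{X^\top M_0 Y}$, and, using $\norm{M_0+M_0^\top}_F\le 2\norm{M_0}_F$, $\norm{(M_0+M_0^\top)/2}_2\le\norm{M_0}_2$, the elementary fact that an off-diagonal block matrix with blocks $A$ and $A^\top$ has operator norm $\norm{A}_2$, and $0<\rho<1$, one gets $\norm{\tilde M}_F\le\sqrt{2}\,\norm{M_0}_F$ and $\norm{\tilde M}_2\le\frac{3}{2}\norm{M_0}_2$ with constants independent of $\rho$. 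Applying the classical Hanson--Wright inequality for quadratic forms of vectors with i.i.d.\ sub-Gaussian (here standard Gaussian) coordinates to $Z^\top\tilde M Z$ then yields universal constants $c,C>0$ such that, for all $t\ge 0$,
\[
\prob{\big|X^\top M_0 Y-\rho\,\tr(M_0)\big|\ge t}\le 2\exp\pth{-c\min\pth{\frac{t^2}{C\norm{M_0}_F^2},\frac{t}{C\norm{M_0}_2}}}.
\]
Finally, choosing $t=c_0\pth{\norm{M_0}_F\sqrt{\log(1/\delta)}\vee\norm{M_0}_2\log(1/\delta)}$ and performing a short case split on which term of the maximum is active shows that for $c_0$ a large enough universal constant the right-hand side is at most $\delta$, which is the claim.

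I do not anticipate a serious obstacle: the entire content is the decorrelation step $Y=\rho X+\sqrt{1-\rho^2}\,W$, which turns a bilinear form in correlated Gaussians into a quadratic form in i.i.d.\ Gaussians, after which the statement follows from a black-box application of the standard Hanson--Wright bound. The only mild care required is (i) verifying that the norm bounds on $\tilde M$ hold with constants not depending on $\rho$ (immediate since $|\rho|\le 1$), and (ii) the routine translation of the two-sided $\min(t^2/\sigma_F^2,\,t/\sigma_2)$ tail into the stated $\delta$-form, including absorbing the harmless leading factor $2$ into $c_0$ (the regime $\delta$ close to $1$ being trivial since the probability is at most $1$).
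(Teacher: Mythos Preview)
Your proof is correct but uses a different decomposition from the paper. The paper instead applies the polarization identity
\[
X^\top M_0 Y=\tfrac{1}{4}(X+Y)^\top M_0(X+Y)-\tfrac{1}{4}(X-Y)^\top M_0(X-Y),
\]
observes that $X+Y$ and $X-Y$ are each Gaussian vectors with i.i.d.\ coordinates (of variance $2\pm 2\rho$), applies the classical Hanson--Wright inequality to each quadratic form separately with threshold $\delta/2$, and finishes by a union bound. Your route---decorrelating via $Y=\rho X+\sqrt{1-\rho^2}\,W$ and packaging the result as a single quadratic form $Z^\top\tilde M Z$ in the stacked i.i.d.\ Gaussian vector $Z=(X,W)$---is equally valid and arguably slightly more systematic, at the minor cost of checking block-matrix norm bounds on $\tilde M$. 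The paper's polarization approach avoids the block-matrix bookkeeping but requires two separate applications of Hanson--Wright plus a union bound; both arguments are of comparable length and yield the same conclusion with universal constants independent of $\rho$.
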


\begin{proof}
    Note that $X^{\top} M_0Y = \frac{1}{4}(X+Y)^\top M_0(X+Y)-\frac{1}{4}(X-Y)^\top M_0(X-Y)$ and \begin{align*}
        \expect{(X+Y)^\top M_0(X+Y)} = (2+2\rho)\mathrm{Tr}(M_0),\expect{(X-Y)^\top M_0(X-Y)} = (2-2\rho)\mathrm{Tr}(M_0).
    \end{align*}
    By Hanson-Wright inequality \cite{hanson1971bound}, there exists some universal constant $c_0$ such that \begin{align*}
        &\prob{\left| \frac{1}{4}(X+Y)^\top M_0(X+Y) - \frac{2+2\rho}{4}\mathrm{Tr}(M_0)\right| \ge \frac{c_0}{2}\pth{\Vert M_0\Vert_F \sqrt{\log(1/\delta)}\vee\Vert M_0\Vert_2\log(1/\delta)}}\le \frac{\delta}{2},\\&\prob{\left| \frac{1}{4}(X-Y)^\top M_0(X-Y) - \frac{2-2\rho}{4}\mathrm{Tr}(M_0)\right| \ge \frac{c_0}{2}\pth{\Vert M_0\Vert_F \sqrt{\log(1/\delta)}\vee\Vert M_0\Vert_2\log(1/\delta)}}\le \frac{\delta}{2}
    \end{align*} for any $\delta>0$.
    Consequently, \begin{align*}
        &~\prob{\left| X^\top M_0 Y -\rho \mathrm{Tr}(M_0)\right|\ge c_0\pth{\Vert M_0\Vert_F \sqrt{\log(1/\delta)}\vee\Vert M_0\Vert_2\log(1/\delta)}}\\ \le &~\prob{\left| \frac{1}{4}(X+Y)^\top M_0(X+Y) - \frac{2+2\rho}{4}\mathrm{Tr}(M_0)\right| \ge \frac{c_0}{2}\pth{\Vert M_0\Vert_F \sqrt{\log(1/\delta)}\vee\Vert M_0\Vert_2\log(1/\delta)}}\\&+ \prob{\left| \frac{1}{4}(X-Y)^\top M_0(X-Y) - \frac{2-2\rho}{4}\mathrm{Tr}(M_0)\right| \ge \frac{c_0}{2}\pth{\Vert M_0\Vert_F \sqrt{\log(1/\delta)}\vee\Vert M_0\Vert_2\log(1/\delta)}}\\
        \le&~ \delta.\qedhere
    \end{align*}
\end{proof}

\begin{lemma}[Chernoff's inequality for Chi-squared distribution]\label{lem:chisquare}
    Suppose $\xi$ follows the chi-squared distribution with $n$ degrees of freedom. Then, for any  $\delta>0$, \begin{align}\label{eq:concentration_for_chisquare}
        \prob{\xi>(1+\delta) n}\le  \exp\pth{-\frac{n}{2}\pth{\delta-\log\pth{1+\delta}}}.
    \end{align}
\end{lemma}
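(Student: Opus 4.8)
The plan is to prove this via the standard Chernoff (exponential Markov) argument applied to the moment generating function of the chi-squared distribution. First I would recall that for $\xi\sim\chi^2_n$ and any $t<\frac{1}{2}$, the moment generating function is $\expect{e^{t\xi}}=(1-2t)^{-n/2}$; this follows from the fact that $\xi$ is a sum of $n$ i.i.d.\ squares of standard normals, each contributing a factor $(1-2t)^{-1/2}$ since $\expect{e^{tZ^2}}=(1-2t)^{-1/2}$ for $Z\sim\maN(0,1)$ and $t<\frac12$.

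Next, for any $t\in\pth{0,\frac{1}{2}}$, Markov's inequality applied to $e^{t\xi}$ gives
\[
\prob{\xi>(1+\delta)n}\le e^{-t(1+\delta)n}\expect{e^{t\xi}}=\exp\pth{-t(1+\delta)n-\frac{n}{2}\log(1-2t)}.
\]
I would then optimize the exponent over $t$. Writing $g(t)=-t(1+\delta)-\frac{1}{2}\log(1-2t)$, we have $g'(t)=-(1+\delta)+\frac{1}{1-2t}$, which vanishes at $t^*=\frac{\delta}{2(1+\delta)}$; since $\delta>0$ this lies in $\pth{0,\frac12}$, so it is an admissible choice (and $g$ is convex, so it is the minimizer). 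Substituting $1-2t^*=\frac{1}{1+\delta}$ yields $g(t^*)=-\frac{\delta}{2}+\frac{1}{2}\log(1+\delta)=-\frac{1}{2}\pth{\delta-\log(1+\delta)}$, and hence
\[
\prob{\xi>(1+\delta)n}\le\exp\pth{-\frac{n}{2}\pth{\delta-\log(1+\delta)}},
\]
which is exactly the claimed bound.

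There is no real obstacle here; the only points requiring a line of justification are the validity of the MGF formula (finite precisely for $t<\frac12$) and the check that the optimizing $t^*$ falls in the admissible range, both of which are immediate. For completeness one may also note that $\delta-\log(1+\delta)>0$ for all $\delta>0$, so the right-hand side is a genuine sub-exponential tail bound; this can be cited or verified in one line since the function vanishes at $\delta=0$ with nonnegative derivative $\frac{\delta}{1+\delta}$.
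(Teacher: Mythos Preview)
Your proof is correct and complete. The paper's own proof is simply a one-line citation to \cite[Theorem~1]{ghosh2021exponential}, whereas you supply the underlying self-contained Chernoff/MGF argument; your derivation is exactly the standard route to this bound and is what that reference establishes, so the approaches are essentially the same in substance, with yours being more explicit.
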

\begin{proof}
    The result follows from \cite[Theorem 1]{ghosh2021exponential}.
\end{proof}
\bibliographystyle{alpha}
\bibliography{main}

\end{document}